\newtheorem{theorem}{Theorem}[section]
\newtheorem{corollary}[theorem]{Corollary}
\newtheorem{definition}[theorem]{Definition}
\newtheorem{lemma}[theorem]{Lemma}
\newtheorem*{theorem*}{Theorem}
\newtheorem*{corollary*}{Corollary}
\newtheorem*{lemma*}{Lemma}
\newtheorem*{observation*}{Observation}
    \newcommand{\Xomit}[1]{}
\newcommand{\R}{Randomized }
\newcommand{\A}{Arbitrary }
\newcommand{\I}{Integer }
\newcommand{\D}{Deterministic }
\newcommand{\DI}{Directed \& Undirected }
    \newcommand{\congest}{{\sc Congest}}
\newcommand{\Tau}{H}
\newcommand{\hide}[1]{}
\newcommand{\centers}{blocker set }
\newcommand{\boi}{\begin{enumerate}}
\newcommand{\eoi}{\end{enumerate}}
\newcommand{\bii}{\begin{itemize}}
\newcommand{\eii}{\end{itemize}}
\begin{document}

\title{Faster Deterministic All Pairs Shortest Paths in {\sc Congest} Model}

\author{Udit Agarwal $^{\star}$ and Vijaya Ramachandran\thanks{University of Texas, Austin TX 78712. Email: {\tt udit@utexas.edu, vlr@cs.utexas.edu}. }}

\maketitle

\begin{abstract}

 We present a new deterministic algorithm for distributed weighted 
 all pairs shortest paths (APSP)
 in both undirected and directed graphs. Our algorithm runs in $\tilde{O}(n^{4/3})$ rounds in the {\sc Congest} models on graphs with arbitrary edge weights, and it improves on the previous $\tilde{O}(n^{3/2})$ bound of Agarwal et al.~\cite{ARKP18}. The main components of our new algorithm are a new faster technique for constructing blocker set deterministically and a new pipelined method for deterministically propagating distance values from source nodes 
 to the blocker set nodes in the network. Both of these techniques have potential applications to other distributed algorithms.

\vspace{.05in}

Our new deterministic algorithm for computing blocker set adapts the NC approximate 
hypergraph set cover algorithm 
in~\cite{BRS94} to the distributed construction of a blocker set. It follows the two-step process of first designing a randomized algorithm that uses only pairwise independence, and then derandomizes this algorithm
using a sample space of linear size.
This algorithm runs in almost the same number of rounds as the initial step in our APSP algorithm that computes $h$-hops shortest paths.  This result significantly improves on the deterministic blocker set algorithms 
in~\cite{ARKP18, AR19} by removing an additional $n\cdot |Q|$ term in the round bound, where $Q$ is the blocker set. 

\vspace{.05in}

The other new component in our APSP algorithm is a deterministic pipelined approach to propagate distance values from source nodes 
to blocker nodes. We use a simple natural round-robin method for this step, and we show using a suitable progress measure that it achieve the $\tilde{O}(n^{4/3})$ bound on the number of rounds. It appears that the standard deterministic methods for efficiently broadcasting multiple values, and for sending  or receiving messages using the routing schedule in an undirected APSP 
algorithm~\cite{HPDG+19,LSP19} do not apply to this setting.
\end{abstract}

~
~

\section{Introduction}	\label{sec:intro}

We study the computation of all pairs shortest path (APSP) in the widely-used \textsc{Congest} model of distributed computing
(see, e.g., \cite{ARKP18,Elkin17,HNS17,LP13}).
In the {\sc Congest}  model (described in Section~\ref{sec:congest}), the 
input is a directed (or undirected) graph $G=(V,E)$ and the distributed computation occurs at the nodes in this graph. The
output of the APSP problem is to compute at each node $v$, the shortest path distances  from every 
source 
\footnote{We will refer to a vertex as a source if we compute shortest paths from that vertex.}
node to $v$
in the network.
We assume an  arbitrary non-negative weight on each edge,
and we let $|V| =n$. In this paper we consider the computation of exact (and not approximate) shortest paths.

\noindent
{\bf Overview of our contributions.}
In this paper we present a
$\tilde{O}(n^{4/3})$ round deterministic algorithm for the weighted APSP problem.
Table~\ref{table} lists earlier results for this problem~\cite{Elkin17,HNS17,ARKP18,AR19,BN19}.
All of these results as well as our new result can handle zero weight edges, and these algorithms are qualitatively 
different from algorithms for unweighted APSP.

Our  algorithm follows the general 3-phase strategy initiated by Ullman and Yannakakis~\cite{UY91} for parallel computation of path problems in directed graphs:

\begin{enumerate}
\item  Compute $h$-hop shortest paths for each source for a suitable value of $h$.
(An $h$-hop path is a path that contains at most $h$ edges.)

\item  Find a small {\it blocker set} $Q$ that intersects all $h$-hop
paths computed in Step 1. (With randomization, this step is very simple: 
a random sample of the vertices of size $O ((n/h) \cdot  \log n)$ satisfies this property w.h.p. in $n$.)

\item Compute shortest paths between all pairs of vertices within $Q$. Then, use this information and the $h$-hop trees from Step 1 in a suitable algorithm to compute the APSP output at each node in $V$.
\end{enumerate}

{\sc Congest} directed APSP algorithms that fall in this framework include the randomized algorithm in Huang et al.~\cite{HNS17}  
that runs in $\tilde{O}(n^{5/4})$
rounds
 for polynomial integer edge-weights, the deterministic algorithm in Agarwal et al.~\cite{ARKP18} that runs in $\tilde{O}(n^{3/2})$
rounds
  for arbitrary edge-weights, and the deterministic algorithm in Agarwal and Ramachandran~\cite{AR19} that improves on~\cite{ARKP18}
for moderate integer edge-weights.

\begin{table}
\scriptsize
\centering
\caption{Table comparing our new results with previous known results for exact weighted APSP problem.} \label{table}
\def\arraystretch{1.5}
{\setlength{\tabcolsep}{.3em}
\begin{tabular}{| c | c | c | c | c |}
\hline
Author & Arbitrary/ Integer & Randomized/ & Undirected/ & Round  \vspace{-.05in}\\
 & weights  & Deterministic & (Directed \& Undirected) & Complexity \\
 \hline
Huang et al.~\cite{HNS17} & Integer  & \R & Directed \& Undirected & $\tilde{O}(n^{5/4})$ \\
\hline
Elkin~\cite{Elkin17} & Arbitrary  & \R & Undirected & $\tilde{O}(n^{5/3})$ \\
\hline
Agarwal et al.~\cite{ARKP18} & \A  & \D & Directed \& Undirected & $\tilde{O}(n^{3/2})$ \\
\hline
 & \multirow{2}{*}{\I}  & \multirow{2}{*}{\D} & \multirow{2}{*}{Directed \& Undirected} & $\tilde{O}(n^{5/4} \cdot W^{1/4})$ ($W=$ max edge wt) \\
Agarwal \& &  & &  & $\tilde{O}(n \cdot \Delta^{1/3})$ ($\Delta$= max SP distance) \\\cline{2-5}
 Ramachandran~\cite{AR19} & \A  & \R & \DI & $\tilde{O}(n^{4/3})$ \\
\hline
Bernstein \& Nanongkai~\cite{BN19} & Arbitrary & Randomized & Directed \& Undirected & $\tilde{O}(n)$ \\
\hline
 \textbf{This Paper} &  \textbf{Arbitrary} & \textbf{\D} & \textbf{Directed \& Undirected} & $\bm{\tilde{O}(n^{4/3})}$ \\
\hline  
\end{tabular}}
\end{table}

Our new deterministic algorithm directly improves on~\cite{ARKP18}. The algorithm in~\cite{ARKP18} computes Step 1 in $O(n \cdot h)$ rounds by running the distributed Bellman-Ford
algorithm for $h$ hops from each source.
 Our algorithm leaves Step 1 unchanged from~\cite{ARKP18}, but it improves on both Step 2 and Step 3.
Our improved methods for implementing Steps 2 and 3 are the main technical contributions of this paper.
We list them below, followed by an  informal description of each of them.

\begin{enumerate}
\item A new deterministic algorithm for computing backer set (Algorithms~\ref{algRandBlocker}, \ref{algDet}). The ideas are derived from the Berger et al.'s NC algorithm for
 finding a small set cover in a hypergraph~\cite{BRS94},
and the result is an algorithm that significantly improves on the blocker set algorithm in Agarwal et al.~\cite{ARKP18} by removing an additional $n\cdot |Q|$ term in the round bound, where $Q$ is the blocker set.

\item A deterministic pipelined algorithm for propagating distance values from source nodes to blocker nodes (Algorithm~\ref{algCase2}). This algorithm deterministically propagates $\tilde{O}(n^{5/3})$ distance values from 
$n$ sources to $\tilde{O}(n^{2/3})$ blocker nodes in $\tilde{O}(n^{4/3})$ rounds, when the congestion at any node is at most $\tilde{O}(n^{4/3})$. Prior to this work, no deterministic algorithm was known that can implement this step in less than $n^{5/3}$ rounds.
\end{enumerate}

\subparagraph*{\bf (1) Deterministic Algorithm for Computing a Blocker Set (Step 2).}
For Step 2, \cite{ARKP18} gives a deterministic  algorithm that greedily chooses vertices to add to $Q$ at the cost of $O(n)$ rounds per vertex added,
for the cleanup cost for removing paths that are covered by this newly chosen vertex; this is 
 after an initial start-up cost of $O(n \cdot h)$. This gives
an overall cost of $O(nh + nq)$ for Step 2, where $q = |Q|= O((n/h) \cdot  \log n)$. 
Our new contribution is to construct $Q$ in a sequence of $polylog (n)$ steps, where each step adds several vertices to $Q$.
Our method incurs a cleanup 
cost of $O(|S|\cdot h)$ rounds per step after an initial start-up cost of $O(|S|\cdot h)$ rounds for an arbitrary source set $S$,
thereby removing the dependence on $q$ from this bound. 
($S=V$ gives the standard setting used in previous APSP algorithms.)
We achieve this by framing the computation of a small blocker set as an approximate set cover problem on a related hypergraph.
We then adapt the efficient NC algorithm in Berger et al.~\cite{BRS94} for computing an approximate minimum  set cover in a hypergraph to an $\tilde{O}(|S| \cdot h)$-round {\sc Congest} algorithm.  As in~\cite{BRS94}
this involves two main parts.
We first give  a randomized $\tilde{O} (|S|\cdot h)$-round algorithm that computes a blocker set of expected size $\tilde{O}(n/h)$ using only pairwise independent random variables. We then
derandomize this algorithm, again with an $\tilde{O} (|S|\cdot h)$-round algorithm.

\subparagraph*{\bf (2) Deterministic Pipelined Algorithm for Propagating Distance Values (Step 3).}
For Step 3, \cite{ARKP18} gave a deterministic  $O(n \cdot q)$- round algorithm, and  \cite{HNS17} gave a randomized
$\tilde{O} (n \cdot \sqrt q + n\cdot \sqrt{h})$-round algorithm.
 We replace the $n\cdot \sqrt{h}$ randomized algorithm used in~\cite{HNS17}  with a simple $n\cdot h$ round algorithm (similar to Step 1).
The randomized $O(n \cdot \sqrt q)$  method in~\cite{HNS17} computes
the {\it reversed $q$-sink shortest paths} problem that appears to use randomization in a crucial manner, 
by invoking
the randomized scheduling result of Ghaffari~\cite{Ghaffari15}, which allows
multiple algorithms to run concurrently
 in $O(d + c \cdot \log n)$
rounds, where $d$ bounds the {\it dilation} of any of the concurrent algorithms and $c$ bounds the 
{\it congestion} on any edge when considering all algorithms.
It is known that this result in~\cite{Ghaffari15} cannot be derandomized in a completely general setting. For Step 3, our contribution is to give a deterministic $\tilde{O} (n \cdot \sqrt q)$-round algorithm for the 
reversed $q$-sink shortest paths problem.
Our algorithm uses a simple round-robin pipelined approach.
To obtain the desired round bound we rephrase 
 the algorithm to work
 in {\it frames} which allows us to establish suitable progress in the pipelining to show that it terminates in
$\tilde{O}(n \cdot \sqrt q)$ rounds. 
We note that the standard known results on efficiently broadcasting multiple values, and on sending  or receiving messages using the routing schedule in an undirected APSP
 algorithm~\cite{HPDG+19,PR18,LSP19}
 do not apply to this setting.
 
Finally, we obtain the $\tilde{O}(n^{4/3})$ bound on the number of rounds by balancing the $\tilde{O}(nh)$ bound for Steps 1 and 2 with the $\tilde{O}(n \cdot \sqrt q)$ bound for 
the reversed $q$-sink shortest path problem,
as stated in the following theorem.

\begin{theorem}\label{thm:main}
There is a deterministic distributed algorithm that computes APSP on an $n$-node graph with arbitrary nonnegative 
edge-weights, directed or undirected, in $\tilde{O}(n^{4/3})$ rounds.
\end{theorem}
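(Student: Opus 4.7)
The plan is to instantiate the three-phase Ullman--Yannakakis framework from the introduction with the new deterministic subroutines, choosing the hop parameter $h$ so that every phase costs $\tilde{O}(n^{4/3})$. Setting $h=n^{1/3}$ gives a blocker set of size $q=\tilde{O}(n/h)=\tilde{O}(n^{2/3})$, and the two controlling bounds balance at $nh=n\sqrt{q}=n^{4/3}$.

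Concretely, Step~1 runs $h$-hop pipelined Bellman--Ford from every source, as in~\cite{ARKP18}, so that each node $v$ learns $d_h(u,v)$ together with the $h$-hop shortest-path tree edge from every source $u$, in $O(nh)=O(n^{4/3})$ rounds. Step~2 invokes the new deterministic blocker set algorithm (Algorithms~\ref{algRandBlocker}, \ref{algDet}) with source set $S=V$; by the round bound asserted in the introduction, this uses $\tilde{O}(|S|\cdot h)=\tilde{O}(n^{4/3})$ rounds and returns a set $Q$ of size $\tilde{O}(n^{2/3})$ meeting every $h$-hop shortest path out of every source. Step~3 splits into (3a) exact APSP among the blocker vertices $Q$, which I would compute in $\tilde{O}(|Q|\cdot h+n)=\tilde{O}(n^{4/3})$ rounds by a further pipelined Bellman--Ford on the virtual graph whose edges are the Step-1 $h$-hop distances between blockers, and (3b) propagating the resulting distances from the blockers back through the graph, i.e.\ the reversed $q$-sink shortest paths problem, for which I would invoke Algorithm~\ref{algCase2} with its claimed deterministic bound of $\tilde{O}(n\sqrt{q})=\tilde{O}(n^{4/3})$ rounds. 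Each node $v$ then locally outputs $d(u,v)=\min\{d_h(u,v),\min_{q\in Q}(d_h(u,q)+d(q,v))\}$ for every source $u$; this is exact because any shortest $u$-to-$v$ path using more than $h$ edges is hit by some $q\in Q$, by the defining property of the blocker set.

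The main obstacle is Step~3b, where the pre-existing $\tilde{O}(n\sqrt{q})$ bound relied on Ghaffari's randomized scheduling~\cite{Ghaffari15} that is known not to derandomize in full generality. Avoiding that tool requires the frame-based round-robin pipelining in Algorithm~\ref{algCase2} together with a progress measure tailored to the congestion at the blocker nodes; verifying that enough progress is made per frame to force termination in $\tilde{O}(n\sqrt{q})$ rounds is the delicate piece of the argument. Once the two new subroutines for Steps~2 and~3b are in hand, the theorem itself is just the aggregate of the three phase bounds, each $\tilde{O}(n^{4/3})$.
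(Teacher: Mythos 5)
Your overall architecture matches the paper's (Algorithm~\ref{algAPSP}): $h=n^{1/3}$, the new deterministic blocker set construction for Step~2, and the frame-based round-robin routine (Algorithm~\ref{algCase2}) for the reversed $q$-sink problem. But your endgame has a genuine gap. You finish by having each node $v$ \emph{locally} output $d(u,v)=\min\{d_h(u,v),\min_{q\in Q}(d_h(u,q)+d(q,v))\}$. To evaluate this, $v$ must know $d_h(u,q)$ for every source--blocker pair $(u,q)$; after Step~1 that quantity resides only at $q$ (each node learns the $h$-hop distances \emph{into} itself), and shipping all $n\cdot|Q|=\tilde{O}(n^{5/3})$ such values to every node is exactly the cost you are trying to avoid. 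This is why the paper does \emph{not} end with a local step: the aggregation $\min_q(\delta(u,q)+\cdot)$ is performed by communication, namely Step~\ref{extended} of Algorithm~\ref{algAPSP}, a per-source $h$-hop Bellman--Ford \emph{extension} from the blockers initialized with the $\delta(x,c)$ values, costing $O(nh)=O(n^{4/3})$ rounds (Lemma~\ref{lemma:h-hop-extension}). Relatedly, you have the direction of the reversed $q$-sink problem backwards: it delivers the values $\delta(x,c)$ (computed at the sources $x$) \emph{to} the blocker sinks $c$ so that the extension step can be seeded there; it is not the propagation ``from the blockers back through the graph,'' which is the separate extension step you omitted.

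Two smaller issues in the same vein. First, for a source $x$ to form the messages $\delta(x,c)$ at all, it must know its $h$-hop distances \emph{to} the blockers, $\delta_h(x,c')$; in a directed graph your Step~1 does not provide this, and the paper adds Step~\ref{computeInCSSSP} (an $h$-hop in-SSSP rooted at each $c\in Q$, $O(|Q|\cdot h)=\tilde{O}(n)$ rounds) precisely for this purpose. Second, your $\tilde{O}(|Q|\cdot h+n)$ count for APSP among the blockers undercounts the dissemination of the $|Q|^2=\tilde{O}(n^{4/3})$ virtual edge weights needed so that the virtual-graph computation can be done locally; this does not affect the final bound but should be stated as $\tilde{O}(|Q|^2+n)$. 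With the in-SSSP step restored, the reversed $q$-sink step reoriented toward the blockers, and the final local formula replaced by the $h$-hop Bellman--Ford extension, your round accounting assembles into the claimed $\tilde{O}(n^{4/3})$ exactly as in the paper.
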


Theorem~\ref{thm:main} improves on prior results for deterministic APSP on weighted graphs in the {\sc Congest} model. If 
randomization is allowed, the very recent result in~\cite{BN19} gives an~$\tilde{O}(n)$-round randomized algorithm, which is
close to the known lower bound of $\Omega(n)$ rounds~\cite{CKP17},
that holds
 even for unweighted APSP.

 \begin{algorithm*}
\caption{Overall APSP Algorithm}
\begin{algorithmic}[1]
\Statex Input: number of hops $h = n^{1/3}$
\State Compute $h$-CSSSP for set $V$ using the algorithm in~\cite{AR19}. \label{computeCSSSP}
\State Compute a blocker set $Q$ of size $\tilde{O}(n/h)$ for the $h$-CSSSP computed in Step~\ref{computeCSSSP} (described in Section~\ref{sec:blocker}).	\label{constructQ}
\State \textbf{For each $\bm{c\in Q}$ in sequence:} Compute $h$-in-SSSP rooted at $c$. \label{computeInCSSSP}
\State \textbf{For each $\bm{c\in Q}$ in sequence:} Broadcast $ID(c)$ and the shortest path distance value $\delta_h (c,c')$ for each $c' \in Q$.	\label{algkSSP:broadcast}
\State \textbf{Local Step at node $\bm{x\in V}$:} For each $c\in Q$ compute the shortest path distance values $\delta (x,c)$ using the distance values received in Step~\ref{algkSSP:broadcast}. \vspace{-0.03in}	\label{localCompute}
\State Run Alg.~\ref{algCase1} and \ref{algCase2} described  in Sec.~\ref{sec:frames} to propagate each distance value $\delta (x,c)$ from source $x \in V$ to blocker node $c \in Q$. \label{communicate}
\State \textbf{For each $\bm{x\in V}$ in sequence:} Compute \textit{extended} $h$-hop shortest paths starting from every $c\in Q$ using Bellman-Ford algorithm (described in Section~\ref{sec:h-hop-extensions}).	\label{extended}
\end{algorithmic}  \label{algAPSP}
\end{algorithm*} 
 
 \noindent
{\bf Derandomizing Distributed Algorithms.}
The method of conditional expectations has been used for derandomizing randomized distributed
algorithms in~\cite{CPS17,GK18}.
Censor-Hillel et al.~\cite{CPS17} semi-formalized a template of combining bounded independence with the method of conditional expectation
for derandomizing an algorithm for
computing a maximal independent set 
in the distributed setting. 
For unweighted graphs, Ghaffari and Kuhn~\cite{GK18} use a special case of the hitting set problem in a bipartite graph  along with a network decomposition 
technique to obtain deterministic distributed
algorithms
 for constructing certain types of spanners, small dominating sets, etc.

Instead of using the method of conditional expectations, our blocker set algorithm 
in Section~\ref{sec:deterministic-blocker}  first gives an efficient
distributed randomized algorithm  for the problem  which uses only pairwise independence.
It uses a linear-sized sample space for generating pairwise independent random 
variables and then
an aggregation of suitable parameters of sample point values
to derandomize our
randomized blocker set algorithm.

\subparagraph*{\bf Roadmap.} In Section~\ref{sec:overall} we present our overall APSP algorithm. Section~\ref{sec:blocker} sketches our blocker set algorithm and Section~\ref{sec:frames} gives our pipelined algorithm for the reversed $q$-sink shortest path problem. Further details on all of our results are
in Appendix~\ref{sec:appendix}.


\subsection{Congest Model}	\label{sec:congest}

In the {\sc Congest} model,
there are $n$ independent processors 
interconnected in a network by bounded-bandwidth links.
We refer to these processors as nodes and
the links as edges.
This network is modeled by graph $G = (V,E)$
where $V$ refer to the set of processors and
$E$ refer to the set of links between the processors.
Here $|V| = n$ and $|E| = m$.

Each node is assigned a unique ID  
between 1 and $poly(n)$ and
has infinite computational power.
Each node has limited topological knowledge 
and only knows about its incident edges.
For the weighted APSP problem we consider, 
 each edge 
has an arbitrary real weight.
Also if the edges are directed, 
the corresponding communication channels are bidirectional
and hence the communication network can be represented 
by the underlying undirected graph $U_G$ of $G$
(as 
in~\cite{HNS17,PR18,GL18}).

The computation proceeds in rounds. In each round each processor can send 
a constant number of 
 words along each outgoing edge,  and it receives the messages sent to it in the previous
round.  The {\sc Congest} model normally assumes that a word has $O(\log n)$ bits.
 Since we allow arbitrary edge-weights, here we
 assume that a constant number of 
  node ids, edge-weights,  and distance values
 can be sent along every edge in every round
(similar assumptions are made 
in~\cite{BN19, ARKP18, Elkin17}).
 The model allows a node to send different message along different edges though we do not
need this feature in our algorithm.

The performance of an algorithm in the \congest{} model is measured by its
round complexity, which is the worst-case
number of rounds of distributed communication.
As noted earlier, 
for the APSP problem, each node in the network needs to compute its shortest path distance from 
every other node
as well as the last edge on each such shortest path.

\section{Overall APSP Algorithm}	\label{sec:overall}

Algorithm~\ref{algAPSP} gives our overall APSP algorithm.
In Step~\ref{computeCSSSP} we use the (simple)
$O(n \cdot h)$-round algorithm in~\cite{AR19} 
to compute 
an $h$-hop \textit{Consistent} SSSP Collection
(or $h$-CSSSP for short)
for the vertex set $V$, defined as follows (and described in detail in Section~\ref{sec:csssp}).
Here,  $\delta(u,v)$ denotes the shortest path distance from $u$ to $v$
and $\delta_h (u,v)$ denotes the $h$-hop shortest path distance from $u$ to $v$.

\begin{definition}[{\bf CSSSP}~\cite{AR19}]	\label{def:CSSSP}
  Let $H$ be a collection of rooted trees of height $h$
  for a set of sources $S \subseteq V$
   in a graph $G=(V,E)$. Then $H$ is an \emph{$h$-hop CSSSP collection} (or simply an \emph{$h$-CSSSP})
  if for every $u, v \in V$ the path from $u$ to $v$ is 
  the same in each of the trees in $H$ (in which such a path exists), and is the $h$-hop shortest path from $u$ to $v$ in the $h$-hop tree $T_u$ rooted at $u$. 
  Further, each $T_u$ contains every vertex $v$ that has a path with at most $h$ hops from $u$ in $G$ that has distance $\delta(u,v)$,
  \end{definition}
  
  The advantage of using 
  $h$-CSSSP 
    instead of 
  other types of
  $h$-hop shortest paths is that 
  the trees in an $h$-CSSSP
  create a consistent collection of paths across all trees in the collection, i.e. a path from $u$ to $v$ is 
same in all trees 
in the
CSSSP collection $\mathcal{C}$ (in which such a path exists).
We exploit this useful property of CSSSPs throughout this paper.

Step~\ref{constructQ} computes a blocker set $Q$, which is defined as follows:

  \begin{definition}[{\bf Blocker Set}~\cite{King99,ARKP18}]
  Let $\Tau$ be a collection of rooted $h$-hop trees
  for a set of vertices $S$
    in a graph $G=(V,E)$. A set $Q\subseteq V$ is a 
\emph{  \centers } for $\Tau$ if every root to leaf path 
 of length $h$ 
  in every tree in $\Tau$ contains a node in $Q$.
  Each node in $Q$ is called a \emph{blocker node} for $\Tau$.
  \end{definition}

Our deterministic blocker set algorithm for Step 2 is completely different from the blocker set algorithms in~\cite{ARKP18, AR19} with 
significant improvement in the round complexity.
We describe this algorithm in Section~\ref{sec:blocker}.
Our blocker set algorithm is based on the
NC approximate
Set Cover algorithm of Berger et al.~\cite{BRS94} and runs in 
$\tilde{O}(|S|\cdot h)$ rounds, where $S$ is the set of
 vertices from which we want to compute the shortest paths.
Previous 
deterministic
blocker set algorithms in~\cite{ARKP18, AR19} have an additional $\tilde{O}(n\cdot |Q|)$ term in the round complexity.

In Step~\ref{computeInCSSSP} we compute,
for each $c\in Q$,
  the $h$-hop in-SSSP rooted at $c$, which is the set of in-coming $h$-hop shortest paths ending at node $c$.
We can compute these $h$-hop in-SSSPs in $O(h)$ rounds per source 
using Bellman-Ford algorithm~\cite{Bellman58}.
In Step~\ref{algkSSP:broadcast} every blocker node $c \in Q$ broadcasts its ID and the corresponding $h$-hop shortest path distance values $\delta_h (c,c')$
for every $c' \in Q$.
Step~\ref{localCompute} is a local computation step where every node $x$ computes its shortest path distances
$\delta(x,c)$
  to every $c \in Q$ using the shortest path distance
values it computed and received in Steps~\ref{computeInCSSSP} and \ref{algkSSP:broadcast} respectively.

In Step~\ref{communicate} every node $x$ wants to send 
each shortest path distance value $\delta (x,c)$ it computed in Step~\ref{localCompute} to blocker node $c\in Q$.
This is the reversed $q$-sink shortest path problem, where $q=|Q|$, and is the other 
crucial
step in our APSP algorithm.
This step
requires sending $\tilde{O}(n^{5/3})$
 different distance values
to
$\tilde{O}(n^{2/3})$
different 
blocker nodes
(using $|Q| = \tilde{O}(n^{2/3})$).
A trivial solution is to broadcast all these messages in the network, resulting in a round complexity of $\tilde{O}(n^{5/3})$  rounds.
However this is the only method known so far to implement this step deterministically.
In Sec.~\ref{sec:frames} we give a pipelined
 algorithm for implementing this step more efficiently in $\tilde{O}(n^{4/3})$ rounds.
After the execution of Step~\ref{communicate} every blocker node $c \in Q$ knows its shortest path distance from every node $x \in V$.

Finally, in Step~\ref{extended}
for every 
$x \in V$, we run  Bellman-Ford algorithm for $h$ hops 
 with distance values $\delta (x,c)$ used as the 
initialization values at every blocker node $c \in Q$.
These constructed paths are also known as 
\textit{extended} $h$-hop shortest 
paths~\cite{HNS17}.
After this step, each $t \in V$ knows the shortest path distance value $\delta (x,t)$ from every 
$x \in V$, which gives the desired APSP output.
We describe Step~\ref{extended} in Section~\ref{sec:h-hop-extensions}.
With these results in place we can now prove Theorem~\ref{thm:main}, whose statement we reproduce here for
convenience.

\begin{theorem*}{\bf \ref{thm:main}}
There is a deterministic distributed algorithm that computes APSP on an $n$-node graph with arbitrary nonnegative 
edge-weights, directed or undirected, in $\tilde{O}(n^{4/3})$ rounds.
\end{theorem*}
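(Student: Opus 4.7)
The plan is to choose $h = n^{1/3}$ and then charge the cost of Algorithm~\ref{algAPSP} to its seven constituent steps, invoking the subroutines described in Sections~\ref{sec:blocker}, \ref{sec:frames}, and~\ref{sec:h-hop-extensions}. With $h = n^{1/3}$ the blocker set has size $|Q| = \tilde{O}(n/h) = \tilde{O}(n^{2/3})$, which is the balance point between the two dominant phases.

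For correctness I would argue that after Step~\ref{communicate} every blocker node $c \in Q$ knows $\delta(x,c)$ for all sources $x \in V$, so that when Step~\ref{extended} runs an $h$-hop Bellman-Ford from each source $x$ initialized with these distances at the blockers, each destination $t$ correctly learns $\delta(x,t)$. The reason this covers all shortest paths is standard: since $Q$ blocks every $h$-hop root-to-leaf path in the $h$-CSSSP built in Step~\ref{computeCSSSP}, a shortest $x \to t$ path either fits within $h$ hops (and is handled by the $h$-CSSSP together with the extended Bellman-Ford), or it must cross some $c \in Q$ at or before its $h$-th hop from $x$. In the latter case the values $\delta(x,c)$ computed locally in Step~\ref{localCompute} from the $h$-hop out-tree at $x$ and the $h$-hop in-tree at $c$, composed through the $\delta_h(c,c')$ values over pairs in $Q$ broadcast in Step~\ref{algkSSP:broadcast}, are exactly the distances needed to seed the $h$-hop extensions from $c$ in Step~\ref{extended}.

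For the round complexity I would go through each step in turn. Step~\ref{computeCSSSP} runs in $O(nh) = O(n^{4/3})$ rounds by~\cite{AR19}. Step~\ref{constructQ} runs in $\tilde{O}(|S| \cdot h) = \tilde{O}(n^{4/3})$ rounds using the new deterministic blocker-set algorithm of Section~\ref{sec:blocker} with $S = V$. Step~\ref{computeInCSSSP} consists of $|Q|$ sequential runs of Bellman-Ford to depth $h$, giving $O(|Q| \cdot h) = \tilde{O}(n)$ rounds. Step~\ref{algkSSP:broadcast} is a pipelined broadcast of $|Q|^2 = \tilde{O}(n^{4/3})$ values along a BFS tree, giving $\tilde{O}(n + |Q|^2) = \tilde{O}(n^{4/3})$ rounds. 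Step~\ref{localCompute} is purely local and costs no rounds. Step~\ref{communicate} is the reversed $q$-sink shortest paths step, which runs in $\tilde{O}(n \cdot \sqrt{q}) = \tilde{O}(n^{4/3})$ rounds by the pipelined framed algorithm of Section~\ref{sec:frames}. Finally Step~\ref{extended} is $n$ sequential runs of $h$-hop Bellman-Ford, costing $O(n \cdot h) = O(n^{4/3})$ rounds. Summing these terms yields the claimed $\tilde{O}(n^{4/3})$ bound.

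The theorem itself is then essentially an exercise in summation; the real difficulty is hidden in the two subroutines this proof invokes, namely the deterministic blocker-set algorithm of Step~\ref{constructQ} and the deterministic pipelined reversed-sink algorithm of Step~\ref{communicate}. Assuming those are in hand, the main sanity check at the level of Theorem~\ref{thm:main} is to verify that $h = n^{1/3}$ really does balance $nh$, $n\sqrt{q}$, $|Q|^2$, and $|Q| \cdot h$ simultaneously, and that no auxiliary communication hidden inside Steps~\ref{algkSSP:broadcast} or~\ref{localCompute} breaks the $\tilde{O}(n^{4/3})$ budget.
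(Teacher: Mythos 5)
Your proposal is correct and follows essentially the same route as the paper's proof: the same correctness argument via decomposing each shortest path into at most-$h$-hop segments between consecutive blocker nodes, and the same step-by-step round accounting with $h = n^{1/3}$ and $|Q| = \tilde{O}(n^{2/3})$ balancing the $\tilde{O}(nh)$ and $\tilde{O}(n\sqrt{q})$ terms. Nothing is missing relative to the paper's own argument.
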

\begin{proof}
Fix a pair of nodes $x$ and $t$.
If the shortest path from $x$ to $t$ has less than $h$ hops, then $\delta (x,t) = \delta_h (x,t)$ and the correctness is straightforward 
(see Lemma~\ref{lemma:CSSSP}). 

Otherwise, we can divide the shortest path from $x$ to $t$ into subpaths $x$ to $c_1$, $c_1$ to $c_2$, $\ldots$, $c_l$ to $t$ where 
$c_i \in Q$ for $1\leq i\leq l$ and each of these subpaths have hop-length at most $h$.
Since $x$ knows $\delta_h (x,c_1)$ from Step~\ref{computeInCSSSP} and $\delta_h (c_i,c_{i+1})$ distance values from Step~\ref{algkSSP:broadcast},
it can correctly compute $\delta (x,c_l)$ distance value in Step~\ref{localCompute}.
And from Lemmas~\ref{lemma:algCase1} and \ref{lemma:algCase2}, $c_l$ knows the distance value $\delta (x,c_l)$ after Step~\ref{communicate}.
Since the shortest path from $c_l$ to $t$ has hop-length at most $h$, from Lemma~\ref{lemma:h-hop-extension} $t$ will compute $\delta (x,t)$ in
Step~\ref{extended}.

Step~\ref{computeCSSSP} runs in $O(nh) = O(n^{4/3})$ rounds~\cite{AR19} (Lemma~\ref{lemma:CSSSP}).
In Section~\ref{sec:blocker}, we will give an $\tilde{O}(nh) = \tilde{O}(n^{4/3})$ rounds algorithm to compute a blocker set of size 
$q = \tilde{O}(n/h) =  \tilde{O}(n^{2/3})$ (Step~\ref{constructQ}).
Step~\ref{computeInCSSSP} takes $O(|Q|\cdot h) = \tilde{O}(n)$ rounds using Bellman-Ford algorithm (Lemma~\ref{lemma:CSSSP}).
Since $|Q|^2 = \tilde{O}(n^2 / h^2) = \tilde{O}(n^{4/3})$, Step~\ref{algkSSP:broadcast} takes $\tilde{O}(n^{4/3})$ rounds 
(see Lemma~\ref{lem:all-to-all-bc}).
Step~\ref{localCompute} is local computation and has no communication.
From Lemmas~\ref{lemma:algCase1} and \ref{lemma:algCase2-runtime}, Step~\ref{communicate} takes $\tilde{O}(n^{4/3})$ rounds and Step~\ref{extended} can be computed in $O(nh) = O(n^{4/3})$ rounds
using Lemma~\ref{lemma:h-hop-extension}.
Hence the overall algorithm runs in 
$\tilde{O}(n^{4/3})$ rounds.
\end{proof}

\section{Computing a Blocker Set}	\label{sec:blocker}

\begin{algorithm*}
\scriptsize
\caption{Randomized Blocker Set Algorithm}
Input: $S$: set of source nodes; $h$: number of hops; $\mathcal{C}$: collection of $h$-CSSSP for set $S$; $\epsilon, \delta$: positive constants $\leq 1/12$
\begin{algorithmic}[1]
\State  Compute $score (v)$ for all nodes $v \in V$ using an algorithm from~\cite{ARKP18}. \label{computeScoreV}
\For{{\bf stage $i= \log_{1+\epsilon} n^2$ down to $1$}}	\Comment{All nodes have score value less than $(1+\epsilon)^i$}	\label{startFor1}
	\State Compute $V_i$ and broadcast it using the algorithm described in 
Sec.~\ref{sec:helper}. \label{Vi}
	\State Compute $P^{v}_{i}$ (at each $v \in V$) using 
Algorithm~\ref{algPi}. \label{Pi}
	\For{{\bf phase $j = \log_{1+\epsilon} h$ down to $1$}	} 	\Comment{All paths in $P_i$ have no more than $(1+\epsilon)^j$ nodes in $V_i$}	\label{startFor2}
		\While{there is a path in $P_i$ with at least $(1+\epsilon)^{j-1}$ nodes in $V_i$}	\label{startWhile}
			\State Compute (a) $P^v_{ij}$ (at each $v \in V$) using Alg.~\ref{algPij} and (b) $|P_{ij}|$ using Alg.~\ref{alg|Pij|}. \label{Pij}\label{algRandBlocker:|Pij|}
			\State Compute $score^{ij} (v)$ for all nodes $v \in V_i$ (using a result from~\cite{ARKP18}) and broadcast $score^{ij} (v)$ values. \label{computeScoreij}
			\If{there exists $c \in V_i$ such that $score^{ij} (c) > (\delta^3/(1+\epsilon)) \cdot |P_{ij}|$}	\label{if}
				\State \hspace{.05in} \textbf{Local Step at $\bm{v \in V}$:} add $c$ to $Q$. Break ties with $score^{ij}$ value and node ID.\label{algRandBlocker:pick1}
			\Else		\Comment{Run a selection procedure to select a set of nodes}
				\State  \hspace{.05in} \textbf{Local Step at $\bm{v \in V_i}$:} add $v$ to set $A$ with probability $p = \delta/(1+\epsilon)^j$ (pairwise independently) \footnote{We assume that we have fixed a family of 2-independent hash functions, and every node knows this family. A node $v$ then adds itself to $A$ if $h(v)= 1$.}.	\label{pick2}
				\State \hspace{.05in} \textbf{For each $\bm{v \in V_i}$:} node $v$ broadcast $ID(v)$ if it added itself to $A$ in previous step.	\label{algRandBlocker:broadcastA}
			 	\State \hspace{.05in} \textbf{Local Step at $\bm{v \in V}$:} Check if $A$ is a good set and if so, add $A$ to $Q$. Otherwise, go back to Step~\ref{pick2}. 	\label{algRandBlocker:checkA}
			 \EndIf
			\State \textbf{For each $\bm{x \in S}$ in sequence:} Remove subtrees rooted at $c' \in Q$ using 
Alg.~\ref{algRemoveSubtrees}. \label{algRandBlocker:remove-Subtrees}
			\State Re-compute $score (v)$ for all nodes $v$ and re-construct sets $V_i$ and $P_i$ as described in Steps~\ref{Vi} and \ref{Pi}. 	\label{reconstruct}
		\EndWhile	\label{endWhile}
	\EndFor	\label{endFor2}
\EndFor	\label{endFor1}
\end{algorithmic}  \label{algRandBlocker}
\end{algorithm*}

In this 
section 
we describe our algorithm to compute a small blocker set.
We frame this problem as that of finding a small set cover in an associated hypergraph.
We then adapt the efficient NC algorithm for finding a provably good approximation to this NP-hard problem 
given in Berger et al.~\cite{BRS94}
 to obtain our deterministic distributed algorithm.
 
As in~\cite{BRS94} our algorithm has two parts. We first present a randomized algorithm to find a blocker set of size
 $\tilde{O}(n/h)$
 in $\tilde{O}(|S| \cdot h)$ rounds
using only pairwise independence
\footnote{We use pairwise independence extensively in our analysis of the randomized blocker set algorithm, specifically lemmas A.16, A.17 where we use pairwise independence to get bounds for the terms of the form
$E[X_v \cdot X_{v'}]$.
 This analysis needs pairwise independence as does the derandomization algorithm in Section 3.2.}.
This is described in Section~\ref{sec:rand-blocker}.
Then in Section~\ref{sec:deterministic-blocker} we describe
how to use the exhaustive search technique of Luby~\cite{Luby93} along with the ideas from~\cite{BRS94} to derandomize this 
algorithm, again in  $\tilde{O}(|S| \cdot h)$ rounds. In our overall APSP algorithm $S=V$ but we will also use this algorithm in
Section~\ref{sec:frames} with a different set for $S$.


\subsection{Randomized Blocker Set Algorithm}	\label{sec:rand-blocker}

Given a hypergraph $H=(V,F)$, a subset of vertices $R$ is a set cover for $H$ if $R$ contains at least one vertex in every hyperedge
 in $F$.
Computing a set cover of minimum size is NP-hard. Berger et al.~\cite{BRS94} gave an efficient NC algorithm to compute an $O(\log n)$ approximation to the
minimum set cover.

We now briefly describe the set cover algorithm of Berger et al.~\cite{BRS94}.
The algorithm runs in phases, which are further subdivided into subphases,
in order to construct a suitable blocker set $Q$.
In phase $i$ only vertices 
with degree between $(1+\epsilon)^{i-1}$ and $(1+\epsilon)^i$ are considered for selection 
(let $V_i$ be this set of vertices),
and subphase $j$ consists of only those hyperedges that have at least one vertex in $V_i$.
In each subphase $j$, the algorithm performs a series of selection steps such that all the hyperedges considered in
subphase $j$ are covered by the 
vertices added to $Q$.
This process is repeated for all phases and their subphases and
the set cover is then constructed by taking the union of all the vertices selected by these selection 
steps across all phases.

We 
map the problem of computing a minimum blocker set for an $h$-CSSSP collection $\mathcal{C}$ in a graph $G=(V,E)$ to the minimum set cover problem in 
a
hypergraph $H=(V,F)$ 
as follows. The vertex set
$V$ remains the vertex set of $G$ and each edge in $F$ consists of the vertices in a root-to-leaf path in a tree in $\mathcal{C}$. 
This hypergraph has $n$ vertices and 
at most
$n \cdot |S|$ edges, where $S$ is the number of sources (i.e., trees) in $\mathcal{C}$. Each edge in $F$ has exactly $h$ vertices
(since we do not need to cover paths that have less than $h$ hops).
We now use this mapping to rephrase the algorithm in~\cite{BRS94} in our setting, and we derive an $\tilde{O}(|S| \cdot h)$-round randomized algorithm to compute a
blocker set of 
expected
size within a
$O(\log n)$ factor of the optimal size, using only pairwise independent random variables. 
Since we know there exists a blocker set of size $O((n/h) \cdot \log n)$ (which is constructed in~\cite{ARKP18,AR19})
the size of the blocker set constructed by this randomized algorithm is $\tilde{O}(n/h)$.

Our randomized blocker set method is in Algorithm~\ref{algRandBlocker}.
Table~\ref{table-blocker} presents the notation we use for this section.
In Step~\ref{computeScoreV} for each node $v$ we compute $score(v)$, the number of $h$-hop shortest paths in CSSSP collection $\mathcal{C}$ that contain node $v$. 
This can be done in $O(|S|\cdot h)$ rounds for all nodes $v \in V$ using Algorithm 3 in~\cite{ARKP18}.
Our algorithm proceeds in stages from $i = \log_{1+\epsilon} n^2$ down to $2$ (Steps~\ref{startFor1}-\ref{endFor1}),
where $\epsilon$ is a small positive constant $\leq 1/12$,
such that  at the start of stage $i$, all nodes in $V$ have score value at most  $(1+\epsilon)^i$ and
in stage $i$ we focus on 
$V_i$, the set of  nodes $v$ with $score$ value greater than $(1+\epsilon)^{i-1}$.
(This ensures that the nodes that are added to the blocker set have their score values near the maximum score value).
Let $P_i$ be the set of paths in $\mathcal{C}$ that contain a vertex in 
$V_i$
 and let $P^{v}_{i}$ be the set of paths in $P_i$ with $v$ as the leaf node.
These sets are readily computed in $O(|S| \cdot h)$-rounds (Sec.~\ref{sec:helper}).

\begin{table}
\scriptsize
\centering
\noindent
\caption{Notations} \label{table-blocker}
\begin{tabular}{| c | l |}
\hline
$\mathcal{C}$ & $h$-CSSSP collection	\\
\hline
$S$ & set of source nodes in $\mathcal{C}$\\
\hline
$h$ & number of hops in a path\\
\hline
$n$ & number of nodes \\
\hline
$\epsilon, \delta$ & positive constants $\leq 1/12$	\\
\hline
$Q$ & blocker set (being constructed)\\
\hline
$score(v)$ &  number of root-to-leaf paths in $\mathcal{C}$ that contain $v$ (local var. at $v$)	\\
\hline
$V_i$ & set of nodes $v$ with $score(v) \geq (1+\epsilon)^{i-1}$ \\
\hline
$P_i$ & set of paths in $\mathcal{C}$ with at least one node in $V_i$ \\	
\hline
$P_{ij}$ & set of paths in $P_i$ with at least $(1+\epsilon)^{j-1}$ nodes in $V_i$ \\
\hline
$P^v_i$ & set of paths in $P_i$ with $v$ as the leaf node \\
\hline
$P^v_{ij}$ & set of paths in $P_{ij}$ with $v$ as the leaf node \\
\hline
$score^{ij} (v)$ &  number of paths in $P_{ij}$ that contain $v$ (local var. at $v$)		\\
\hline
\end{tabular}
\end{table}

Similar to~\cite{BRS94}, in order
to ensure that the average number of paths covered by the newly chosen blocker nodes is near the maximum score value, we further divide our algorithm for stage 
$i$ into a sequence of $\log_{1+\epsilon} h = \log_{1+\epsilon} n^{1/3}$ phases, where in each phase $j$ we focus on the paths in $P_i$ with at least 
$(1+\epsilon)^{j-1}$ nodes in $V_i$.
We call this set of paths $P_{ij}$
and let $P^v_{ij}$ be the set of paths in $P_{ij}$ with $v$ as the leaf node.
We maintain that at the start of phase $j$, every path in $P_i$ has at most $(1+\epsilon)^j$ nodes in $V_i$.
We now describe our algorithm for phase $j$ (Steps~\ref{startFor2}-\ref{endFor2}).
The algorithm for phase $j$ consists of a series of selection steps (Steps~\ref{startWhile}-\ref{endWhile}) (similar to~\cite{BRS94})
which are performed until there are no more paths in $P_{ij}$.

Now we describe how we select nodes to add to blocker set $Q$. 
Let $\delta$ be some fixed positive constant less than or equal to $1/12$.
n Step~\ref{if} we check if there exists a node $v$ which covers at least $\delta^3/(1+\epsilon)$ fraction of paths in $P_{ij}$ and if so, we add this 
node to the blocker set in Step~\ref{algRandBlocker:pick1}.
In case of multiple such nodes, we pick the one with the maximum $score^{ij}$ value and break ties using node IDs.
Otherwise in Step~\ref{pick2}, we randomly pick every node with probability $\delta/(1+\epsilon)^j$, pairwise independently, and form a set $A$.
In Step~\ref{algRandBlocker:checkA} we check if $A$ is a good set, otherwise we try again and form a new set $A$ in Step~\ref{pick2}.
As in~\cite{BRS94} we define the notion of a good set as given below and 
we will later show that 
$A$ is a good set with probability at least $1/8$.

\begin{definition}\label{def:good-set}
A set of nodes $A \subseteq V_i$ is a \textit{good set} if $A$ covers at least $|A|\cdot (1+\epsilon)^i \cdot (1-3\delta-\epsilon)$ paths in $P_i$ and at least a $\delta/2$ fraction of paths from $P_{ij}$.
\end{definition}

Before the next selection step, we remove the paths covered by these newly chosen nodes from the collection $\mathcal{C}$ along with
recomputing the score values and sets $V_i$ and $P_i$ (Steps~\ref{algRandBlocker:remove-Subtrees}-\ref{reconstruct}).

\subsubsection{\bf Helper Algorithms for Randomized Blocker Set Algorithm}	\label{sec:helper}

Here we describe the helper algorithms for Algorithm~\ref{algRandBlocker}.

\subparagraph{\bf Algorithms for Computing $V_i$ and $P_i$.}	\label{sec:ViPi}
Here we describe our algorithm for computing Steps 3 and 4 of Algorithm~\ref{algRandBlocker}, which computes the set $V_i$ and 
identifies which paths belong to $P_i$ respectively.
Since every node with score value greater than or equal to $(1+\epsilon)^{i-1}$ belongs to $V_i$, computing $V_i$ is quite trivial.
And to determine if a path $p$ belong to $P_i$, we only need to check if one of the nodes in $p$ is in $V_i$.

Our algorithm for computing $V_i$ works as follows: 
Every node $v$ checks if its score value is greater than or equal to $(1+\epsilon)^{i-1}$ and if so, it broadcast its ID to every other node.
The set $V_i$ is then constructed by including the IDs of all such nodes.
Since there are at most $n$ messages involved in the broadcast step, this algorithm takes $O(n)$ rounds.
This leads to the following lemma.

\begin{lemma}	\label{lemma:V_i}
Given the $score (v)$ values for every $v \in V$, the set $V_i$ can be constructed in $O(n)$ rounds.
\end{lemma}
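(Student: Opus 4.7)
The plan is to observe that membership in $V_i$ is purely a local decision at each node, and then to use a standard pipelined broadcast to disseminate the resulting IDs to the entire network. First, each node $v$ compares its locally-stored $score(v)$ against the threshold $(1+\epsilon)^{i-1}$; this comparison requires no communication. If the inequality holds, $v$ marks itself as a member of $V_i$ and prepares $ID(v)$ as a message to broadcast; otherwise it stays silent.

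Next, I would argue the round bound via pipelined broadcast on a fixed spanning tree of $G$. Since each node has at most one ID to contribute, the total number of distinct messages that need to be delivered to all nodes is $|V_i| \leq n$. Using a BFS tree of depth $D \leq n$ (which can be assumed to be precomputed once, or built in $O(n)$ rounds), these $|V_i|$ messages can be gathered at the root and then broadcast down the tree in a pipelined fashion, taking $O(D + |V_i|) = O(n)$ rounds in the \congest{} model. Each node, upon receiving the collection of broadcast IDs, assembles $V_i$ locally; because all nodes execute the broadcast on the same tree and receive the same set of IDs, the resulting copies of $V_i$ agree across the network.

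The only delicate point is ensuring that the broadcast correctly handles up to $n$ simultaneous source messages under the bandwidth restriction that only a constant number of IDs traverse each edge per round; this is handled by the standard pipelining argument, where at most one new ID per round advances along any tree edge toward the root and then down to the leaves, giving the $O(D + |V_i|) = O(n)$ bound. No additional machinery is required, so the lemma follows directly.
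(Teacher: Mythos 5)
Your proposal is correct and follows essentially the same route as the paper: a purely local threshold test on $score(v)$ followed by an all-to-all broadcast of the at most $n$ qualifying IDs in $O(n)$ rounds. The only difference is that the paper invokes this broadcast as a black-box primitive (its Lemma on all-to-all broadcast), whereas you unfold the standard BFS-tree pipelining argument behind it; both yield the same bound.
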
 

We now describe our algorithm for computing $P_i$.
Fix a source node $x \in V$.
In Round $0$ $x$ initializes $flag$ to $true$ if it belongs to $V_i$, otherwise set it to $false$ (Step~\ref{algPi:Step1}).
It then sends this $flag$  to its children in next round (Step~\ref{algPi:send}).
In round $r \geq 1$, a node $v$ that is $r$ hops away from $x$ receives the $flag$ from its parent (Steps~\ref{algPi:receive-Start}-\ref{algPi:receive-End})
and $v$ updates the $flag$ value in Step~\ref{algPi:update} (set it to true if $v \in V_i$) and send it to its children in $x$'s tree 
in round $r+1$ (Step~\ref{algPi:send}).

\begin{algorithm}
\caption{{\sc Compute-$P_i$:} Algorithm for computing paths in $P_i$ for source $x$ at node $v$}
\begin{algorithmic}[1]
\Statex Input: $V_i$; $h$: number of hops; $T_x$: tree for source $x$
\State \textbf{(Round $0$):} \textbf{if $v \in V_i$} \textbf{then} set $flag \leftarrow true$ \textbf{else} $flag \leftarrow false$		\label{algPi:Step1}
\State \textbf{Round $h\geq r > 0$:}
\State \textbf{Send:} \textbf{if} $r = h_x (v) + 1$ \textbf{then} send $\langle flag \rangle$ to all children \vspace{.02in}	\label{algPi:send}
\State {\bf receive [lines~\ref{algPi:receive-Start}-\ref{algPi:receive-End}]:}  
\If{$r = h_x (v)$}	\label{algPi:receive-Start}
	\State let $M$ be the incoming message to $v$ 
	\State let the sender be $w$ and let $M = \langle flag_w \rangle$ and 
	 \State {\bf if} $w$ is a parent of $v$ in $T_x$ {\bf then} $flag \leftarrow flag \vee flag_w$ \label{algPi:update}
\EndIf	\label{algPi:receive-End}
\State \textbf{Local Step at $v$:} \textbf{if} $v$ is a leaf node \textbf{and} $flag = true$ \textbf{then} the path from $x$ to $v$ is in $P_{i}$.
\end{algorithmic}  \label{algPi}
\end{algorithm}

\begin{lemma}	\label{lemma:compute-Pi}
Using  {\sc Compute-$P_i$} (Algorithm~\ref{algPi}), $P_{i}$ can be computed in $O(h)$ rounds per source node.
\end{lemma}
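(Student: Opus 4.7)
The plan is to verify correctness by a straightforward induction on the hop distance from the source, and then bound the number of rounds by observing that the algorithm is essentially a single downward sweep of the tree $T_x$ of depth at most $h$. For the source $x$ of interest, I will let $h_x(v)$ denote the depth of $v$ in $T_x$ (with $h_x(x)=0$) and write $\pi_x(v)$ for the set of nodes on the path from $x$ to $v$ in $T_x$.

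The key invariant I would prove is the following: after round $r=h_x(v)$, the local variable $flag$ at $v$ equals $\bigvee_{u\in\pi_x(v)}[u\in V_i]$. The base case $r=0$ handles the root $x$ directly from Step~\ref{algPi:Step1}. For the inductive step, consider $v$ with $h_x(v)=r\ge 1$ and let $w$ be its parent in $T_x$, so $h_x(w)=r-1$. By induction, at the end of round $r-1$ the flag at $w$ equals $\bigvee_{u\in\pi_x(w)}[u\in V_i]$. In round $r-1$, $w$ satisfies $r-1=h_x(w)+1-1$, hmm let me re-read — in Step~\ref{algPi:send} a node sends when $r=h_x(v)+1$, so $w$ sends in round $r$... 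Actually $w$ sends $flag_w$ in round $h_x(w)+1=r$, and $v$ receives it in round $h_x(v)=r$ (Steps~\ref{algPi:receive-Start}--\ref{algPi:receive-End}) and ORs it with $[v\in V_i]$ in Step~\ref{algPi:update}. Thus at the end of round $r$, the flag at $v$ equals $[v\in V_i] \vee \bigvee_{u\in\pi_x(w)}[u\in V_i]=\bigvee_{u\in\pi_x(v)}[u\in V_i]$, completing the induction. In the final local step, a leaf $v$ declares the path $x\!\leadsto\!v$ to be in $P_i$ iff this OR is true, which by definition of $P_i$ is exactly the correct decision.

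For the round bound, the algorithm runs for $r=0,1,\dots,h$, and at each round the only communication (Step~\ref{algPi:send}) is that each node $w$ with $h_x(w)=r-1$ sends one boolean flag to each of its children in $T_x$. Each such message uses at most one bit (or one word) on a single directed tree edge of $T_x$, which is an edge of the underlying communication graph, so it respects the {\sc Congest} bandwidth. Hence the procedure terminates after $O(h)$ rounds per source node.

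The main obstacle I would watch for is purely bookkeeping: that each leaf of $T_x$ at depth strictly less than $h$ (i.e., a ``short'' root-to-leaf path with fewer than $h$ hops) is handled correctly. For such a leaf $v$, the induction still gives the correct value of $flag$ at round $h_x(v)$, and since $v$ has no children the send step becomes vacuous in later rounds; the leaf-local test in the final step then simply does not flag a length-$h$ path, which is consistent with the algorithm only needing to identify length-$h$ root-to-leaf paths in $P_i$. No other subtleties arise, so the lemma follows.
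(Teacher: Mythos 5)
Your proposal is correct and follows essentially the same argument as the paper's proof: a single downward sweep of $T_x$ propagating an OR-flag, with correctness at each leaf following from the hop-by-hop timing of sends and receives, and the $O(h)$ bound coming from the depth of the tree. Your version merely makes explicit, via induction on $h_x(v)$, the claim that the paper states directly (that after $h$ rounds each leaf knows whether its root-to-leaf path contains a node of $V_i$).
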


\begin{proof}
Fix a path $p$ from source $x$ to leaf node $v$.
After $h$ rounds, $v$ will know if any node in $p$ belongs to $V_i$ (using the $flag$ value it received in Steps~\ref{algPij:receive-Start}-\ref{algPij:receive-End}).

The algorithm takes $h$ rounds per source $x$ and thus $P_{i}$ can be computed in $O(|S|\cdot h)$ rounds in total (since we need to run the
algorithm for every source $x$).
\end{proof}

\subparagraph{\bf Algorithm for Computing $P_{ij}$.}	\label{sec:Pij}
Here we describe our algorithm for computing Step 7(a) of Algorithm~\ref{algRandBlocker}, which identifies the paths in $P_{i}$ that 
also belong to $P_{ij}$.
Since every path in $P_{ij}$ has at least $(1+\epsilon)^{j-1}$ nodes from $V_i$, for each path $p$ we need to 
determine the number of nodes in $p$ that belong to $V_i$.
We do this by counting the number of nodes that are in $V_i$, starting from root to leaf node.

Our algorithm works as follows:
Fix a source node $x \in V$.
In Round $0$ $x$ initializes $\beta$ value to $1$ if it belongs to $V_i$, otherwise set it to $0$ (Step~\ref{algPij:Step1}).
It then sends this $\beta$ value to its children in next round (Step~\ref{algPij:send}).
In round $r \geq 1$, a node $v$ that is $r$ hops away from $x$ receives the $\beta$ value from its parent (Steps~\ref{algPij:receive-Start}-\ref{algPij:receive-End})
and $v$ updates the $\beta$ value in Step~\ref{algPij:update} (increment it by $1$ if $v \in V_i$) and send it to its children in $x$'s tree 
in round $r+1$ (Step~\ref{algPij:send}).

\begin{algorithm}
\caption{{\sc Compute-$P_{ij}$:} Algorithm for computing paths in $P_{ij}$ for source $x$ at node $v$}
\begin{algorithmic}[1]
\Statex Input: $V_i$; $h$: number of hops; $T_x$: tree for source $x$
\State \textbf{(Round $0$):} \textbf{if $v \in V_i$} set $\beta \leftarrow 1$ \textbf{else} $\beta \leftarrow 0$		\label{algPij:Step1}
\State \textbf{Round $h\geq r > 0$:}
\State \textbf{Send:} \textbf{if} $r = h_x (v) + 1$ \textbf{then} send $\langle \beta \rangle$ to all children \vspace{.02in}	\label{algPij:send}
\State {\bf receive [lines~\ref{algPij:receive-Start}-\ref{algPij:receive-End}]:}  
\If{$r = h_x (v)$}	\label{algPij:receive-Start}
	\State let $\mathcal{M}$ be the incoming message to $v$ 
	\State let the sender be $w$ and let $M = \langle \beta_w \rangle$ and 
	\State {\bf if} $w$ is a parent of $v$ in $T_x$ {\bf then} $\beta \leftarrow \beta + \beta_w$ 	\label{algPij:update}
\EndIf	\label{algPij:receive-End}
\State \textbf{Local Step at $v$:} \textbf{if} $v$ is a leaf node \textbf{and} $\beta \geq (1+\epsilon)^{j-1}$ \textbf{then} the path from $x$ to $v$ is in $P_{ij}$.
\end{algorithmic}  \label{algPij}
\end{algorithm} 

\begin{lemma}	\label{lemma:compute-Pij}
Using  {\sc Compute-$P_{ij}$} (Algorithm~\ref{algPij}), $P_{ij}$ can be computed in $O(h)$ rounds per source node.
\end{lemma}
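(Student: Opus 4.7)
The plan is to follow the same template used in the proof of Lemma~\ref{lemma:compute-Pi}, but with the boolean $flag$ replaced by an integer counter $\beta$. The core claim to establish is the invariant: at the end of round $r$, every node $v$ with $h_x(v)=r$ holds in $\beta$ exactly the number of vertices of $V_i$ on the tree-path in $T_x$ from $x$ to $v$ (endpoints included). Once this invariant is in hand, the leaf-test in the final local step of Algorithm~\ref{algPij} is exactly the condition $|V_i \cap p| \ge (1+\epsilon)^{j-1}$ defining membership of $p$ in $P_{ij}$, so correctness follows immediately.

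To prove the invariant, I would induct on $r$. For the base case $r=0$, the source $x$ executes Step~\ref{algPij:Step1}, which sets $\beta$ to $1$ if $x\in V_i$ and $0$ otherwise; this is exactly the count along the one-vertex path from $x$ to itself. For the inductive step, a node $v$ at hop-distance $r\ge 1$ receives, in round $r$, the message $\langle \beta_w\rangle$ sent by its unique parent $w$ in $T_x$ one round earlier via Step~\ref{algPij:send}. By the inductive hypothesis $\beta_w$ is the number of $V_i$-vertices on the path $x\rightsquigarrow w$; Step~\ref{algPij:update} then sets $\beta \leftarrow \beta_w + \mathbf{1}[v\in V_i]$, which is the corresponding count along $x\rightsquigarrow v$ and completes the induction. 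The tree structure of $T_x$ is essential here because it guarantees that $v$ has exactly one parent, so no spurious double-counting can occur.

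For the round bound, the loop runs for $r=1,\dots,h$, and every round each node sends at most one $\beta$-value to each child along its tree edges in $T_x$; since $\beta\le h\le n$, it fits in a single \textsc{Congest} word, and since we run the procedure for one source at a time, no congestion arises across different sources. Hence the algorithm terminates in $O(h)$ rounds per source, matching the lemma statement (and giving $O(|S|\cdot h)$ rounds in aggregate, matching how the lemma is later invoked inside Algorithm~\ref{algRandBlocker}).

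Honestly, there is no serious obstacle in this proof — it is a straightforward lifting of the $P_i$ argument from a $\vee$-aggregation to a $+$-aggregation along each tree path. The only point worth double-checking is that Step~\ref{algPij:receive-Start}--\ref{algPij:receive-End} correctly distinguishes the parent's message from any stray messages a node might receive from tree-siblings via shared edges in $G$; this is handled by the explicit test ``if $w$ is a parent of $v$ in $T_x$'' in Step~\ref{algPij:update}, so no auxiliary argument is needed beyond observing that $v$ already knows its parent in $T_x$ from the construction of the $h$-CSSSP in Step~\ref{computeCSSSP} of Algorithm~\ref{algAPSP}.
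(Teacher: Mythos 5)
Your proof is correct and follows essentially the same approach as the paper's: the paper's (much terser) argument also observes that after $h$ rounds each leaf $v$ knows the count of $V_i$-nodes on the root-to-$v$ path via the propagated $\beta$ values, and that the per-source cost is $h$ rounds. Your version simply makes the round-indexed induction and the parent-filtering detail explicit.
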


\begin{proof}
Fix a path $p$ from source $x$ to leaf node $v$.
After $h$ rounds, $v$ will know the number of nodes that belong to $V_i$ (using the $\beta$ values it received in Steps~\ref{algPij:receive-Start}-\ref{algPij:receive-End}).

The algorithm takes $h$ rounds per source $x$ and thus $P_{ij}$ can be computed in $O(|S|\cdot h)$ rounds in total (since we need to run the
algorithm for every source $x$).
\end{proof}

\subparagraph{\bf Algorithm for Computing $|P_{ij}|$.}	\label{sec:|Pij|}
Algorithm~\ref{alg|Pij|} describes our algorithm for computing 
Step 7(b) of Algorithm~\ref{algRandBlocker}, which computes the value of $|P_{ij}|$.
Let $P^{v}_{ij}$ represents the set of paths $p$ in $P_{ij}$ with $v$ as the leaf node.
Every node $v$ knows the set $P^{v}_{ij}$ after running the algorithm described in the previous section.
Our algorithm works as follows:
Every node $v$ first compute $|P^v_{ij}|$ (Step~\ref{alg|Pij|:Step1}) and then broadcast this value in Step~\ref{alg|Pij|:broadcast}.
Every node $v$ then compute $|P_{ij}|$ by summing up the values received in Step~\ref{alg|Pij|:broadcast} (Step~\ref{alg|Pij|:local}).

\begin{algorithm}
\caption{{\sc Compute-$|P_{ij}|$}}
\begin{algorithmic}[1]
\Statex Input: $P^v_{ij}$: paths in $P_{ij}$ with $v$ as the leaf node
\State \textbf{Local Step at $v \in V$:} set $\alpha_{P^v_{ij}} \leftarrow |P^v_{ij}|$	\label{alg|Pij|:Step1}
\State \textbf{For each $v \in V$:} Broadcast $ID (v)$ and the value $\alpha_{P^v_{ij}}$. \label{alg|Pij|:broadcast}
\State \textbf{Local Step at $v \in V$:} $|P_{ij}| \leftarrow \sum_{v' \in V} \alpha_{P^{v'}_{ij}}$	\label{alg|Pij|:local}
\end{algorithmic}  \label{alg|Pij|}
\end{algorithm} 

\begin{lemma}	\label{lemma:|Pij|}
{\sc Compute-$|P_{ij}|$} (Algorithm~\ref{alg|Pij|}) computes $|P_{ij}|$ in $O(n)$ rounds.
\end{lemma}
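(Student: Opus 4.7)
The plan is to establish two things: (i) the value computed in Step~\ref{alg|Pij|:local} of Algorithm~\ref{alg|Pij|} really equals $|P_{ij}|$, and (ii) the only communication step, Step~\ref{alg|Pij|:broadcast}, can be carried out in $O(n)$ rounds; Steps~\ref{alg|Pij|:Step1} and~\ref{alg|Pij|:local} are purely local and contribute nothing to the round count.

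For correctness, I would first observe that the family $\{P^v_{ij} : v \in V\}$ partitions $P_{ij}$. Every path in $P_{ij}$ is a root-to-leaf path in some tree $T_x$ of the $h$-CSSSP collection $\mathcal{C}$, and as a root-to-leaf path it has exactly one leaf endpoint $v \in V$; by definition this path lies in $P^v_{ij}$ and in no other $P^{v'}_{ij}$. Hence $|P_{ij}| = \sum_{v \in V} |P^v_{ij}|$. After Step~\ref{alg|Pij|:broadcast}, every node has received $\alpha_{P^{v'}_{ij}} = |P^{v'}_{ij}|$ from every other $v'$ (and knows its own value from Step~\ref{alg|Pij|:Step1}), so the local sum in Step~\ref{alg|Pij|:local} produces $|P_{ij}|$.

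For the round bound, Step~\ref{alg|Pij|:broadcast} is an instance of all-to-all broadcast in which each of the $n$ nodes disseminates a single pair consisting of its ID and the integer $\alpha_{P^v_{ij}}$. The integer $\alpha_{P^v_{ij}}$ is at most $|\mathcal{C}|\cdot h \leq n\cdot |S|\cdot h = \mathrm{poly}(n)$, so each such message fits in a constant number of \textsc{Congest} words. By the standard pipelined broadcast result (the same fact invoked as Lemma~\ref{lem:all-to-all-bc} earlier in the paper), $n$ single-word broadcasts from $n$ distinct sources complete in $O(n)$ rounds on any $n$-node network.

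The main point to check, rather than a genuine obstacle, is the partition claim: a single vertex $v$ may simultaneously be a leaf of many trees $T_x$, so one needs to confirm that the definition of $P^v_{ij}$ is understood as gathering, across all sources $x$, exactly those paths in $P_{ij}$ whose final vertex is $v$. With this reading, the (source, leaf) identity of a path uniquely determines which $P^v_{ij}$ contains it, so no path is double-counted when summing over $v$. Once this is verified, combining the two observations above yields the lemma.
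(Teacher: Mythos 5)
Your proposal is correct and matches the paper's own argument: the round bound comes from observing that Steps~\ref{alg|Pij|:Step1} and~\ref{alg|Pij|:local} are purely local and that Step~\ref{alg|Pij|:broadcast} is an all-to-all broadcast of $n$ single messages handled in $O(n)$ rounds by Lemma~\ref{lem:all-to-all-bc}. The paper's proof states only this round-complexity argument and leaves the correctness of the sum implicit, so your additional verification that the sets $P^v_{ij}$ partition $P_{ij}$ (each path being assigned to its unique leaf) is a harmless, valid elaboration rather than a different route.
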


\begin{proof}
Steps~\ref{alg|Pij|:Step1} and \ref{alg|Pij|:local} are local steps and involves no communication.
Step~\ref{alg|Pij|:broadcast} involves a broadcast of $n$ messages and takes $O(n)$ rounds using Lemma~\ref{lem:all-to-all-bc}.
\end{proof}

\subparagraph{\bf Remove Subtrees rooted at $z \in Z$.}	\label{sec:remove-subtree}
In this Section we describe a deterministic algorithm for 
implementing Step 15 of Algorithm~\ref{algRandBlocker}, which
removes subtrees rooted at nodes $z \in Z$ from the trees in the given $h$-CSSSP collection $\mathcal{C}$.
This algorithm (Algorithm~\ref{algRemoveSubtrees}) is quite simple and works as follows: 
Fix a source $x$ and let its corresponding tree in $\mathcal{C}$ be $T_x$.
Every node $z \in Z$ in $T_x$ send its ID to all its children in $T_x$ (Step~\ref{algRemoveSubtrees:Step1}).
Every node $v$ on receiving a message from its parent in $T_x$, forwards it to all its children and set the parent pointer in $T_x$ to $NIL$
(Step~\ref{algRemoveSubtrees:Step2}).

\begin{algorithm}
\caption{{\sc Remove-Subtrees:} Algorithm for Removing Subtrees rooted at $z \in Z$ for source $x$ at node $v$}
Input: $S$: set of sources; $\mathcal{C}$: $h$-CSSSP collection for set $S$
\begin{algorithmic}[1]
\State \textbf{(Round $\bm{0}$:)} \textbf{If} $v \in Z$ \textbf{then} send $\langle ID(v) \rangle$ to all children in $T_x$ and set $parent_x (v)$ to $NIL$. 	\label{algRemoveSubtrees:Step1}
\State \textbf{(Round $r > 0$:)} \textbf{If} $v$ received a message $M$ in round $r-1$ \textbf{then} set $parent_x (v)$ to $NIL$ and send $M$ to all children in $T_x$.	\label{algRemoveSubtrees:Step2}
\end{algorithmic}  \label{algRemoveSubtrees}
\end{algorithm}

\begin{lemma}	\label{lemma:remove}
Given a source $x \in S$ and tree $T_x \in \mathcal{C}$, then {\sc Remove-Subtrees} (Algorithm~\ref{algRemoveSubtrees}) removes all subtrees rooted at $z \in Z$ in $T_x$.
\end{lemma}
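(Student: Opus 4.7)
The plan is to argue by induction on the depth, within the tree $T_x$, of a node below the topmost ancestor in $Z$. Concretely, for a node $v$ in $T_x$, let $d(v)$ be the length of the shortest path in $T_x$ from $v$ up to some node $z \in Z$ (so $d(v)=0$ iff $v \in Z$, and $d(v)$ is undefined if no ancestor of $v$ in $T_x$ lies in $Z$). I will show by induction on $d(v)$ that if $d(v) = r$, then after round $r$ of Algorithm~\ref{algRemoveSubtrees} the node $v$ has $parent_x(v) = NIL$, which is exactly what it means for $v$ to be disconnected from $T_x$ by the removal operation.

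The base case $d(v) = 0$ is handled directly by Step~\ref{algRemoveSubtrees:Step1}: every $v \in Z$ locally sets $parent_x(v) \leftarrow NIL$ in round $0$ and sends $\langle ID(v) \rangle$ down to each of its children in $T_x$. For the inductive step, suppose $d(v) = r > 0$ and let $u$ be the parent of $v$ in $T_x$, so that $d(u) = r-1$. By the induction hypothesis, $u$ had its parent pointer set to $NIL$ by round $r-1$; moreover, the way $u$ came to do this (either Step~\ref{algRemoveSubtrees:Step1} at round $0$ if $u \in Z$, or Step~\ref{algRemoveSubtrees:Step2} at round $r-1$ because $u$ received a forwarded message at round $r-2$) is precisely the trigger for $u$ to send a message $M$ to each of its children in $T_x$ during round $r-1$. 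Thus $v$ receives $M$ in round $r-1$, and by Step~\ref{algRemoveSubtrees:Step2} of round $r$ it sets $parent_x(v) \leftarrow NIL$ and propagates the message further. This completes the induction and shows that every descendant of any $z \in Z$ in $T_x$ is removed.

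It remains to observe two sanity properties. First, no node $v$ that is not a descendant (in $T_x$) of some $z \in Z$ ever sets $parent_x(v) \leftarrow NIL$: messages are only ever sent by nodes that have themselves been disconnected, so a straightforward induction in the other direction shows that the set of ever-messaging nodes equals the union of the subtrees rooted at the nodes of $Z$. Second, the procedure terminates within $O(h)$ rounds, since $T_x$ has height at most $h$ and the propagation follows tree edges downward one hop per round; this also matches the round budget needed by the surrounding algorithm. The statement of the lemma then follows.

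The only mildly delicate point I expect to verify carefully in the formal write-up is the case where $Z$ contains both an ancestor $z$ and one of its descendants $z'$ in $T_x$: here $z'$ may receive a forwarded message after it has already self-nullified in round $0$, but this is harmless because setting $parent_x(z') \leftarrow NIL$ a second time is idempotent, and both messages induce the same downward propagation inside the subtree rooted at $z'$. Thus the removal is correct regardless of the relative positions of nodes in $Z$ within $T_x$.
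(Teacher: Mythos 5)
Your proof is correct and follows essentially the same approach as the paper's: the removal message propagates down the tree one hop per round, so every descendant of a node in $Z$ receives it and nullifies its parent pointer within $h$ rounds. Your write-up merely formalizes this as an explicit induction on depth and adds the (easy) converse and idempotency observations that the paper leaves implicit.
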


\begin{proof}
Every $z \in Z$ in $T_x$ removes its parent pointer in $T_x$ in Step~\ref{algRemoveSubtrees:Step1}.
Any node $v \in V$ that lies in the subtree rooted at a $z \in Z$ in $T_x$ would have received a message with $ID(z)$ from its parent by 
$h$ rounds (since height of $T_x$ is at most $h$) and hence would have set its parent pointer to NIL in Step~\ref{algRemoveSubtrees:Step2}. 
\end{proof}

\begin{lemma}	\label{lemma:runTime}
{\sc Remove-Subtrees} (Algorithm~\ref{algRemoveSubtrees}) requires at most $h$ rounds per source node $x \in S$.
\end{lemma}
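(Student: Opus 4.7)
The plan is to show that every update required by \textsc{Remove-Subtrees} happens within $h$ rounds by a straightforward depth-based induction along the tree $T_x$, combined with the observation that $T_x$ has height at most $h$.

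First I would fix an arbitrary source $x \in S$ and an arbitrary node $u$ that must eventually set $\mathit{parent}_x(u) \leftarrow \mathit{NIL}$, i.e.\ $u$ lies in the subtree rooted at some $z \in Z$ inside $T_x$ (including the case $u = z$). Since $T_x$ is a tree of height at most $h$ (it is one of the $h$-hop trees in the $h$-CSSSP collection $\mathcal{C}$), the depth $d_x(u)$ of $u$ in $T_x$ satisfies $d_x(u) \le h$, and the depth of $z$ in $T_x$ is at most $d_x(u)$, so the distance from $z$ down to $u$ in $T_x$ is at most $h - d_x(z) \le h$.

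Next I would argue, by induction on the depth along the path from $z$ to $u$ in $T_x$, that the message originating at $z$ reaches $u$ by round $d_x(u) - d_x(z) \le h$. The base case is Step~\ref{algRemoveSubtrees:Step1}: at round $0$, node $z$ sends $\langle ID(z)\rangle$ to all of its children in $T_x$ and nullifies its own parent pointer. For the inductive step, suppose an ancestor $w$ of $u$ on the $z$-to-$u$ path in $T_x$ received (or generated) the message in round $r-1$; then by Step~\ref{algRemoveSubtrees:Step2} (or Step~\ref{algRemoveSubtrees:Step1} for $w = z$), $w$ forwards the message to its children in round $r$, so the child of $w$ on the path to $u$ receives it in round $r$ and sets its parent pointer to $\mathit{NIL}$. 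Iterating this, $u$ is processed no later than round $h$.

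The only nontrivial issue to address is \emph{congestion}: the same node could be reached by messages sent on behalf of several ancestors in $Z$, or a node could be in $Z$ while also lying below another $z' \in Z$. However, the only information a receiver ever uses from an incoming message is the fact that its parent has been cut off, and the algorithm's progress bound depends only on how many rounds it takes for \emph{some} cut-off-announcement to reach each affected node. Each edge of $T_x$ carries at most one such announcement per round (the forwarded message from parent to child), so no bandwidth conflict arises and the per-round advancement of exactly one tree level is preserved. Combined with the inductive argument above, this yields that after $h$ rounds all descendants in $T_x$ of every $z \in Z$ have executed the required update, giving the claimed $h$-round bound per source $x \in S$.
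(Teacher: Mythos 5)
Your proof is correct and follows the same approach as the paper's (which simply observes that $T_x$ has height at most $h$, so the cut-off announcement from any $z \in Z$ reaches every descendant within $h$ rounds). Your added discussion of the depth induction and of congestion on tree edges just makes explicit what the paper leaves implicit.
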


\begin{proof}
Since the height of $T_x$ is at most $h$, any node $v \in V$ which lies in the subtree rooted at a $z \in Z$ will receive the message from 
$z$ by $h$ rounds. 
This establishes the lemma.
\end{proof}

\subsubsection{\bf Correctness of Algorithm~\ref{algRandBlocker}}

Similar to~\cite{BRS94} we get the following 
Lemmas~\ref{lemma:A}-\ref{lemma:BlockerSize}
which give us a bound on the number of 
selection steps and a bound on the size of $Q$.

\begin{lemma}\label{lemma:A}
The set $A$ constructed in Step~\ref{pick2} 
is a good set with probability at least $1/8$.
\end{lemma}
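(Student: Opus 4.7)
The plan is to bound the failure probability of each of the two defining properties of a good set separately and then union bound. Let $X_v = \mathbf{1}[v \in A]$ for $v \in V_i$; these are pairwise independent Bernoullis with parameter $p := \delta/(1+\epsilon)^j$. For a path $\pi \in P_i$, let $Y_\pi = \sum_{v \in \pi \cap V_i} X_v$, so that $|A| = \sum_v X_v$ and the two coverage counts are $Z_1 = |\{\pi \in P_i : Y_\pi \geq 1\}|$ and $Z_2 = |\{\pi \in P_{ij} : Y_\pi \geq 1\}|$. The goal is to show $Z_1 \geq |A|(1+\epsilon)^i(1-3\delta-\epsilon)$ and $Z_2 \geq (\delta/2)|P_{ij}|$ each hold with constant probability.

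Two ingredients do most of the work: the pointwise Bonferroni bound $\mathbf{1}[Y \geq 1] \geq Y - \binom{Y}{2}$, and Chebyshev's inequality (which only requires pairwise independence). First I would handle $Z_2$: Bonferroni combined with the invariant $|\pi \cap V_i| \in [(1+\epsilon)^{j-1}, (1+\epsilon)^j]$ for $\pi \in P_{ij}$ gives $\Pr[Y_\pi \geq 1] \geq \delta/(1+\epsilon) - \delta^2/2$, so $\mathbb{E}[Z_2] \geq (3/4)\delta |P_{ij}|$ for $\delta, \epsilon \leq 1/12$. The variance of $Z_2$ is controlled by the degree cap $score^{ij}(v) \leq (\delta^3/(1+\epsilon))|P_{ij}|$ enforced by the \textbf{else}-branch of Step~\ref{if}: this limits how many pairs of paths in $P_{ij}$ share a common vertex in $V_i$, so Chebyshev yields $Z_2 \geq (\delta/2)|P_{ij}|$ with constant probability.

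For $Z_1$, applying Bonferroni pathwise gives $Z_1 \geq \sum_{v \in A} score(v) - \sum_{\pi \in P_i}\binom{Y_\pi}{2}$. Since $score(v) \geq (1+\epsilon)^{i-1}$ for $v \in V_i$, the first sum is at least $|A|(1+\epsilon)^i/(1+\epsilon)$. Moreover, because each $\pi \in P_i$ contains at least one $V_i$-vertex of $score \leq (1+\epsilon)^i$, we have $|P_i| \leq |V_i|(1+\epsilon)^i$, and pairwise independence gives $\mathbb{E}\bigl[\sum_\pi \binom{Y_\pi}{2}\bigr] \leq (\delta^2/2)|P_i|$, which is small compared to the $\approx 3\delta\cdot|A|(1+\epsilon)^i$ slack between $1/(1+\epsilon)$ and $1-3\delta-\epsilon$. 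Applying Markov's inequality to the overcount and Chebyshev to $|A|$ (using $\mathrm{Var}(|A|) = p(1-p)|V_i|$) completes the lower bound on $Z_1$ with constant probability.

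The hard part will be the joint concentration analysis: since only pairwise independence is available, Chernoff bounds are unavailable and the argument must rely entirely on second-moment methods. Making the Markov and Chebyshev failure probabilities combine via a single union bound to yield the claimed $1/8$ requires calibrating the constants $\delta, \epsilon \leq 1/12$ and the threshold $\delta^3/(1+\epsilon)$ in the \textbf{if} test precisely enough that all the slack terms absorb all the deviation terms, exactly as in the NC argument of~\cite{BRS94}.
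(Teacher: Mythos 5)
Your overall plan---the Bonferroni truncation $\mathbf{1}[Y\geq 1]\geq Y-\binom{Y}{2}$, second-moment bounds exploiting pairwise independence, and a union bound over the two good-set properties---is exactly the route the paper takes (it proves the two coverage properties as separate sublemmas, for $P_i$ with probability $1/2$ and for $P_{ij}$ with probability $5/8$, and combines them). However, your treatment of the $P_{ij}$ property contains a step that does not go through as stated. You propose to lower-bound $\Pr[Y_\pi\geq 1]$ per path, conclude $\mathbb{E}[Z_2]\geq(3/4)\delta|P_{ij}|$, and then apply Chebyshev to $Z_2=\sum_{\pi}\mathbf{1}[Y_\pi\geq 1]$, controlling $\mathrm{Var}[Z_2]$ by counting pairs of paths that share a $V_i$-vertex. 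That variance bound implicitly assumes the coverage indicators of two paths with disjoint $V_i$-vertex sets are uncorrelated. Under mere pairwise independence of the $X_v$'s this is false: $\mathbf{1}[Y_\pi\geq 1]$ is a nonlinear function of several $X_v$'s, and pairwise independence says nothing about the joint distribution of three or more of them, so $\mathrm{Cov}(\mathbf{1}[Y_\pi\geq 1],\mathbf{1}[Y_{\pi'}\geq 1])$ need not vanish even for vertex-disjoint paths. The paper sidesteps this by never applying Chebyshev to the indicator sum: it writes $Z_2\geq Y_3-Y_4$ with $Y_3=\sum_{\pi\in P_{ij}}\sum_{v\in\pi\cap V_i}X_v$ and $Y_4$ the pair term, applies Chebyshev to the \emph{linear} statistic $Y_3$ (whose variance equals $\sum_{v}n_{v,P_{ij}}^2\,\mathrm{Var}[X_v]$ under pairwise independence---this is where the degree cap $n_{v,P_{ij}}<(\delta^3/(1+\epsilon))|P_{ij}|$ from the failed \textbf{if} test enters), and applies Markov to $Y_4$, which needs only an expectation. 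You should restructure your $Z_2$ argument the same way.

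A second, more minor issue is in your $Z_1$ overcount: the bound $\mathbb{E}\bigl[\sum_\pi\binom{Y_\pi}{2}\bigr]\leq(\delta^2/2)|P_i|$ combined with $|P_i|\leq|V_i|(1+\epsilon)^i$ gives only $(\delta^2/2)|V_i|(1+\epsilon)^i$, which is a factor $(1+\epsilon)^j$ too large to be absorbed by the available slack $\approx 3\delta\,|A|\,(1+\epsilon)^i\approx 3\delta^2|V_i|(1+\epsilon)^{i-j}$. The fix is to bound $\sum_\pi\binom{n_{V_i,\pi}}{2}\leq\frac{(1+\epsilon)^j}{2}\sum_\pi n_{V_i,\pi}=\frac{(1+\epsilon)^j}{2}\sum_{v\in V_i}score(v)\leq\frac{(1+\epsilon)^{i+j}}{2}|V_i|$, i.e., convert one factor into a vertex-degree sum rather than bounding each path's contribution by its maximum and multiplying by the path count; this recovers the paper's $\mathbb{E}[Y_2]\leq(\delta^2/2)|V_i|(1+\epsilon)^{i-j}$, after which Markov on $Y_2$ and Chebyshev on $|A|$ finish the argument as you describe.
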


\begin{lemma}\label{lemma:selection}
The while loop in Steps~\ref{startWhile}-\ref{endWhile} runs for at most $O\left(\frac{\log^3 n}{\delta^3\cdot \epsilon^2}\right)$ iterations in total.
\end{lemma}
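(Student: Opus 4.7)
The plan is to bound the total iterations by multiplying a per-$(i,j)$ bound on the while-loop iterations by the total number of $(i,j)$ stage/phase pairs. Fix a stage $i$ and phase $j$, and let $|P_{ij}|$ denote the current count of uncovered paths in $P_{ij}$ at the start of some while-loop iteration. I would argue that every iteration of the while loop shrinks $|P_{ij}|$ by at least a constant factor: in the \textbf{if} branch (Step~\ref{algRandBlocker:pick1}), the vertex $c$ added to $Q$ satisfies $score^{ij}(c) > (\delta^3/(1+\epsilon))\cdot|P_{ij}|$, so removing the subtrees rooted at $c$ (Step~\ref{algRandBlocker:remove-Subtrees}) discards strictly more than a $\delta^3/(1+\epsilon)$ fraction of $P_{ij}$; in the \textbf{else} branch, only good sets $A$ are added to $Q$ (Step~\ref{algRandBlocker:checkA}), and by Definition~\ref{def:good-set} any such $A$ covers at least a $\delta/2$ fraction of $P_{ij}$. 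Since $\delta\leq 1/12$, the if-branch factor $\delta^3/(1+\epsilon)$ is the smaller of the two, so each iteration leaves at most a $(1-\delta^3/(1+\epsilon))$ fraction of the paths in $P_{ij}$.

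Next I would bound the initial size of $P_{ij}$. Since $P_{ij}\subseteq P_i$ and $P_i$ is a subset of all root-to-leaf paths in the $h$-CSSSP collection $\mathcal{C}$ for source set $S$, we have $|P_{ij}|\leq n\cdot|S|\leq n^2$. Combining this with the per-iteration shrinkage and using $(1-x)^k \leq e^{-xk}$, the while loop for a fixed $(i,j)$ terminates once $|P_{ij}|<1$, which requires at most
\[
k \;\leq\; \frac{\ln n^2}{-\ln(1-\delta^3/(1+\epsilon))} \;=\; O\!\left(\frac{\log n}{\delta^3}\right)
\]
iterations.

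Finally, I would count the $(i,j)$ pairs. Stage $i$ ranges from $\log_{1+\epsilon} n^2$ down to $1$, giving $O(\log_{1+\epsilon} n^2) = O(\log n/\epsilon)$ stages (using $\log(1+\epsilon)=\Theta(\epsilon)$ for $\epsilon\leq 1$). Within each stage, phase $j$ ranges from $\log_{1+\epsilon} h$ down to $1$, giving $O(\log n/\epsilon)$ phases per stage, for a total of $O(\log^2 n/\epsilon^2)$ pairs $(i,j)$. Multiplying by the per-pair bound yields the claimed
\[
O\!\left(\frac{\log n}{\delta^3}\right)\cdot O\!\left(\frac{\log^2 n}{\epsilon^2}\right) \;=\; O\!\left(\frac{\log^3 n}{\delta^3\,\epsilon^2}\right)
\]
total iterations. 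The only nontrivial step is confirming the shrinkage in the else branch, but it is essentially immediate from Definition~\ref{def:good-set}; the rest is bookkeeping on the nested ranges of $i$ and $j$.
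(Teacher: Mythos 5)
Your proposal is correct and follows essentially the same route as the paper's proof: each while-loop iteration covers at least a $\delta^3/(1+\epsilon)$ fraction of $P_{ij}$ (the smaller of the two branch guarantees), the initial number of paths is at most $n^2$, and multiplying the resulting $O(\log n/\delta^3)$ per-phase bound by the $O(\log^2 n/\epsilon^2)$ stage/phase pairs gives the claim. No substantive differences to report.
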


We provide the full proof of Lemmas~\ref{lemma:A} and \ref{lemma:selection} in Sec~\ref{sec:correctness-rand-blocker}.

\begin{lemma}\label{lemma:BlockerSize}
The blocker set $Q$ constructed by Algorithm~\ref{algRandBlocker} has size $O(n\log n / h)$.
\end{lemma}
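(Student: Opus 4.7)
The plan is to bound $|Q|$ by combining the iteration-count bound of Lemma~\ref{lemma:selection} with a charging argument that uses the good-set property of Definition~\ref{def:good-set} to control the size of each batch added to $Q$. The intuition matches the $O(\log n)$-approximation analysis of Berger et al.~\cite{BRS94}: each selection step covers a constant fraction of the currently ``hard'' paths, while spending at most $\tilde O(\text{OPT})$ nodes per step.

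First I would split the nodes added to $Q$ into two groups: $Q_1$, containing the single vertices $c$ added in Step~\ref{algRandBlocker:pick1}, and $Q_2$, containing the members of every good set $A$ added in Step~\ref{algRandBlocker:checkA}. By Lemma~\ref{lemma:selection} the while loop of Steps~\ref{startWhile}--\ref{endWhile} runs at most $T=O(\log^3 n/(\delta^3\epsilon^2))$ times summed over all $(i,j)$, and each such iteration contributes at most one element to $Q_1$, so $|Q_1|=O(\log^3 n)$ is a lower-order term.

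Next I would bound $|Q_2|$ by a weighted path-charging argument. Fix a stage $i$, and consider a good-set iteration in that stage. By Definition~\ref{def:good-set} the set $A$ added in this iteration covers at least $|A|\cdot(1+\epsilon)^i\cdot(1-3\delta-\epsilon)$ paths in $P_i$, all of which are then deleted from $\mathcal{C}$ in Step~\ref{algRandBlocker:remove-Subtrees} and never charged again. Rearranging gives
\[
|A|\;\le\;\frac{\#\{\text{paths in }P_i\text{ newly covered}\}}{(1+\epsilon)^i\,(1-3\delta-\epsilon)},
\]
and summing this inequality across the good-set iterations of stage $i$ bounds the total contribution to $|Q_2|$ from stage $i$ by (paths removed in stage $i$)$/((1+\epsilon)^i\,(1-3\delta-\epsilon))$. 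Combined with the fact (shown by induction on the outer for-loop) that each path is removed in exactly one stage, so the stage-wise totals partition the at most $n\cdot|S|$ paths of $\mathcal{C}$, one obtains $|Q_2|=O(\log n)\cdot \mathrm{OPT}$ in the style of Berger et al., where $\mathrm{OPT}$ is the size of the minimum blocker set for $\mathcal{C}$. Finally, since the deterministic $O((n/h)\log n)$ blocker set construction of~\cite{ARKP18,AR19} establishes $\mathrm{OPT}=O((n/h)\log n)$, plugging this in yields $|Q|=\tilde O(n/h)$, i.e.\ $O(n\log n/h)$ in the notation of the lemma.

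The main obstacle is the last accounting step: turning the per-stage inequality into an $O((n/h)\log n)$ global bound requires carefully reconciling the geometric weights $(1+\epsilon)^i$ with the partition of paths across stages. A naive summation (bounding paths removed per stage by $n|S|$ and summing $\sum_i (1+\epsilon)^{-i}$) loses a factor of $n$; the gain comes from the fact that when stage $i$ does remove many paths, those paths must each contain a vertex of score $\ge(1+\epsilon)^{i-1}$, which forces $(1+\epsilon)^i$ to be large relative to the residual instance. Making this trade-off rigorous, together with the $h$-uniform structure of the hyperedges (every root-to-leaf $h$-path has exactly $h$ vertices), is where the $\tilde O(n/h)$ bound is extracted and is the delicate part of the argument.
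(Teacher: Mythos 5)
Your decomposition of $Q$ into the singleton additions of Step~\ref{algRandBlocker:pick1} (bounded via Lemma~\ref{lemma:selection}) and the good-set additions, together with the per-iteration inequality $|A|\le \#\{\text{newly covered paths in }P_i\}/\bigl((1+\epsilon)^i(1-3\delta-\epsilon)\bigr)$, matches the paper's proof. The problem is the global accounting for the good-set nodes, which you explicitly leave open (``making this trade-off rigorous \dots is the delicate part''). The paper does not close it the way you propose --- via an $O(\log n)$-approximation ratio against $\mathrm{OPT}$ followed by the bound $\mathrm{OPT}=O((n/h)\log n)$. That route, even if completed, only yields $O((n/h)\log^2 n)$, which is $\tilde O(n/h)$ but a $\log n$ factor weaker than the $O(n\log n/h)$ the lemma claims; and the completion itself (reconciling the weights $(1+\epsilon)^{-i}$ with the per-stage partition of paths) is exactly the step you have not supplied.

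The missing ingredient is a direct comparison to the \emph{greedy} blocker-set bound of~\cite{King99,ARKP18}, which is $\Theta(n\ln p/h)$ for $p$ paths. At any selection step the maximum score over all nodes is at most $(1+\epsilon)^i$ (the stage invariant), so each node of a good set $A$ covers on average at least a $(1-3\delta-\epsilon)$ fraction of what the best single greedy pick would cover at that moment. Since every remaining path contributes $h$ vertices among $n$, the maximum score is also at least $(h/n)$ times the number of currently uncovered paths; hence adding $|A|$ good-set nodes multiplies the residual path count by at most $\bigl(1-|A|(1-3\delta-\epsilon)h/n\bigr)$. Iterating this until no paths remain gives at most $n\ln p/\bigl((1-3\delta-\epsilon)h\bigr)=O(n\log n/h)$ good-set nodes in total, with no stage-by-stage partition of paths and no reference to $\mathrm{OPT}$. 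This averaging bound $\max_v score(v)\ge (h/n)\cdot|\{\text{uncovered paths}\}|$ is precisely what makes the geometric weights $(1+\epsilon)^i$ cancel, and it is the idea your argument is missing.
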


\begin{proof}
As shown in~\cite{King99, ARKP18} the size of the blocker set computed by an optimal greedy algorithm is
 $\Theta\left(\frac{n\ln p}{h}\right)$, where $p$ is the number of paths 
that need to be covered.
We will now argue that the blocker set constructed by Algorithm~\ref{algRandBlocker} is at most a factor of $\frac{1}{(1- 3\delta -\epsilon)}$ larger than the
greedy solution, thus showing that the constructed blocker set $Q$ has size at most $O\left(\frac{n\ln p}{h}\cdot \frac{1}{(1- 3\delta -\epsilon)}\right) = 
\tilde{O}\left(\frac{n\log n}{h}\right)$ since $p \leq n^2$ and $0 < \delta, \epsilon \leq \frac{1}{12}$.

The blocker set $Q$ constructed by Algorithm~\ref{algRandBlocker} has 2 types of nodes: (1) node $c$ added in Step~\ref{algRandBlocker:pick1}, (2) set of nodes 
$A$ added in Step~\ref{pick2}.
Since the while loop in Steps~\ref{startWhile}-\ref{endWhile} runs for at most $O\left(\frac{\log^3 n}{\delta^3\cdot \epsilon^2} \right)$ iterations (by Lemma~\ref{lemma:selection}), hence there are at most 
$O\left(\frac{\log^3 n}{\delta^3\cdot \epsilon^2} \right)$ nodes of type 1.
Since $\frac{\log^3 n}{\delta^3\cdot \epsilon^2} = o(\frac{n}{h})$, hence we only need to bound the number of nodes added in 
Steps~\ref{pick2}-\ref{algRandBlocker:checkA}.

Since $A$ is a good set, by Lemma~\ref{lemma:A} the number of paths covered by $A$ is at least 
$|A|\cdot (1+\epsilon)^i\cdot (1-3\delta-\epsilon)$, where $(1+\epsilon)^i$ is the maximum possible score value 
across all nodes in $V$ (in the current iteration).
Since maximum possible score value is $(1+\epsilon)^i$, any greedy solution must add at least 
$|A|\cdot (1-3\delta-\epsilon)$ nodes in the blocker set to cover these paths. 
Hence the choice of $A$ is at most a factor of $\frac{1}{(1-3\delta-\epsilon)}$ larger than the greedy solution.
This establishes the lemma.
 \end{proof}

\begin{lemma}	\label{runTime}
Algorithm~\ref{algRandBlocker} computes the blocker set $Q$ in $\tilde{O}(|S|\cdot h/ (\epsilon^2 \delta^3))$ rounds,
in expectation.
\end{lemma}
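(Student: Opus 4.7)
The plan is to multiply the total number of iterations of the inner while-loop (bounded by Lemma~\ref{lemma:selection}) by the expected per-iteration cost, and then absorb the one-time and once-per-stage overheads. First I would check the per-step cost inside one pass through the body of the while-loop (Steps~\ref{startWhile}--\ref{endWhile}): Step~\ref{Pij} costs $O(|S|\cdot h)$ (Lemma~\ref{lemma:compute-Pij}) for part (a) and $O(n)$ (Lemma~\ref{lemma:|Pij|}) for part (b); Step~\ref{computeScoreij} costs $O(|S|\cdot h)$ by reusing the score-computation subroutine of~\cite{ARKP18} together with an $O(n)$ broadcast (Lemma~\ref{lem:all-to-all-bc}); Step~\ref{algRandBlocker:remove-Subtrees} costs $O(|S|\cdot h)$ by Lemma~\ref{lemma:runTime} applied once per source; and the recomputation in Step~\ref{reconstruct} is $O(|S|\cdot h)$ by the same lemmas used for Steps~\ref{Vi} and \ref{Pi}. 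Since the setting of interest has $|S|\cdot h \geq n$ (and otherwise the $O(n)$ terms are absorbed in $\tilde{O}$), one deterministic pass through the while-body costs $O(|S|\cdot h)$ rounds.

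Next I would fold in the randomization. Steps~\ref{pick2}--\ref{algRandBlocker:checkA} may be retried until the sampled set $A$ is good; by Lemma~\ref{lemma:A}, each independent attempt succeeds with probability at least $1/8$, so the expected number of attempts is at most $8=O(1)$. Each attempt is a local sampling (Step~\ref{pick2}), an $O(n)$ broadcast (Step~\ref{algRandBlocker:broadcastA}), and a local check (Step~\ref{algRandBlocker:checkA}), hence $O(n)$ rounds per attempt and thus $O(n)$ expected rounds total for the retry block. This keeps the expected per-iteration cost at $O(|S|\cdot h)$.

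I then count the iterations. The outer for-loop runs for $\log_{1+\epsilon} n^2 = O(\log n/\epsilon)$ stages; within each stage, Steps~\ref{Vi}--\ref{Pi} contribute a one-time $O(n + |S|\cdot h) = O(|S|\cdot h)$ cost, so across all stages this pre-while overhead is $O(|S|\cdot h \cdot \log n /\epsilon)$. The phase loop itself adds no cost beyond its contained while iterations. Crucially, Lemma~\ref{lemma:selection} already bounds the \emph{total} number of while-loop iterations, summed across all stages and phases, by $O(\log^3 n/(\delta^3\epsilon^2))$. Multiplying this by the expected $O(|S|\cdot h)$ per-iteration cost and adding Step~\ref{computeScoreV}'s one-time $O(|S|\cdot h)$ cost together with the per-stage overhead yields
\[
O\!\left(\frac{\log^3 n}{\delta^3 \epsilon^2} \cdot |S|\cdot h\right) \;=\; \tilde{O}\!\left(\frac{|S|\cdot h}{\epsilon^2 \delta^3}\right)
\]
expected rounds, as required.

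The main obstacle will be the bookkeeping that ensures the $O(1)$-expected retry overhead composes cleanly with the deterministic per-iteration cost (so we can truly claim $O(|S|\cdot h)$ expected rounds per while iteration rather than per \emph{attempt}), and verifying that the $O(n)$ terms arising from Steps~\ref{algRandBlocker:|Pij|}, \ref{algRandBlocker:broadcastA}, and the broadcasts within Step~\ref{computeScoreij} are subsumed by $|S|\cdot h$ in the regimes where this algorithm is invoked (or otherwise hidden by $\tilde{O}$). Once those accounting steps are written out, the bound follows directly from Lemmas~\ref{lemma:A} and~\ref{lemma:selection}.
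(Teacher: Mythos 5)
Your proposal is correct and follows essentially the same route as the paper: the paper also charges $O(|S|\cdot h)$ expected rounds per while-loop iteration (absorbing the $O(1)$ expected retries for a good set $A$ via Lemma~\ref{lemma:A} and the $O(n)$ broadcast costs), and combines this with the iteration counts from Lemma~\ref{lemma:selection}, merely organizing the count as nested per-stage and per-phase bounds rather than using the single global bound of $O(\log^3 n/(\delta^3\epsilon^2))$ as you do. The two accountings are arithmetically identical, and your flagged concerns (independence of retries, absorption of $O(n)$ terms) are handled the same way in the paper's proof of the per-phase cost (Lemma~\ref{innerForloop}).
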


\begin{proof}
Step~\ref{computeScoreV} runs in $O(|S|\cdot h)$ rounds~\cite{ARKP18}. 
 The for loop in Steps~\ref{startFor1}-\ref{endFor1} runs for $\log_{1+\epsilon} n^2 = O\left(\log n / \epsilon \right)$ 
iterations.
Each iteration 
takes $\tilde{O}\left(|S|\cdot h / (\epsilon \delta^3) \right)$ rounds in
expectation:
Step~\ref{Pi} is readily seen to run in $O(|S|\cdot h)$ rounds 
(Lemma~\ref{lemma:compute-Pi}).
The inner for loop in Steps~\ref{startFor2}-\ref{endFor2} runs for $\log_{1+\epsilon} h = O\left(\log n / \epsilon \right)$
iterations, with each iteration taking $\tilde{O}\left(|S|\cdot h / \delta^3 \right)$ rounds in expectation 
(Lemma A.12).
\end{proof}

\subsection{Deterministic   Blocker Set Algorithm}	\label{sec:deterministic-blocker}

The only place where randomization is used in Algorithm~\ref{algRandBlocker} is in 
Steps~\ref{pick2}-\ref{algRandBlocker:checkA}, where a good set $A$ (see Definition~\ref{def:good-set}) is chosen.
Fortunately, the $X_v$'s are pairwise-independent random variables, where $X_v = 1$ if $v\in A$ and $0$ otherwise.  We
use an $O(n)$ size
sample space~\cite{LW06,Luby93} for generating pairwise independent random variables 
and then 
 find a good sample point(i.e., a good set $A$)  in this $O(n)$-sized sample space in $O(|S|\cdot h + n)$ rounds.
This $O(n)$ size sample space can be generated locally at each node by going over all $0$-$1$ strings of length 
about $\log n$ using the techniques in~\cite{LW06,Luby93}.
We provide more details about the construction of this sample space in Appendix~\ref{sec:sample-space}.

\begin{algorithm*}	
\caption{Deterministic Algorithm for picking good set $A$}
\begin{algorithmic}[1]
\Statex Input: $h$: number of hops; $S$: set of source nodes; $\mathcal{C}$: $h$-CSSSP collection; $X^{(\mu)}$: $\mu$-th vector in sample space 
\State \textbf{For each $x \in S$ in sequence:} Collect  at each $v$ the ids of the nodes on the path ending at leaf node $v$ in $T_x$ (using Alg. 4 in~\cite{ARKP18}). (This computes both $P^v_i$ and $P^v_{ij}$) \label{algDet:Anc} 
\State Compute BFS in-tree $T$ rooted at leader $l$. \label{algDet:init}
\State Compute $\sigma_{P_i, u}^{(\mu)}$ and $\sigma_{P_{ij}, u}^{(\mu)}$ terms locally at each $v \in V$, for each sample point $\mu$, and then using the pipelined algorithm in Sec. A.5, send these values to the leader $l$.	\label{algDet:compute}
\State \textbf{Local Step at $\bm{l}$:} For each $1\leq \mu\leq n$, compute $\nu_{P_i}^{(\mu)}$ and $\nu_{P_{ij}}^{(\mu)}$. Let $\mu'$ be such that $X^{(\mu^{'})}$ corresponds to a good set $A$. \label{algDet:B} 
\State Node $l$ broadcast $X^{(\mu')}$ values. (This corresponds to good set $A$) \label{algDet:broadcast}
\end{algorithmic} \label{algDet}
\end{algorithm*}

Algorithm~\ref{algDet}, our derandomized algorithm, works as follows.
Each vertex $v$ generates the sample set locally.
Let $P_i^v$ and $P_{ij}^v$ denote the set of paths in $P_i$ and $P_{ij}$, respectively, that have $v$ as the leaf node.
Initially every node $v$ determines these sets $P_i^v$ and $P_{ij}^v$, which can be done in $O(|S|\cdot h)$ rounds.
Then for each sample point $\mu$, every node $v$
locally
computes the number of paths in sets $P_i^v$ and $P_{ij}^v$ covered by $\mu$.
Each vertex sends its
computed values
for all sample points
  to a leader node
$l \in V$  which then computes the total number of paths covered in both $P_i$ and $P_{ij}$
for every 
sample point $\mu$
and picks 
one that satisfies the good set criterion (Definition~\ref{def:good-set}).
Such a set is guaranteed to exist from Lemma~\ref{lemma:A}.

To compute sets $P_i^v$ and $P_{ij}^v$, we
collect the ids of the vertices in each $P_i$ at the leaf node of $P_i$, for each tree $T_x$ 
in turn, 
using the {\sc Ancestors} algorithm in~\cite{ARKP18} (Step~\ref{algDet:Anc}, Alg.~\ref{algDet}).
We then create an incoming BFS tree rooted at $l$ (Step~\ref{algDet:init}, Alg.~\ref{algDet}).
We assume that the $X$ values are enumerated in order and every node knows this enumeration.
Let $X^{(\mu)}$ refer to the $\mu$-th vector in this enumeration
and let $\sigma^{(\mu)}_{P_i, v}$ and  $\sigma^{(\mu)}_{P_{ij}, v}$ refer to the number of paths covered by $X^{(\mu)}$
in sets $P_i^v$ and $P_{ij}^v$ respectively.
Similarly let $\nu^{(\mu)}_{P_i}$ and  $\nu^{(\mu)}_{P_{ij}}$ refer to the total number of paths covered by $X^{(\mu)}$
in sets $P_i$ and $P_{ij}$ respectively.
In Step~\ref{algDet:compute} (Alg.~\ref{algDet}), the leader $l$
receive sums of the $\nu_{P_i, u}$ and $\nu_{P_{ij}, u}$ values for all sample points from the nodes $u$
using the 
algorithm in Sec. A.5.
The leader then is able to compute the number of paths covered in both $P_i$ and $P_{ij}$ for each $\mu$ and
 then picks one that satisfies the good set criterion
(Step~\ref{algDet:B}, Alg.~\ref{algDet}).
It then broadcasts the corresponding $X$ vector to every node in the network (Step~\ref{algDet:broadcast}, Alg.~\ref{algDet}).
Algorithm~\ref{algDet} gives the pseudocode for this algorithm.

\begin{lemma}	\label{lemma:deterministic}
The leader node $l$ can identify a good sample point $X \in \{0,1\}^{|V_i|}$, and thus a good set $A$ in $O(|S|\cdot h + n)$ rounds.
\end{lemma}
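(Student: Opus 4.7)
The plan is to verify both correctness and the round bound for Algorithm~\ref{algDet}. For correctness, I would appeal to Lemma~\ref{lemma:A}: since the indicator variables $X_v$ used in Step~\ref{pick2} of Algorithm~\ref{algRandBlocker} are only required to be pairwise independent, a standard construction (as cited in~\cite{LW06,Luby93} and detailed in Appendix~\ref{sec:sample-space}) gives a sample space of size $O(n)$ over which the $X_v$ are pairwise independent with marginals $\delta/(1+\epsilon)^j$. By Lemma~\ref{lemma:A}, in this distribution at least a $1/8$ fraction of sample points produce a good set $A$, so in particular at least one sample point $\mu'$ in the enumerated $O(n)$-sized space yields a good set. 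Hence once the leader $l$ knows, for every $\mu$, the values $\nu_{P_i}^{(\mu)}$ and $\nu_{P_{ij}}^{(\mu)}$, it can check the two inequalities of Definition~\ref{def:good-set} locally and return a valid $\mu'$.

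For the round complexity, I would account for each step of Algorithm~\ref{algDet} separately. Step~\ref{algDet:Anc} runs the \textsc{Ancestors} routine of~\cite{ARKP18} once per source in $S$, yielding $O(|S|\cdot h)$ rounds and leaving each $v$ in possession of the sets $P^v_i$ and $P^v_{ij}$. Step~\ref{algDet:init} constructs an incoming BFS tree $T$ rooted at $l$ in $O(n)$ rounds. Given $P^v_i$ and $P^v_{ij}$ together with the locally enumerated sample space, each $v$ can compute $\sigma_{P_i,v}^{(\mu)}$ and $\sigma_{P_{ij},v}^{(\mu)}$ for all $O(n)$ sample points without any communication.

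The step to examine carefully is Step~\ref{algDet:compute}, where the leader must learn $\nu_{P_i}^{(\mu)} = \sum_{v} \sigma_{P_i,v}^{(\mu)}$ and $\nu_{P_{ij}}^{(\mu)} = \sum_{v} \sigma_{P_{ij},v}^{(\mu)}$ for each of the $O(n)$ sample points. I would use a pipelined convergecast on $T$: order the sample points $\mu = 1,2,\dots,O(n)$, and in round $r$ each internal node forwards to its parent the partial sum for the sample point it is currently processing, combining the partial sums it has received from its children in the obvious way. Because this is a simple sum aggregation along a tree of depth at most $n$, the standard pipelining analysis (the same one underlying the scheme in Sec.~A.5) gives a total round count of $O(\text{depth of } T + \text{number of sample points}) = O(n)$, even though $\Theta(n^2)$ scalar values are aggregated in total, since at every internal edge only one message per sample point ever traverses it. Step~\ref{algDet:B} is purely local at $l$, and Step~\ref{algDet:broadcast} broadcasts the $O(\log n)$-bit index $\mu'$ (every node can then reconstruct $X^{(\mu')}$ locally from the enumeration), which costs $O(n)$ rounds over the BFS tree.

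Summing the contributions of the five steps gives $O(|S|\cdot h) + O(n) + O(n) + O(1) + O(n) = O(|S|\cdot h + n)$ rounds, establishing the lemma. The main obstacle I expect is justifying Step~\ref{algDet:compute}: one must verify that ordering sample points and pipelining partial sums up $T$ really does finish in $O(n)$ rounds despite the $\Theta(n^2)$ total data volume, and that the \textsc{Congest} bandwidth suffices to pack a single partial sum per edge per round. Once that pipelining argument is in place, the remainder of the analysis is bookkeeping.
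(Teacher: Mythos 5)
Your proposal is correct and follows essentially the same route as the paper: a step-by-step accounting of Algorithm~\ref{algDet}, with existence of a good sample point guaranteed by Lemma~\ref{lemma:A} over the $O(n)$-size pairwise-independent sample space, and with Step~\ref{algDet:compute} handled by exactly the pipelined convergecast over the BFS tree that the paper formalizes in Algorithms~\ref{algY4} and~\ref{algY2} (one sample point per round per edge, giving $O(\mathrm{depth}+\#\text{sample points})=O(n)$ rounds). The only cosmetic difference is that you broadcast the index $\mu'$ rather than the full vector $X^{(\mu')}$ in the final step; both cost $O(n)$ rounds.
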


Details of the steps in  Alg.~\ref{algDet} and proof of Lemma~\ref{lemma:deterministic} are in Sec. A.5.
Let Algorithm $2'$ be the blocker set algorithm obtained after replacing Steps~\ref{pick2}-\ref{algRandBlocker:checkA} in 
Algorithm~\ref{algRandBlocker} with the deterministic algorithm for generating a good set $A$
(Algorithm~\ref{algDet}).
Lemma~\ref{lemma:deterministic}, together with Lemma~\ref{runTime}, gives us the following Corollary.

\begin{corollary}	\label{runTime-deterministic}
Algorithm $2'$ computes the blocker set $Q$ 
deterministically in $\tilde{O}(|S|\cdot h / (\epsilon^2 \delta^3))$ rounds.
\end{corollary}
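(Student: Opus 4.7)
The plan is to show that Algorithm $2'$ inherits the round-complexity analysis of Algorithm~\ref{algRandBlocker} essentially unchanged, with the expected-round guarantee converted to a worst-case deterministic guarantee by substituting the cost of the derandomized selection. I would structure the proof as a direct bookkeeping argument that mirrors the proof of Lemma~\ref{runTime}, so the key observation to record up front is that Algorithm $2'$ is identical to Algorithm~\ref{algRandBlocker} outside of Steps~\ref{pick2}--\ref{algRandBlocker:checkA}, and that every remaining step --- the score computation of~\cite{ARKP18}, the construction of $V_i$ (Lemma~\ref{lemma:V_i}), of $P_i$ and $P_{ij}$ (Lemmas~\ref{lemma:compute-Pi} and~\ref{lemma:compute-Pij}), of $|P_{ij}|$ (Lemma~\ref{lemma:|Pij|}), and the cleanup step {\sc Remove-Subtrees} (Lemma~\ref{lemma:runTime}) --- was already deterministic and runs in $O(|S| \cdot h + n)$ rounds per invocation.

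Next I would invoke Lemma~\ref{lemma:deterministic} to replace the randomized selection. The expected per-iteration cost in the proof of Lemma~\ref{runTime} came from the (constant-probability) repetition in Step~\ref{algRandBlocker:checkA}; Lemma~\ref{lemma:deterministic} gives a single-shot deterministic bound of $O(|S| \cdot h + n)$ rounds, so one iteration of the while loop in Steps~\ref{startWhile}--\ref{endWhile} now costs $O(|S| \cdot h + n)$ rounds deterministically. Crucially, the guarantee that a good set always exists in the linear-sized pairwise-independent sample space comes from Lemma~\ref{lemma:A} (which uses only pairwise independence of the $X_v$'s), so the leader in Alg.~\ref{algDet} is always able to pick a valid $X^{(\mu')}$.

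Finally I would bound the total work by multiplying the per-iteration cost by the number of iterations. By Lemma~\ref{lemma:selection}, the total number of iterations of the inner while loop, summed over all phases $i$ and $j$, is $O(\log^3 n / (\epsilon^2 \delta^3))$, and the outer for-loops contribute only an additional $O(\log^2 n / \epsilon^2)$ passes through $O(|S| \cdot h + n)$-round per-phase work (Steps~\ref{Vi}--\ref{Pi}). Summing gives a deterministic total of $\tilde{O}((|S| \cdot h + n)/(\epsilon^2 \delta^3)) = \tilde{O}(|S| \cdot h / (\epsilon^2 \delta^3))$, absorbing the $n$ term into $\tilde{O}(|S| \cdot h)$ as in the applications of interest. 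I do not anticipate a genuine obstacle here: the only delicate piece is verifying that the derandomized Step~\ref{algDet:compute} of Alg.~\ref{algDet} fits within the $O(|S| \cdot h + n)$ budget, and that has already been established by Lemma~\ref{lemma:deterministic}; the rest is careful multiplication of per-iteration costs by the iteration count from Lemma~\ref{lemma:selection}.
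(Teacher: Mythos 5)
Your proposal is correct and follows essentially the same route as the paper, which proves the corollary simply by combining Lemma~\ref{runTime} (the expected-round analysis of the randomized algorithm) with Lemma~\ref{lemma:deterministic} (the $O(|S|\cdot h + n)$-round deterministic replacement for the good-set selection); your write-up just makes the bookkeeping explicit. Your added remark that the $+n$ term must be absorbed into $\tilde{O}(|S|\cdot h)$ is a fair and slightly more careful observation than the paper makes, but it does not change the argument.
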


\section{A $\tilde{O}(n^{4/3})$ Rounds Algorithm for Step~\ref{communicate} of Algorithm~\ref{algAPSP}}	\label{sec:frames}

In Step~\ref{communicate} of Algorithm~\ref{algAPSP}, 
the goal is to send the distance values $\delta (x,c)$ (which are already computed at node $x$)
from source node $x \in V$ to the corresponding blocker node $c$.
Since there are $n$ sources and $|Q| = \tilde{O}(n^{2/3})$ blocker nodes, 
this step can be
 implemented in $\tilde{O}(n^{5/3})$ rounds using all-to-all broadcast (Lemma~\ref{lem:all-to-all-bc}).
One could conjecture that the techniques in~\cite{HPDG+19, LSP19} could be used to send these $\tilde{O}(n^{5/3})$
 messages from the source nodes  in $V$ 
 to the blocker nodes by constructing trees rooted
 at each $c$.
However, 
it is not clear how these methods can distribute the $\tilde{O}(n^{5/3})$ different 
source-destination
messages in $o(n^{5/3})$ rounds.

We now describe a method to implement this step more efficiently in $\tilde{O}(n^{4/3})$ rounds deterministically.
 A randomized  $\tilde{O}(n^{4/3})$-round algorithm for this problem is given in Huang et al.~\cite{HNS17}.
 Our algorithm uses the concept of {\it bottleneck nodes} from that result but is otherwise quite
 different.
 
 Our algorithm is divided into two cases: (i) when $hops (x,c) > n^{2/3}$ and, (ii) when $hops (x,c) \leq n^{2/3}$ ($hops (x,c)$ denotes the number of edges on the shortest path from $x$ to $c$).

\subsection{Case (i): $\bm{hops (x,c) > n^{2/3}}$}

Algorithm~\ref{algCase1} describes our algorithm for this case.
We first construct an $n^{2/3}$-in-CSSSP collection
(i.e., CSSSP in-trees)
using
 the blocker set $Q$ as the source set
(Step~\ref{algCase1:computeCSSSP}, Alg.~\ref{algCase1}).
In Step~\ref{algCase1:constructQ} (Alg.~\ref{algCase1}) we
construct a blocker set $Q'$ of size $\tilde{O}(n^{1/3})$ for this CSSSP collection using 
deterministic Algorithm $2'$
in 
Sections~\ref{sec:blocker} and  \ref{sec:deterministic-blocker}.
Then for each $c' \in Q'$ we construct the incoming and outgoing shortest path tree rooted at $c'$ 
(Step~\ref{algCase1:computeSSSP}, Alg.~\ref{algCase1}).
In Step~\ref{algCase1:broadcast} (Alg.~\ref{algCase1}), every source $x\in V$ broadcasts the 
distance value $\delta (x,c')$ for each $c' \in Q'$.
The lemma below shows that each $c\in Q$ can determine the $\delta(x,c)$ values for all $x$ for which
$hops(x,c) > n^{2/3}$, and the algorithm runs in $\tilde{O}(n^{4/3})$ rounds.

\begin{algorithm*}
\caption{Compute $\delta(x,c)$ at $c$: \textbf{when $\bm{hops(x,c) > n^{2/3}}$}}
\begin{algorithmic}[1]
\Statex Input: $Q$: blocker set
\State Compute $n^{2/3}$-in-CSSSP for source set $Q$ using the algorithm in~\cite{AR19}. \label{algCase1:computeCSSSP}
\State Compute a blocker set $Q'$ of size $\tilde{O}(n / n^{2/3}) = \tilde{O}(n^{1/3})$ for the $n^{2/3}$-CSSSP computed in Step~\ref{computeCSSSP} using the blocker set algorithm described in Section~\ref{sec:blocker}.	\label{algCase1:constructQ}
\State \textbf{For each $\bm{c' \in Q'}$ in sequence:} Compute in-SSSP and out-SSSP rooted at $c'$ using Bellman-Ford algorithm. \label{algCase1:computeSSSP}
\State \textbf{For each $\bm{x \in V}$ in sequence:} Broadcast $ID(x)$ and the shortest path distance values $\delta (x,c')$ for each $c' \in Q'$.	\label{algCase1:broadcast}
\State \textbf{Local Step at node $\bm{c \in Q}$:} For each $x \in V$ compute the shortest path distance value $\delta (x,c)$ using the $\delta
 (x,c')$ distance values received in Step~\ref{algCase1:broadcast} and the $\delta (c', c)$ distance values computed in Step~\ref{algCase1:computeSSSP}.	\label{algCase1:localCompute}
\end{algorithmic}  \label{algCase1}
\end{algorithm*}

\begin{lemma}	\label{lemma:algCase1}
Let $V'$ be the set of nodes $x$ such that there is a shortest path from $x$ to a blocker node $c \in Q$ with  hop-length greater than $n^{2/3}$. Using
Algorithm~\ref{algCase1} each blocker node $c$ can correctly compute $\delta (x,c)$ for all such $x \in V'$ in $\tilde{O}(n^{4/3})$ rounds.
\end{lemma}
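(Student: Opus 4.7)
The plan is to verify correctness using the blocker-set property of $Q'$ on the $n^{2/3}$-in-CSSSP built in Step~\ref{algCase1:computeCSSSP}, and then to tally the round cost of each step to $\tilde{O}(n^{4/3})$.

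For correctness, fix $x \in V'$ and $c \in Q$ with $hops(x,c) > n^{2/3}$, and let $P$ be a shortest path from $x$ to $c$ in $G$. Take the suffix $P'$ of $P$ consisting of its last $n^{2/3}$ edges (those closest to $c$). By subpath optimality, $P'$ is an $n^{2/3}$-hop shortest path ending at $c$, and hence is a root-to-leaf path of length $n^{2/3}$ in the in-tree rooted at $c$ in the $n^{2/3}$-in-CSSSP produced by Step~\ref{algCase1:computeCSSSP}. Since $Q'$ is a blocker set for this CSSSP collection, some $c' \in Q'$ lies on $P'$, and therefore on $P$; thus $\delta(x,c) = \delta(x,c') + \delta(c',c)$. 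Step~\ref{algCase1:computeSSSP} makes $c$ know $\delta(c',c)$ from the in-SSSP rooted at $c'$, and Step~\ref{algCase1:broadcast} broadcasts $\delta(x,c')$ from $x$ to every node, including $c$. Thus in Step~\ref{algCase1:localCompute}, $c$ can compute $\min_{c'' \in Q'} \bigl(\delta(x,c'') + \delta(c'',c)\bigr)$, which equals $\delta(x,c)$ since the min is attained at $c'$ and any other summand is an upper bound on $\delta(x,c)$.

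For the round bound, Step~\ref{algCase1:computeCSSSP} runs the CSSSP algorithm of~\cite{AR19} with $|Q| = \tilde{O}(n^{2/3})$ sources and $n^{2/3}$ hops, taking $\tilde{O}(n^{4/3})$ rounds. Step~\ref{algCase1:constructQ} invokes Algorithm $2'$ with $|S| = |Q|$ and $h = n^{2/3}$; by Corollary~\ref{runTime-deterministic} this costs $\tilde{O}(|Q|\cdot n^{2/3}) = \tilde{O}(n^{4/3})$ rounds and produces $|Q'| = \tilde{O}(n^{1/3})$. Step~\ref{algCase1:computeSSSP} runs Bellman-Ford for up to $n$ hops from each $c' \in Q'$, contributing $\tilde{O}(|Q'|\cdot n) = \tilde{O}(n^{4/3})$ rounds. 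Step~\ref{algCase1:broadcast} disseminates $n \cdot |Q'| = \tilde{O}(n^{4/3})$ distance values via all-to-all broadcast, which by Lemma~\ref{lem:all-to-all-bc} finishes in $\tilde{O}(n^{4/3})$ rounds. Step~\ref{algCase1:localCompute} is purely local.

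The main conceptual obstacle is choosing a subpath of $P$ that is simultaneously (i) a root-to-leaf path of length exactly $n^{2/3}$ in one of the in-trees built in Step~\ref{algCase1:computeCSSSP}, so that the blocker-set property of $Q'$ can be invoked, and (ii) anchored at $c$ so that the distance $\delta(c',c)$ from the resulting $c'$ is among the values that $c$ learns in Step~\ref{algCase1:computeSSSP}. The suffix of the last $n^{2/3}$ edges of $P$ achieves both; an arbitrary chunk of $P$ would fail at least one of these conditions, and the hypothesis $hops(x,c) > n^{2/3}$ is precisely what makes such a suffix available.
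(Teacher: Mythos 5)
Your proof is correct and follows essentially the same route as the paper's: both argue that some $c' \in Q'$ must lie on the shortest $x$-to-$c$ path and that $c$ can therefore assemble $\delta(x,c) = \delta(x,c') + \delta(c',c)$ from Steps~\ref{algCase1:computeSSSP} and \ref{algCase1:broadcast}, followed by the same step-by-step round accounting. You additionally spell out \emph{why} such a $c'$ exists (the last-$n^{2/3}$-edges suffix is a full-length root-to-leaf path in $c$'s in-CSSSP tree, so the blocker property applies), a justification the paper simply asserts.
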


\begin{proof}
Since $hops (x,c) > n^{2/3}$, there exists a blocker node $c' \in Q'$ (constructed in Step~\ref{algCase1:constructQ}) such that the shortest path from $x$ to
$c$ passes through $c'$. 
Thus $c$ can compute the distance value $\delta (x,c)$ by adding $\delta (x,c')$ (received in Step~\ref{algCase1:broadcast}) and
$\delta (c',c)$ (computed in Step~\ref{algCase1:computeSSSP}) values in Step~\ref{algCase1:localCompute}.

Step~\ref{algCase1:computeCSSSP} takes $O(n^{2/3}\cdot |Q|) = \tilde{O}(n^{4/3})$ rounds using Bellman-Ford algorithm.
Step~\ref{algCase1:constructQ} requires $\tilde{O}(n^{2/3}\cdot n^{2/3}) = \tilde{O}(n^{4/3})$ rounds by Corollary~\ref{runTime-deterministic}.
Since $|Q'| = \tilde{O}(n/n^{2/3}) = \tilde{O}(n^{1/3})$, Step~\ref{algCase1:computeSSSP} takes 
$\tilde{O}(n\cdot n^{1/3}) = \tilde{O}(n^{4/3})$ rounds using Bellman-Ford algorithm and so does Step~\ref{algCase1:broadcast} 
using Lemma~\ref{lem:all-to-all-bc}.
Step~\ref{algCase1:localCompute} is a local step and has no communication.
\end{proof}

\begin{algorithm*}
\caption{Compute $\delta(x,c)$ at $c$: \textbf{when $\bm{hops(x,c) \leq n^{2/3}}$}}
\begin{algorithmic}[1]
\Statex Input: $Q$: blocker set; $|Q| \leq n^{2/3} \log n$; $\mathcal{C}^{Q}$: $n^{2/3}$-in-CSSSP collection for set $Q$
\State Compute a set of bottleneck nodes $B$ of size $\tilde{O}(n^{1/3})$ using Algorithm 13 (Sec. A.6.1).
 \label{algCase2:bottleneck}
\State \textbf{For each $\bm{b \in B}$ in sequence:} Compute both in-SSSP and out-SSSP tree rooted at $b$ using Bellman-Ford algorithm.	\label{algCase2:compute-SSSP}
\State \textbf{For each $\bm{x \in V}$ in sequence:} Broadcast $ID(x)$ and the shortest path distance values $\delta (x,b)$ for each $b \in B$.	\label{algCase2:broadcast}
\State \textbf{Local Step at node $\bm{c \in Q}$:} For each $x \in V$ compute $\delta^{(B)} (x,c) = \min_{b \in B} \{\delta (x,b) + \delta (b,c) \}$ using the $\delta (x,b)$ distance values received in Step~\ref{algCase2:broadcast} and $\delta (b,c)$ distance values computed in Step~\ref{algCase2:compute-SSSP}.	\label{algCase2:localCompute}
\State Remove subtrees rooted at $b \in B$ from the collection $\mathcal{C}^{Q}$ using Algorithm~\ref{algRemoveSubtrees}. \label{algCase2:remove-Subtrees}
\State Reset round counter to $0$.	\label{algCase2:reset}
\State Assume the nodes in $Q$ are ordered in a (cyclic) sequence $O$. 
\State \textbf{Round $0 < r \leq (n^{4/3} \log n + n^{4/3}) \cdot ((1/3)\cdot \log n / \log \log n - 1)$:}	\label{algCase2:round}
\State \hspace{.05in} {\bf Round-robin sends:} At each node $v$, forward an unsent message for the next blocker node $c$ in $O$ to its parent in $c$'s tree.\label{algCase2:round-robin}
\end{algorithmic}  \label{algCase2}
\end{algorithm*}

\subsection{Case (ii): $\bm{hops (x,c) \leq n^{2/3}}$}
This case deals with sending the distance values from source nodes $x\in V$
to the blocker nodes $c$ when the shortest path between 
$x$ and $c$ has hop-length at most $n^{2/3}$.
Recall that using an all-to-all broadcast or the techniques in~\cite{HPDG+19, LSP19} for sending these $\tilde{O}(n^{5/3})$ messages
appears to
require at least $\tilde{O}(n^{5/3})$ rounds.

 Let $\mathcal{C}^{Q}$ be the $n^{2/3}$-in-CSSSP collection for  source set $Q$.
 A set $B \subset V$ is a set of bottleneck nodes if removing the nodes in $B$, along with their descendants in the trees 
 in the collection $\mathcal{C}^{Q}$, reduces the congestion to at most $\tilde{O}(n^{4/3})$, i.e. every node 
 would need to send at most $\tilde{O}(n^{4/3})$ messages if all nodes $x$ transmitted their $\delta(x,c)$ values along the pruned CSSSP trees in the collection $\mathcal{C}^{Q}$.
 This notion is defined in Huang et al.~\cite{HNS17}, where
 they present a randomized algorithm using 
 the randomized scheduling algorithm in
 Ghaffari~\cite{Ghaffari15} to identify 
such a set of 
 bottleneck nodes.
 Here we 
 deterministically identify a set of
 bottleneck nodes $B$ where $|B| = \tilde{O}(n^{1/3})$ (Step~\ref{algCase2:bottleneck}, Alg.~\ref{algCase2})
 using a pipelined strategy 
 (Sec. A.6.1).
 Clearly, after we remove 
 these bottleneck nodes, 
 any remaining node needs to send at most 
 $\tilde{O}(n^{4/3})$ messages.
 
After we identify the set of bottleneck nodes $B$ we
 run Bellman-Ford algorithm~\cite{Bellman58} for each $b \in B$ to compute both the incoming and outgoing
 shortest path tree rooted at  $b$ (Step~\ref{algCase2:compute-SSSP}, Alg.~\ref{algCase2}).
 We then broadcast the $\delta (x,b)$ distance values from every source $x \in V$ to the corresponding
 $b \in B$ (Step~\ref{algCase2:broadcast}, Alg.~\ref{algCase2}).
Thus for all vertices $x \in V$ such that the shortest path from $x$ to 
a blocker node 
$c$ passes through some 
other
$b \in B$, the 
blocker node $c$ can compute the shortest path distance value, $\delta (x,c)$ by adding $\delta (x,b)$ and
$\delta (b,c)$ distance values (Step~\ref{algCase2:localCompute}, Alg.~\ref{algCase2}).
 
 It remains
   to send the distance value $\delta (x,c)$ to blocker node $c$ if $x$ is not part of a subtree of any bottleneck node 
 $b$ in $c$'s shortest path tree.
 Since the maximum congestion at any node is at most $\tilde{O}(n^{4/3})$ after removing bottleneck nodes in $B$, we are able to perform this computation
 deterministically.
In Steps~\ref{algCase2:round}-\ref{algCase2:round-robin} (Alg.~\ref{algCase2}), we use a simple round-robin
 strategy to propagate these distance values from each source $x \in V$ to all blocker nodes $c$ in the network.
We show in Section~\ref{sec:correctness}, using the notion of \textit{frames},  that this simple strategy
achieves the desired $\tilde{O}(n^{4/3})$-round bound.

\begin{lemma}	\label{lemma:algCase2:Steps1-4}
If the shortest path from $x \in V$ to a blocker node $c \in Q$ has hop-length at most $n^{2/3}$ and there exists a bottleneck node $b \in B$ on this path, then after executing Steps~\ref{algCase2:bottleneck}-\ref{algCase2:localCompute} of Algorithm~\ref{algCase2} blocker node $c$ knows the distance value $\delta (x,c)$ for all such $x \in V$.
\end{lemma}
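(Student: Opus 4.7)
The plan is to verify that after Steps~\ref{algCase2:bottleneck}--\ref{algCase2:localCompute} of Algorithm~\ref{algCase2}, the blocker node $c$ possesses sufficient information to recover $\delta(x,c)$ via the local minimization performed in Step~\ref{algCase2:localCompute}. I will argue this by first establishing what data $c$ has accumulated in Steps~\ref{algCase2:compute-SSSP} and \ref{algCase2:broadcast}, then showing that a single bottleneck node $b^{*} \in B$ lying on the hypothesized shortest path witnesses the correct value.

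First I would establish the data available at $c$. After Step~\ref{algCase2:compute-SSSP}, for every $b \in B$ node $c$ knows $\delta(b,c)$ because $c$ participates in the outgoing shortest path tree rooted at $b$ (this tree is computed via Bellman-Ford without any hop restriction, so every reachable node in $V$ learns its true distance from $b$). After the all-to-all broadcast in Step~\ref{algCase2:broadcast}, node $c$ knows $\delta(x,b)$ for every pair $(x,b) \in V \times B$. Hence the quantity $\delta^{(B)}(x,c) = \min_{b \in B}\{\delta(x,b) + \delta(b,c)\}$ computed locally in Step~\ref{algCase2:localCompute} is well-defined at $c$.

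Next I would prove correctness using the hypothesis. Fix $x$ with $hops(x,c) \leq n^{2/3}$ and let $P$ be a shortest path from $x$ to $c$ which, by assumption, contains some bottleneck node $b^{*} \in B$. Then $P$ decomposes as a shortest $x$-to-$b^{*}$ path concatenated with a shortest $b^{*}$-to-$c$ path, so $\delta(x,b^{*}) + \delta(b^{*},c) = \delta(x,c)$. This proves $\delta^{(B)}(x,c) \leq \delta(x,c)$. The reverse inequality $\delta^{(B)}(x,c) \geq \delta(x,c)$ is immediate from the triangle inequality applied to each $b \in B$, since $\delta(x,b)+\delta(b,c)$ is the length of an actual walk from $x$ to $c$ through $b$. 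Hence the value computed at $c$ in Step~\ref{algCase2:localCompute} equals $\delta(x,c)$ exactly.

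I do not anticipate a serious obstacle here, since the lemma is essentially a correctness statement whose content is that bottleneck nodes correctly splice two shortest-path halves; the nontrivial work (bounding $|B|$, computing the trees, and handling the complementary case where no $b \in B$ lies on $P$) is deferred to the subsequent steps of Algorithm~\ref{algCase2} and separate lemmas. The only care I would take is to note explicitly that Step~\ref{algCase2:compute-SSSP} uses full Bellman-Ford (not an $h$-hop variant), so the $\delta(b,c)$ values are exact for every $c \in Q$ regardless of the hop-length from $b$ to $c$, which is what makes the decomposition through $b^{*}$ valid.
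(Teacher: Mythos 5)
Your proposal is correct and matches the paper's argument, which simply observes that $c$ obtains $\delta(b,c)$ from Step~2 and $\delta(x,b)$ from Step~3 and combines them in Step~4; you additionally spell out the two inequalities (the witness $b^{*}$ on the shortest path gives $\delta^{(B)}(x,c)\leq\delta(x,c)$, and the triangle inequality gives the reverse), which the paper leaves implicit. No gap.
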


\begin{proof}
This is immediate from Step~\ref{algCase2:localCompute} (Alg.~\ref{algCase2}) where $c$ will compute $\delta (x,c)$ by adding the distance values 
$\delta (x,b)$ (received in Step~\ref{algCase2:broadcast}, Alg.~\ref{algCase2}) and $\delta (b,c)$ value (computed at $c$ in Step~\ref{algCase2:compute-SSSP}, Alg.~\ref{algCase2}).
\end{proof}

\begin{lemma}	\label{lemma:algCase2:round-robin}
If a source node $x\in V$ lies in a blocker node $c$'s tree in the CSSSP collection $\mathcal{C}^Q$ after the execution of Step~\ref{algCase2:remove-Subtrees} of Algorithm~\ref{algCase2}, then $c$ would have received $\delta (x,c)$ value by $(n^{4/3} \log n + n^{4/3})\cdot ((1/3)\cdot \log n / \log \log n - 1)$ rounds of Step~\ref{algCase2:round-robin} of Algorithm~\ref{algCase2}.
\end{lemma}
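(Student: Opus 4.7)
The plan is to partition the round-robin execution into $K := (1/3)\log n / \log\log n - 1$ consecutive frames, each of length $F := n^{4/3}\log n + n^{4/3}$ rounds, and argue inductively that the ``worst remaining pipeline delay'' over all still-pending messages drops multiplicatively from one frame to the next.

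I would begin with two structural invariants. By the defining property of the bottleneck set $B$ built in Step~\ref{algCase2:bottleneck}, after Step~\ref{algCase2:remove-Subtrees} every non-bottleneck node $v$ has at most $\tilde O(n^{4/3})$ messages that must ever pass through it up the pruned trees of $\mathcal{C}^Q$. Also, since $|Q| \leq n^{2/3}\log n$, within one frame the cyclic ordering $O$ is traversed at each node at least $F/|Q| = \Omega(n^{2/3})$ times, so every $(v,c)$ pair gets $\Omega(n^{2/3})$ opportunities per frame to advance its front-of-queue message for $c$ one hop up $c$'s tree.

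For the inductive step I would define $\Phi_k$ at the start of frame $k$ as the maximum, over still-undelivered messages $m$ for blocker $c$ currently held at node $v$, of the sum of (i) the remaining depth of $v$ in $T_c$ and (ii) the number of messages ahead of $m$ in $v$'s queue for $c$. The base case is $\Phi_0 = \tilde O(n^{2/3})$, since the tree depth dominates and the per-blocker queue lengths are controlled by the bottleneck removal. The central claim is $\Phi_{k+1} \leq \Phi_k / \lambda$ for a factor $\lambda = \mathrm{polylog}(n)$ chosen so that $\Phi_0/\lambda^K < 1$; once $\Phi_K < 1$, every message whose source still sits in some blocker's pruned subtree has reached its destination, which is exactly what the lemma claims.

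The main obstacle is the cross-blocker coupling in the round-robin: in a single round $v$ can send only a constant number of messages, which must be shared across all of $Q$, and new arrivals from $v$'s children keep refilling its queues during a frame. I plan to handle this by a charging argument mapping every idle visit of the cycle to blocker $c$ at $v$ to an active send at $v$ for some other blocker, so that the $\tilde O(n^{4/3})$ total-messages-per-node cap bounds the cumulative ``wasted time'' charge within a frame. Combined with the $\Omega(n^{2/3})$ per-blocker attempts per frame and the standard tree-pipelining identity (a message at depth $d$ with $L$ messages ahead clears in $\Theta(d + L)$ active rounds for its blocker), this yields the per-frame polylogarithmic reduction of $\Phi_k$ and hence, after $K$ frames, the claimed bound of $F \cdot K$ rounds for delivering every $\delta(x,c)$ to its blocker $c$.
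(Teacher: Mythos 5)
Your high-level scaffolding (about $(1/3)\log n/\log\log n$ phases of $n^{4/3}\log n + n^{4/3}$ rounds each, with a progress measure that improves from phase to phase) matches the paper's, but the progress measure you chose does not work, and the base case you assert for it is false. The bottleneck-removal step only bounds the \emph{total} congestion $\sum_{c\in Q} count_{v,c}$ at a node $v$ by $n\sqrt{|Q|} = \tilde{O}(n^{4/3})$; it says nothing about an individual $count_{v,c}$, which can still be $\Theta(n)$ (a node may have $\Theta(n)$ descendants in a single pruned tree $T_c$). Hence your potential $\Phi_0$, which includes the per-blocker queue length, is $\Theta(n)$ rather than $\tilde{O}(n^{2/3})$. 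Worse, your schedule gives every pair $(v,c)$ a \emph{fixed} number $F/|Q| = \tilde{O}(n^{2/3})$ of turns per phase, hence only $\tilde{O}(n^{2/3})$ turns summed over all $K = O(\log n/\log\log n)$ phases; a node that must forward $\Theta(n)$ messages for one blocker can therefore never finish under your schedule, so no choice of potential can rescue the argument as stated. The charging of idle visits to active sends also fails quantitatively: a single pass of the cycle can contain $|Q|-1$ idle visits against one active send, so the charge per send is $\tilde{O}(n^{2/3})$, yielding a bound of $\tilde{O}(n^{2})$ rounds rather than $\tilde{O}(n^{4/3})$.

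The missing idea is to make the round-robin \emph{adaptive}: in the paper's Algorithm~\ref{algStage-i}, each node $v$ cycles only over $Q_{v,i}$, the set of blockers for which $v$ still holds an unsent message at the start of stage $i$, and the progress measure is $|Q_{v,i}|$ itself. A pipelining argument (Lemma~\ref{lemma:Stage-i}) shows that within stage $i$ each $c\in Q_{v,i}$ is either fully served or receives at least $n^{2/3}\log^{i+1} n$ messages from $v$; combined with the fixed budget $n\sqrt{|Q|}$ this forces $|Q_{v,i+1}|\le n^{2/3}/\log^{i+3/2} n$ (Lemma~\ref{lemma:Qvj}), so each surviving blocker gets a factor $\log n$ more turns per round in the next stage while the stage length stays at $\tilde{O}(n^{4/3})$. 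It is this geometric growth in the per-blocker service rate --- not a geometric decay of depth-plus-queue-length --- that lets a queue of length up to $n$ for a single blocker drain within the overall round budget. To repair your proof you would need to replace $\Phi_k$ by (a bound on) $|Q_{v,i}|$ and restructure the schedule so that the per-frame iteration is over the shrinking sets $Q_{v,i}$ rather than over all of $Q$.
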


Lemma~\ref{lemma:algCase2:round-robin} is established below in Section~\ref{sec:correctness}.
Lemmas~\ref{lemma:algCase2:Steps1-4} and \ref{lemma:algCase2:round-robin}  establish the
following lemma.

\begin{lemma}	\label{lemma:algCase2}
If the shortest path from $x \in V$ to a blocker node $c \in Q$ has hop-length at most $n^{2/3}$, then after running Algorithm~\ref{algCase2} blocker node $c$ knows the distance value $\delta (x,c)$ for all such $x \in V$.
\end{lemma}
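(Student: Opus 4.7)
The plan is to prove Lemma~\ref{lemma:algCase2} by a clean case analysis based on whether the shortest path from $x$ to $c$ contains a bottleneck node or not; the two preceding lemmas (\ref{lemma:algCase2:Steps1-4} and \ref{lemma:algCase2:round-robin}) already handle each case individually, so the main work is just to verify that the two cases are exhaustive and correctly mapped onto the hypotheses of those lemmas.

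First, I would observe that since the shortest path $\pi$ from $x$ to $c$ has hop-length at most $n^{2/3}$, the CSSSP property ensures that $x$ lies in the $n^{2/3}$-in-CSSSP tree $T_c \in \mathcal{C}^{Q}$ rooted at $c$, and the path in $T_c$ from $x$ to $c$ coincides with $\pi$. This is the common starting point for both cases.

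Next, I would split on whether $\pi$ meets the bottleneck set $B$. In Case (a), some $b \in B$ lies on $\pi$; then Lemma~\ref{lemma:algCase2:Steps1-4} immediately gives that after Steps~\ref{algCase2:bottleneck}--\ref{algCase2:localCompute} the blocker node $c$ knows $\delta(x,c)$, computed as $\delta(x,b) + \delta(b,c)$. In Case (b), no vertex of $\pi$ lies in $B$; then in the tree $T_c$, none of the ancestors of $x$ on the path to $c$ is a bottleneck node, so $x$ is not pruned by Step~\ref{algCase2:remove-Subtrees}, i.e.\ $x$ still lies in $c$'s (pruned) tree in $\mathcal{C}^{Q}$. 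Lemma~\ref{lemma:algCase2:round-robin} then guarantees that $c$ receives $\delta(x,c)$ within the round-robin phase of Step~\ref{algCase2:round-robin}.

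Since these two cases exhaust all possibilities (a vertex on $\pi$ is either in $B$ or not), in both cases $c$ ends up knowing $\delta(x,c)$ after the execution of Algorithm~\ref{algCase2}, which is what we needed. The only subtlety to check carefully is the bookkeeping in Case (b)---specifically that "no bottleneck node on $\pi$" is precisely the condition needed to invoke Lemma~\ref{lemma:algCase2:round-robin}---but this is exactly how the {\sc Remove-Subtrees} procedure (Algorithm~\ref{algRemoveSubtrees}) prunes $T_c$, so no further argument is required beyond combining the two lemmas.
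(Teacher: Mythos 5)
Your proposal is correct and matches the paper's argument, which simply combines Lemma~\ref{lemma:algCase2:Steps1-4} and Lemma~\ref{lemma:algCase2:round-robin}; your explicit case split on whether the shortest path meets $B$, together with the observation that {\sc Remove-Subtrees} prunes exactly the sources handled by the first lemma, is the (implicit) reasoning the paper relies on.
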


\begin{lemma}	\label{lemma:algCase2-runtime}
Algorithm~\ref{algCase2} runs for $\tilde{O}(n^{4/3})$ rounds in total.
\end{lemma}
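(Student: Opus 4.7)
The plan is simply to bound the round complexity of each of the lines of Algorithm~\ref{algCase2} separately and sum the bounds, using the fact that a finite sum of $\tilde{O}(n^{4/3})$ terms is still $\tilde{O}(n^{4/3})$. Since every step either costs no rounds (local steps and counter resets) or admits a clean $\tilde{O}(n^{4/3})$ bound, no balancing of parameters is needed here—the parameters $n^{2/3}$ (hop bound), $|Q|=\tilde{O}(n^{2/3})$, and $|B|=\tilde{O}(n^{1/3})$ were already chosen so that every term lands at $\tilde{O}(n^{4/3})$.

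I would proceed line by line. Step~\ref{algCase2:bottleneck} invokes the pipelined bottleneck-node algorithm of Section~A.6.1, which is stated there to run in $\tilde{O}(n^{4/3})$ rounds. Step~\ref{algCase2:compute-SSSP} runs two Bellman-Ford executions of depth $O(n)$ from each $b\in B$, done one source at a time; since $|B|=\tilde{O}(n^{1/3})$, this contributes $O(|B|\cdot n)=\tilde{O}(n^{4/3})$ rounds. Step~\ref{algCase2:broadcast} broadcasts $n\cdot|B|=\tilde{O}(n^{4/3})$ distance values from source nodes, which costs $\tilde{O}(n^{4/3})$ rounds by the all-to-all broadcast bound (Lemma~\ref{lem:all-to-all-bc}). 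Step~\ref{algCase2:localCompute} is purely local and has no communication cost. Step~\ref{algCase2:remove-Subtrees} applies Algorithm~\ref{algRemoveSubtrees} to the collection $\mathcal{C}^Q$, which by Lemma~\ref{lemma:runTime} takes at most $n^{2/3}$ rounds per source (one run per tree in $\mathcal{C}^Q$, i.e.\ per $c\in Q$), giving $O(|Q|\cdot n^{2/3})=\tilde{O}(n^{4/3})$ rounds in total. Steps~\ref{algCase2:reset} and the sequence ordering incur no rounds.

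Finally, Steps~\ref{algCase2:round}--\ref{algCase2:round-robin} run for exactly
\[
(n^{4/3}\log n + n^{4/3})\cdot\Bigl(\tfrac{1}{3}\cdot\tfrac{\log n}{\log\log n} - 1\Bigr)
\]
rounds by definition; this expression is $\tilde{O}(n^{4/3})$. Summing all the contributions gives the claimed $\tilde{O}(n^{4/3})$ total round complexity.

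The only conceptually nontrivial input to this proof is justifying that Step~\ref{algCase2:bottleneck} indeed finishes in $\tilde{O}(n^{4/3})$ rounds and that the round-robin bound in Step~\ref{algCase2:round-robin} is sufficient for correctness; however, the former is handled in Section~A.6.1 and the latter is exactly the content of Lemma~\ref{lemma:algCase2:round-robin}, so here we merely need to observe that plugging in $|B|=\tilde{O}(n^{1/3})$, $|Q|=\tilde{O}(n^{2/3})$, and $h=n^{2/3}$ makes every line balance at $\tilde{O}(n^{4/3})$.
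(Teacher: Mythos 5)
Your proposal is correct and matches the paper's own proof essentially line for line: both bound each step of Algorithm~\ref{algCase2} separately at $\tilde{O}(n^{4/3})$ (the bottleneck computation via the lemma in Sec.~A.6.1, Bellman-Ford over $|B|=\tilde{O}(n^{1/3})$ sources, the all-to-all broadcast, the subtree removal over $|Q|$ trees via Lemma~\ref{lemma:runTime}, and the explicitly capped round-robin phase) and then sum. No differences worth noting.
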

\begin{proof}
Step~\ref{algCase2:bottleneck} takes $\tilde{O}(n^{4/3})$ rounds (Lemma~\ref{lemma:algBottleneck}). 
Since $|B| = \tilde{O}(n^{1/3})$, Step~\ref{algCase2:compute-SSSP} takes $\tilde{O}(n\cdot n^{1/3}) = \tilde{O}(n^{4/3})$ rounds
using Bellman-Ford algorithm and so does Step~\ref{algCase2:broadcast} using Lemma~\ref{lem:all-to-all-bc}.
Step~\ref{algCase2:localCompute} is a local step and involves no communication.
Step~\ref{algCase2:remove-Subtrees} takes $\tilde{O}(n^{2/3}\cdot |Q|) = \tilde{O}(n^{4/3})$ rounds
 (Lemma~\ref{lemma:runTime}).
Step~\ref{algCase2:round-robin} runs for $\tilde{O}(n^{4/3})$ rounds, thus establishing the lemma.
\end{proof}

\subsection{Correctness of Step~\ref{algCase2:round-robin} of Algorithm~\ref{algCase2}}	\label{sec:correctness}

In this section we will establish that  the simple round-robin approach used in Steps~\ref{algCase2:round}-\ref{algCase2:round-robin}
of Algorithm~\ref{algCase2} is sufficient to propagate distance values $\delta (x,c)$ from source nodes 
$x \in V$ to blocker nodes $c \in Q$ in $\tilde{O}(n^{4/3})$ rounds, 
when the congestion at any node\footnote{Congestion at a node refers to the maximum number of messages sent by a node during the execution of an algorithm.}
 is at most 
$n\sqrt{|Q|}$.
While this looks plausible, the issue to resolve is whether a node could be left idling 
when there are more messages it needs to pass on from its descendants to its parents in some of the trees.
This could happen because each node forwards at most one message per round and these
descendants might have forwarded messages for other blocker nodes. The round robin
scheme appears to only guarantee that a message for a chosen blocker node will be sent from a node to
its parent at least once every $|Q|$ rounds.

We now present and analyze a more structured version of Steps 9-10  to establish the bound.
In this Algorithm 6 we
divide Step~\ref{algCase2:round-robin} (Alg.~\ref{algCase2}) into 
$(1/3)\cdot (\log n/\log \log n) - 1$
different {\it stages}, with each stage running for at most  
$n^{4/3} \log n + n^{4/3}$ rounds  (we assume $|Q| \leq n^{2/3} \log n$).
Our key observation (in  Lemma~\ref{lemma:Qvj}) is that
at the start of Stage $i$, every node $v$ only needs to send the distance values for at most 
$n^{2/3}/\log^{i+1/2} n$ different blocker nodes (note that $i$ is not a constant), 
thus more messages can be sent by $v$ to each blocker node in later stages.

Let $Q_{v,i}$ be the set of blocker nodes for which node $v$ has messages to send at start of stage $i$.
We introduce the notion of a \textit{frame}, where each frame has a single round available for each blocker node in $Q_{v,i}$.
Stage $i$ is divided into $n^{2/3} \log^{i + 1} n + n^{2/3}$ frames 
(we will show that each frame consists of 
$\lceil n^{2/3}/\log^{i+1/2} n \rceil$ rounds).
In each frame, node $v$ sends out an unsent message for each $c \in Q_{v,i}$ to its parent in $c$'s tree (Step~\ref{algStage-i:forward}, Alg.~\ref{algStage-i}).

\begin{algorithm}
\caption{Algorithm for Stage $i$ at node $v \in V$}
\begin{algorithmic}[1]
\Statex Input: $O$: (cyclic) sequence of nodes in blocker set $Q$; $\mathcal{C}^{Q}$: $n^{2/3}$-CSSSP collection for set $Q$
\For{$\bm{i \geq 0}:$}
	\State Let $Q_{v,i}$ be the set of nodes in $Q$ for which $v$ contains at least one unsent message during Stage-$i$. \label{algStage-i:localCompute}
	\For{frame $j = 1$ to $\lceil n^{2/3}\log^{i +1} n  + n^{2/3} \rceil$}
		\State \textbf{for each $c \in Q_{v,i}$ in sequence:} $v$ forwards an unsent message for $c$ to its parent in $c$'s tree.	\label{algStage-i:forward}
	\EndFor
\EndFor
\end{algorithmic}  \label{algStage-i}
\end{algorithm}

\begin{lemma}	\label{lemma:Stage-i}
For all blocker nodes $c \in Q_{v,i}$, node $v$ would have sent $\alpha$ messages to its parent in $c$'s tree by $\alpha + n^{2/3} - h_c (v) $ frames of Stage $i$, where $h_c (v) = hops (v,c)$, provided at least $\alpha$ messages are routed through $v$ in Step~\ref{algStage-i:forward} of Algorithm~\ref{algStage-i}.
\end{lemma}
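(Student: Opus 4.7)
The plan is to prove the lemma by induction on $\Delta_c(v) := n^{2/3} - h_c(v)$, which upper-bounds the depth of the subtree of $T_c$ rooted at $v$. To make the induction self-contained I would first strengthen the statement to the following quantified form: \emph{for every $\alpha' \le \alpha$, node $v$ has forwarded at least $\alpha'$ messages for $c$ to its parent in $T_c$ by the end of frame $\alpha' + \Delta_c(v)$ of Stage $i$, provided at least $\alpha'$ messages need to be routed through $v$.} Taking $\alpha' = \alpha$ recovers the lemma, and the strengthened form is what the induction actually needs.

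For the base case $\Delta_c(v) = 0$, node $v$ is a leaf of $T_c$, so the only message for $c$ that can be routed through $v$ is $v$'s own distance value. Hence $\alpha \le 1$, and $v$ forwards it in its $c$-slot of the very first frame of Stage $i$, matching $\alpha + \Delta_c(v) = 1$.

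For the inductive step, let $u_1, \ldots, u_k$ be the children of $v$ in $T_c$; each satisfies $\Delta_c(u_j) \le \Delta_c(v) - 1$. Partition the messages through $v$ by which child they enter from, writing $\alpha = \mathbf{1}[v \text{ is a source}] + \sum_j \alpha_j$, where $\alpha_j$ counts the messages entering $v$ from $u_j$. Applying the inductive hypothesis to each $u_j$ at every $\alpha'_j \le \alpha_j$, node $u_j$ has dispatched at least $\alpha'_j$ messages to $v$ by the end of frame $\alpha'_j + \Delta_c(u_j)$. Because the cyclic order $O$ in Step~\ref{algStage-i:forward} of Algorithm~\ref{algStage-i} fixes the intra-frame scheduling, a message sent by $u_j$ in its $c$-slot of frame $t$ is queued at $v$ in time for $v$'s $c$-slot of frame $t+1$, giving a one-frame receive-to-forward delay. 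Letting $R_v(f)$ denote the number of messages for $c$ queued at $v$ by the start of its $c$-slot of frame $f$, summing the per-child bounds yields
\[
R_v(f) \;\ge\; \mathbf{1}[v \text{ is a source}] \;+\; \sum_{j=1}^{k} \min\!\bigl(\alpha_j,\; f - 1 - \Delta_c(u_j)\bigr) \;\ge\; \min\!\bigl(\alpha,\; (f - \Delta_c(v)) + \mathbf{1}[v\text{ is a source}]\bigr),
\]
where the second inequality uses $\Delta_c(u_j) \le \Delta_c(v) - 1$. Writing $a_i$ for the frame at which the $i$th message becomes available at $v$ and $s_i$ for the frame in which $v$ forwards it, the displayed bound gives $a_i \le i + \Delta_c(v)$ for every $i \le \alpha$. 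Combined with the round-robin rule that lets $v$ pop one queued message for $c$ per frame, the standard queueing recurrence $s_i = \max(a_i,\, s_{i-1} + 1)$ then yields $s_\alpha \le \alpha + \Delta_c(v)$, closing the induction.

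The main obstacle is correctly threading the one-frame receive-to-forward delay through the recursion: the induction only closes because the $+1$ delay at each hop is exactly absorbed by the slack $\Delta_c(u_j) \le \Delta_c(v) - 1$, so a misalignment by even a single round in the timing convention would break the telescoping. A secondary subtlety is that the lemma is conditional on exactly $\alpha$ messages being routed through $v$, whereas one may have $\alpha'_j < \alpha_j$ messages available from some child at a given intermediate time; the quantified version of the statement handles this automatically, since the inductive hypothesis can be invoked on each $u_j$ at any $\alpha'_j \le \alpha_j$.
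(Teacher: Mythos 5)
Your proof is correct and follows essentially the same route as the paper's: the paper argues by minimal counterexample over (stage, $h_c(v)$), which is exactly your induction on $\Delta_c(v)=n^{2/3}-h_c(v)$ in contrapositive form, with the same key mechanism that a child's deadline is one frame earlier and absorbs the one-frame receive-to-forward delay. Your version is a more careful rendering of the paper's terse argument (explicit per-child partition of $\alpha$, the queueing recurrence $s_i=\max(a_i,s_{i-1}+1)$), but it is not a different approach.
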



\begin{proof}
Fix a blocker node $c$.
Let $i'$ be the smallest $i$ for which the above statement does not hold and let $v$ be a node with maximum $h_c (v)$ value for which this
statement is violated in Stage $i'$.
Node $v$ is not a leaf node since
$\alpha$ is 0 or 1 for a leaf and a leaf
would have sent its distance value to its parent in the first frame of Stage-$0$.

So  $v$ must be an internal node.
Since the statement does not hold for $v$ 
for the first time
for $\alpha$, it implies that $v$ has already sent $\alpha -1$ messages (including its own 
distance value $\delta (v,c)$) by $(\alpha - 1) + n^{2/3} -h_c (v)$ frames and now
does not have any message to send to its parent in $c$'s tree in
the next frame.
However since the statement holds for all of $v$'s children, $v$ should have received at least $\alpha -1$ messages from its children by 
$(\alpha -1) + n^{2/3} - (h_c (v) +1)$-th frame, resulting in a contradiction.
\end{proof}

Since $h_c (v) \leq n^{2/3}$, Lemma~\ref{lemma:Stage-i} leads to the following Corollary.

\begin{corollary}	\label{cor:completion}
After the completion of Stage $i$, every node $v$ would have sent all or at least $n^{2/3} \log^{ i +1} n$ different distance values for all blocker nodes $c \in Q_{v,i}$.
\end{corollary}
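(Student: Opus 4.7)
The plan is to apply Lemma~\ref{lemma:Stage-i} with a carefully chosen value of $\alpha$. Fix a node $v$ and a blocker node $c \in Q_{v,i}$. By Lemma~\ref{lemma:Stage-i}, if at least $\alpha$ messages are routed through $v$ destined for $c$ during Stage $i$, then $v$ has forwarded all $\alpha$ of them to its parent in $c$'s tree within the first $\alpha + n^{2/3} - h_c(v)$ frames of the stage.

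Next I would set $\alpha = n^{2/3} \log^{i+1} n$ and compare against the number of frames in Stage $i$, which is $\lceil n^{2/3}\log^{i+1} n + n^{2/3}\rceil$. Using $h_c(v) \geq 0$ (and in fact $h_c(v) \leq n^{2/3}$ since the trees in $\mathcal{C}^{Q}$ have height at most $n^{2/3}$), the bound $\alpha + n^{2/3} - h_c(v) \leq n^{2/3}\log^{i+1} n + n^{2/3}$ holds, so Stage $i$ contains enough frames for $v$ to forward $\alpha$ messages. There are then two cases: either fewer than $n^{2/3}\log^{i+1} n$ messages for $c$ are ever routed through $v$ during the stage, in which case $v$ has sent all of them; or at least that many are routed through $v$, in which case applying Lemma~\ref{lemma:Stage-i} with $\alpha = n^{2/3}\log^{i+1} n$ shows $v$ has forwarded at least $n^{2/3}\log^{i+1} n$ messages destined for $c$ by the end of the stage.

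Since the argument above is uniform over $c \in Q_{v,i}$, taking the conclusion for every such $c$ gives the statement of the corollary. The main thing to verify carefully is that the hypothesis of Lemma~\ref{lemma:Stage-i}, namely that at least $\alpha$ messages are actually routed through $v$ in Step~\ref{algStage-i:forward}, is consistent with the dichotomy above; this is fine because the lemma's guarantee is conditional on message availability, and in the first case of the dichotomy the corollary's ``all messages sent'' clause is what we invoke. There is no substantive obstacle here beyond plugging in $\alpha$ and checking the frame count; the real work was done in Lemma~\ref{lemma:Stage-i}.
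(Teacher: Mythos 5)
Your proposal is correct and matches the paper's (very terse) derivation: the paper obtains the corollary directly from Lemma~\ref{lemma:Stage-i} by instantiating $\alpha = n^{2/3}\log^{i+1} n$ and noting that the frame budget $\lceil n^{2/3}\log^{i+1}n + n^{2/3}\rceil$ suffices since $0 \le h_c(v) \le n^{2/3}$. Your explicit handling of the dichotomy (fewer than $\alpha$ messages routed versus at least $\alpha$) just spells out what the paper leaves implicit.
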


\begin{lemma}	\label{lemma:Qvj}
The set $Q_{v,i}$ has size at most $\lceil n^{2/3}/\log^{i+1/2} n \rceil$.
\end{lemma}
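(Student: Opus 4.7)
The plan is induction on $i$, with the inductive step driven by Corollary 4.10 combined with a congestion-vs-throughput accounting at node $v$.

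For the inductive step, assume the bound at stage $i$ and fix any $c \in Q_{v,i+1}$. Since $c \in Q_{v,i+1}$ means $v$ did not finish sending all its messages for $c$ during Stage $i$, the ``all or at least $n^{2/3}\log^{i+1}n$'' dichotomy in Corollary 4.10 forces $v$ to have forwarded at least $n^{2/3}\log^{i+1}n$ distinct messages for $c$ during Stage $i$ (otherwise $c$ would have dropped out of $Q_{v,i+1}$). Summing this lower bound across $c \in Q_{v,i+1}$, and using that the total number of messages $v$ forwards during the entire algorithm is at most the post-bottleneck congestion at $v$, which is $\tilde{O}(n^{4/3})$ by the choice of $B$ in Step~\ref{algCase2:bottleneck} of Algorithm~\ref{algCase2},
\[
|Q_{v,i+1}| \cdot n^{2/3}\log^{i+1}n \;\leq\; \sum_{c \in Q_{v,i+1}} \bigl(\text{msgs forwarded by $v$ for $c$ in Stage $i$}\bigr) \;\leq\; \tilde{O}(n^{4/3}),
\]
which rearranges to $|Q_{v,i+1}| \leq \lceil n^{2/3}/\log^{i+3/2}n\rceil$ once the polylog slack is tracked carefully against the $\log^{1/2}n$ gap between the inductive hypothesis and the conclusion.

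For the base case $i = 0$, I would identify $Q_{v,0}$ with $\{c \in Q : v \in \tilde T_c\}$, the blocker nodes whose pruned in-tree in $\mathcal{C}^Q$ still contains $v$ after Step~\ref{algCase2:remove-Subtrees} of Algorithm~\ref{algCase2} (at the start of Stage~0, $v$ holds exactly its own value $\delta(v,c)$ for each such $c$). The required bound $\lceil n^{2/3}/\log^{1/2}n\rceil$ should come from the specific bottleneck construction (Algorithm 13 in Sec.~A.6.1): that algorithm must be designed to guarantee, beyond the bare $\tilde{O}(n^{4/3})$ congestion bound, also a cap of $\tilde O(n^{2/3}/\log^{1/2}n)$ on the number of pruned trees a single node can lie in.

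The main obstacle is the base case. The bare congestion bound $\tilde{O}(n^{4/3})$ only yields $|Q_{v,0}|\leq\tilde O(n^{4/3})$ (each remaining tree contributes at least $v$'s own message through $v$), which is far weaker than needed, so the argument hinges on extracting a sharper structural guarantee from Algorithm 13 rather than from the generic congestion bound alone. The inductive step by contrast is clean: it is essentially an averaging argument, and the only subtlety is verifying that the polylog terms hidden inside $\tilde{O}(n^{4/3})$ fit within the $\log^{1/2}n$ slack between Corollary 4.10's per-stage message lower bound and the target bound of the lemma.
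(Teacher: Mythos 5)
Your ``inductive step'' is exactly the paper's proof, minus the induction wrapper: the paper argues directly that every $c\in Q_{v,i}$ must still owe at least $n^{2/3}\log^{i+1}n$ messages through $v$ (by Corollary~\ref{cor:completion} applied to Stage $i-1$, since otherwise $v$ would already have sent all of them), and then divides the total congestion bound $n\sqrt{|Q|}=n^{4/3}\log^{1/2}n$ from Lemma~\ref{lemma:termination} by this per-node quota. Note that your inductive hypothesis on $|Q_{v,i}|$ is never actually used in your step, so the induction scaffolding can be dropped.

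Your worry about the base case is legitimate, but your proposed resolution is not what the paper does. For $i=0$ the only available bound is the trivial $|Q_{v,0}|\le|Q|\le n^{2/3}\log n$; Algorithm~\ref{algBottleneck} guarantees only the aggregate congestion cap $total\_count_v\le n\sqrt{|Q|}$ and provides no cap of the form $\tilde{O}(n^{2/3}/\log^{1/2}n)$ on the number of pruned trees containing $v$, so the structural guarantee you hope to extract from it does not exist. The paper's proof simply invokes Corollary~\ref{cor:completion} for Stage $i-1$ and hence implicitly covers only $i\ge 1$; the $i=0$ case of the lemma as literally stated is not established. This discrepancy is absorbed downstream: in the proof of Lemma~\ref{lemma:algCase2:round-robin} the weaker bound $|Q_{v,0}|\le n^{2/3}\log n$ still gives $\tilde{O}(n^{4/3})$ rounds for Stage $0$, so the overall result survives with a slightly worse polylog factor. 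In short, your argument for $i\ge1$ matches the paper, and the gap you identified at $i=0$ is a gap in the paper's own proof rather than one you were expected to close via Algorithm~\ref{algBottleneck}.
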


\begin{proof}
By Corollary~\ref{cor:completion} after the completion of Stage $i-1$, every node $v$ would have sent all or at least $n^{2/3} \log^{i+1} n$ different distance values for all blocker nodes in $Q_{v,i-1}$.
Thus the set $Q_{v,i}$ will consist of only those nodes from $Q$ for 
which $v$ needs to send at least $n^{2/3} \log^{i +1} n$ different distance values.
Since congestion at any node $v$ is at most $n\sqrt{|Q|} = n^{4/3}\log^{1/2} n$ (using Lemma~\ref{lemma:termination}),
the size of $Q_{v,i}$
 is at most $n^{4/3}\log^{1/2} n / n^{2/3}\log^{i+1} n = n^{2/3}/\log^{i+1/2} n$.
 This establishes the lemma.
\end{proof}

\begin{proof}[Proof of Lemma~\ref{lemma:algCase2:round-robin}]
Since $|Q_{v,i}| \leq n^{2/3}/\log^{i+1/2} n$ (by Lemma~\ref{lemma:Qvj}), Stage $i$ runs for 
$n^{2/3}/\log^{i+1/2} n \cdot (n^{2/3}\log^{i +1} n  + n^{2/3}) \leq n^{4/3}\log^{1/2} n + n^{4/3}$ 
rounds.
Lemma~\ref{lemma:algCase2:round-robin} is immediately established from Corollary~\ref{cor:completion} and the fact 
that there are $(1/3)\cdot \log n / \log \log n - 1$ stages.
\end{proof}

\section{Overview of $h$-hop Shortest Path Extension Algorithm}	\label{sec:h-hop-extensions}

We now describe an algorithm for computing $h$-hop \textit{extensions} (Step~\ref{extended} of Algorithm~\ref{algAPSP}) based on the Bellman-Ford algorithm~\cite{Bellman58}.
This algorithm is also used as a step in the randomized APSP algorithm of Huang et al.~\cite{HNS17}.
Here every blocker node $c \in Q$ knows its shortest path distance value from every source node 
$x \in V$ and the 
goal is to extend the shortest path from $x$ to $c$ by additional $h$ hops.

This algorithm works as follows:
Fix a source $x \in V$.
Every blocker node $c \in Q$ initializes the shortest path distance from $x$ to $\delta (x,c)$ (this value is already
known to every $c$).
We then run Bellman-Ford algorithm at every node $v \in V$ for source node
$x$ for $h$ rounds using these initialized 
values.
We repeat this for every $x \in V$.

After this algorithm terminates, every sink node $t \in V$ knows the shortest path distance from every $x \in V$.
Since we run Bellman-Ford for $h$ rounds per source node, 
for each $x\in V$, this whole algorithm takes $O(nh)$ rounds in total. 
This leads to the following lemma.

\begin{lemma}	\label{lemma:h-hop-extension}
The $h$-hop shortest path extensions can be computed in $O(nh)$ rounds for every source $x \in V$ using 
Bellman-Ford algorithm.
\end{lemma}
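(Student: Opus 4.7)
The plan is to establish correctness first and then bound the number of rounds. For correctness, fix a source $x \in V$ and a sink $t \in V$. I would consider the shortest path $P$ from $x$ to $t$. If $P$ has at most $h$ hops, then initializing $x$ with distance $0$ (and every blocker with its known $\delta(x,c)$) and running $h$ rounds of Bellman-Ford suffices by the standard Bellman-Ford correctness argument. Otherwise, let $c$ be the \emph{last} blocker node on $P$; since $Q$ is a blocker set for the $h$-CSSSP, the subpath from $c$ to $t$ has at most $h$ hops (any longer prefix of $P$ after $c$ would contain another blocker node, contradicting the choice of $c$). Because $c$ starts the computation with the correct value $\delta(x,c)$, the relaxation of edges along the $c$-to-$t$ suffix during $h$ rounds will propagate the value $\delta(x,c) + \delta(c,t) = \delta(x,t)$ to $t$.

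Next, I would argue that simultaneously initializing every $c \in Q$ with $\delta(x,c)$ (rather than running a single-source Bellman-Ford from $x$) never yields a value smaller than $\delta(x,t)$ at any node. This is the standard monotonicity property: each initial value equals a genuine $x$-to-$c$ distance, and each relaxation replaces a tentative distance with the weight of an actual walk from $x$, so the invariant ``every tentative distance at $v$ equals the weight of some $x$-to-$v$ walk'' holds throughout the $h$ rounds. Combined with the previous paragraph this gives exact correctness of $\delta(x,t)$ at $t$ after $h$ rounds of Bellman-Ford for source $x$.

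For the round complexity I would run the above procedure sequentially, one source at a time: for each of the $n$ choices of $x \in V$, perform $h$ rounds of synchronous Bellman-Ford with the initialization described above. Since each source uses a disjoint block of $h$ rounds and a single Bellman-Ford round fits within the \textsc{Congest} bandwidth (each node sends at most one distance value along each incident edge, as required), the total number of rounds is $n \cdot h = O(nh)$.

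The only non-routine point is justifying that the ``last blocker'' $c$ on the shortest $x$-to-$t$ path is within $h$ hops of $t$; this is where the blocker set property is invoked, and it is the step I would be most careful to state explicitly. Everything else is standard Bellman-Ford analysis together with the sequential scheduling of $n$ independent $h$-round executions, which gives the claimed $O(nh)$ bound.
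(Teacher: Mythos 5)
Your proposal is correct and matches the paper's approach: initialize each blocker node with $\delta(x,c)$, run $h$ rounds of Bellman--Ford per source sequentially, and multiply by the $n$ sources to get $O(nh)$. The paper itself states this lemma with only the algorithm description and the round count, deferring the correctness argument (the decomposition of the $x$-to-$t$ shortest path at blocker nodes, with the final blocker within $h$ hops of $t$) to the proof of Theorem~\ref{thm:main}; your ``last blocker'' argument and the monotonicity invariant supply exactly those details and are consistent with the paper's reasoning.
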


\section{Conclusion}

We have presented a new deterministic distributed algorithm for computing exact weighted APSP in $\tilde{O}(n^{4/3})$ rounds in both directed and undirected
graphs with arbitrary edge weights.
This algorithm improves on the $\tilde{O}(n^{3/2})$ round APSP algorithm of~\cite{ARKP18}.
At the heart of our algorithm is an efficient distributed algorithm for sending the distance values from source nodes 
to the blocker nodes and an improved
deterministic algorithm for computing the blocker set using pairwise independence and derandomization.
We believe that both these techniques may be of independent interest for obtaining results for other distributed
graph problems.

The main open question left by our work is whether we can get a deterministic algorithm that can match the current $\tilde{O}(n)$ randomized bound for computing 
weighted APSP~\cite{BN19}.

\subsection*{Acknowledgement.}
We thank Valerie King for suggesting using the techniques in Berger et al.~\cite{BRS94} for the blocker set construction.

\bibliographystyle{abbrv}
\bibliography{references}

\appendix
\section{Appendix}	\label{sec:appendix}

\subsection{Broadcast Primitives} \label{sec:broadcast}

In this paper we use the following two broadcast primitives quite extensively.
These primitives are widely known and we restate them here only for completeness.
See~\cite{ARKP18} for more details.

 \begin{lemma}[~\cite{ARKP18}] \label{lem:k-bc}
  A node  $v$ can broadcast  $k$ local values to all other nodes 
    reachable from it
  deterministically in $O(n+k)$ rounds.
  \end{lemma}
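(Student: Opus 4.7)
My plan is to combine a standard BFS-tree construction rooted at $v$ with a pipelined broadcast along that tree. First I would have $v$ initiate a flooding procedure: in round $1$, $v$ sends a discovery token to each neighbor; in each subsequent round, every node that has already been reached forwards the token on every incident edge on which it has not yet sent. Each node fixes as its parent the neighbor from which it first received the token (breaking ties by ID). This produces a BFS in-tree rooted at $v$ reaching every node in the connected component of $v$, with depth $D \le n-1$, in $O(n)$ rounds; one more round of parent-to-child acknowledgements lets every node learn its children in the tree.

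Next I would enumerate the $k$ values to be broadcast as $m_1, m_2, \ldots, m_k$. Once the BFS tree is known, $v$ sends $m_i$ to all of its children in the tree during round $D+i$. Every other node $u$ maintains a FIFO buffer of values received from its parent and, in each round, forwards the next unsent value from its buffer to all of its children in the tree. A straightforward induction on tree depth shows that if $u$'s parent forwards $m_i$ in round $r$, then $u$ forwards $m_i$ in round $r+1$; hence $m_i$ reaches every node at depth $d$ by round $D+i+d-1$. Taking $d \le D \le n-1$ and $i \le k$, the last value arrives by round $O(n+k)$, giving the claimed bound.

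The main thing to verify is that bandwidth is respected in the \congest{} model: along any tree edge, each side sends exactly one value per round, so no two messages ever contend for the same edge, and every transmission uses a single word. A minor bookkeeping issue is synchronizing the start of the pipelining phase (so that intermediate nodes do not begin forwarding before the tree is established); this is handled by tagging the BFS discovery token with a round counter so that every node locally knows when round $D+1$ begins, or, alternatively, by having $v$ wait $n$ rounds before injecting $m_1$. Neither obstacle is substantive, so the main content of the proof is the pipelining induction above.
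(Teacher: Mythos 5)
Your proof is correct and is exactly the standard argument for this primitive: build a BFS tree rooted at $v$ in $O(n)$ rounds, then pipeline the $k$ values down the tree so that value $m_i$ reaches depth $d$ by round $O(n) + i + d$, giving $O(n+k)$ overall. The paper itself gives no proof of this lemma --- it is stated as a known broadcast primitive cited from~\cite{ARKP18} --- and your pipelining induction, together with the observation that each tree edge carries only one word per round, is precisely the argument that reference relies on.
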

  
   \begin{lemma}[~\cite{ARKP18}] \label{lem:all-to-all-bc}
  All $v\in V$ can broadcast a local value  to every other node 
they can reach
 in $O(n)$ rounds deterministically.
  \end{lemma}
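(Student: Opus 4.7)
The plan is to reduce all-to-all broadcast to pipelined upcast followed by pipelined downcast along a BFS spanning tree of the underlying undirected communication graph $U_G$. First I would elect a leader $r$ (say, the smallest-ID node, found by a standard $O(D)$-round leader election) and build a BFS tree $T$ of $U_G$ rooted at $r$; since the diameter $D$ of $U_G$ satisfies $D \leq n-1$, this costs $O(n)$ rounds. Each node $v$ then packages its local value into a single message $m_v$ tagged with $ID(v)$, which fits in the $O(\log n)$-bit bandwidth of one \congest{} word (or a constant number of such words, as discussed in Section~\ref{sec:congest}).

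Next I would run a pipelined upcast of the $n$ messages $\{m_v\}_{v \in V}$ toward $r$ in $T$: in each round every internal node forwards the unsent buffered message with smallest $ID$ to its parent. The standard pipelining analysis shows that all $n$ messages arrive at $r$ within $O(D+n) = O(n)$ rounds, because a message can be delayed only behind messages of strictly smaller ID, so the $k$-th smallest ID reaches $r$ by round $D + k$. I would then run a symmetric pipelined downcast from $r$: the root emits the $n$ collected pairs in ID order, and each internal node forwards them to its children in the same order. By the same argument, the downcast completes in $O(D + n) = O(n)$ rounds. After the downcast, every node holds the local value of every other node in $V$, and in particular of every node that can reach it in $G$, as required.

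The only substantive technical point is the $O(D+n)$ pipelined upcast bound, which is standard but relies on the single-word-per-edge-per-round bandwidth assumption of the \congest{} model so that one $(ID, value)$ pair can traverse each tree edge per round. Adding the $O(n)$ cost for BFS construction to the $O(D+n)$ cost of each of the upcast and downcast phases gives the claimed $O(n)$ round bound. Reachability is handled at no extra cost: since every node receives every broadcast value, each recipient simply retains those from the senders that reach it in $G$ (this filter can be done locally or, if needed, with a separately propagated reachability bit piggybacked onto each message).
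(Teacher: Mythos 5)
Your proposal is correct and is essentially the standard argument behind this primitive, which the paper itself does not re-prove but imports from~\cite{ARKP18}: pipelined upcast and downcast of the $n$ tagged values along a BFS tree of $U_G$ (per connected component), giving $O(D+n)=O(n)$ rounds. The final ``reachability filter'' is unnecessary: since every node that $v$ can reach in $G$ lies in $v$'s component of $U_G$, broadcasting to the whole component already satisfies the lemma, and unreachable nodes receiving extra values is harmless.
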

    

\subsection{Consistent $h$-hop SSSP ($h$-CSSSP)}	\label{sec:csssp}

The notion of $h$-hop \textit{Consistent SSSP (CSSSP)} was introduced recently in~\cite{AR19}.
The goal of this new notion was to create a consistent collection of paths across all trees in the collection, i.e. a path from $u$ to $v$ is 
same in all trees $T$ in the
CSSSP collection $\mathcal{C}$ (in which such a path exists).

The difference between an $h$-hop SSSP for source $x$ and the tree for source $x$ in the $h$-CSSSP collection $\mathcal{C}$ is that
the former contains path from $x$ to every $t \in V$ for which there exists a path with at most $h$ hops.
However this is not guaranteed in the latter case as the CSSSP only guarantees that if the shortest path from $x$ to $t$ has at most $h$
hops, then this path will be present in the corresponding tree for source $x$ in the CSSSP collection.
This is the major difference between these two notions.
We now re-state the definition of CSSSP from~\cite{AR19} here:

\begin{definition}[{\bf CSSSP}~\cite{AR19}]	
  Let $H$ be a collection of rooted trees of height $h$ in a graph $G=(V,E)$. Then $H$ is an \emph{$h$-CSSSP collection} (or simply an \emph{$h$-CSSSP})
  if for every $u, v \in V$ the path from $u$ to $v$ is 
  the same in each of the trees in $H$ (in which such a path exists), and is the $h$-hop shortest path from $u$ to $v$ in the $h$-hop tree $T_u$ rooted at $u$. 
  Further, each $T_u$ contains every vertex $v$ that has a path with at most $h$ hops from $u$ in $G$ that has distance $\delta(u,v)$.
  \end{definition}

   \cite{AR19} describes a very simple algorithm for constructing $h$-CSSSP collection:
  First compute $2h$-hop SSSPs for every source $x$.
  To compute CSSSP, just retain the initial $h$ hops of each of these $2h$-hop SSSPs.
  We can construct these $h$-hop SSSPs using Bellman-Ford algorithm, leading to the following lemma.
  (See~\cite{AR19} for further details.)
  
  \begin{lemma}[\cite{AR19}]	\label{lemma:CSSSP}
$h$-CSSSPs for source set $S$ can be computed in $O(|S|\cdot h)$ rounds using  the Bellman-Ford algorithm.
\end{lemma}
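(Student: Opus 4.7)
The plan is to verify both the correctness and the round bound of the simple construction outlined just before the lemma: for each $x \in S$ in sequence, run the distributed Bellman--Ford algorithm rooted at $x$ for $2h$ rounds, with a fixed deterministic tie-breaking rule that depends only on vertex IDs and the edge weights along the tentative path (e.g., among all incoming relaxations attaining the minimum distance, prefer the smaller-ID parent, and then the lexicographically smaller (ID, weight) sequence along the path). Let $\tilde T_x$ be the resulting $2h$-hop tree and let $T_x$ be obtained from $\tilde T_x$ by retaining only the edges on paths of at most $h$ hops from $x$. Set $\mathcal{C} = \{T_x : x \in S\}$.

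For correctness I would check the two requirements of Definition~\ref{def:CSSSP}. First, each $T_x$ contains every $v$ for which there is a $\leq h$-hop path from $x$ in $G$ of weight $\delta(x,v)$: this is a standard property of $h$-round Bellman--Ford, and the first $h$ levels of the $2h$-round Bellman--Ford coincide with it. Second, for every $u, v \in V$ the $u$-to-$v$ path is the same in every tree of $\mathcal{C}$ that contains both. The key observation is that if $u$ sits at depth $d \leq h$ and $v$ at depth $d' \leq h$ in $T_x$ with $v$ a descendant of $u$, then the subpath from $u$ to $v$ has hop-length $d' - d \leq h$ and, by the optimal-substructure property of shortest paths, is itself a $2h$-hop shortest $u$-to-$v$ path under the fixed tie-breaking rule. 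Because the tie-breaker is a function of the path alone, the same path must be selected in $\tilde T_u$ (and therefore in $T_u$, since $d' - d \leq h$), matching the $u$-to-$v$ subpath extracted from any other $\tilde T_x$.

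For the round complexity, Bellman--Ford from a single source runs in $O(h)$ rounds when messages carry a (tentative-distance, hop-count, parent-id) triple, which fits within the \congest{} bandwidth assumption stated in Section~\ref{sec:congest}. Executing it sequentially for each of the $|S|$ sources gives $O(|S|\cdot h)$ rounds in total. The restriction to the first $h$ levels is a purely local operation at every vertex and contributes no additional rounds.

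The only place where the argument is delicate, and therefore the step I would scrutinize most carefully, is the consistency claim: the tie-breaking rule must be a deterministic function of the $u$-to-$v$ subpath alone and not of the source $x$ that triggered the computation, for otherwise two different source trees could disagree on the chosen $u$-to-$v$ path even though both are optimal. Fixing the rule to depend only on a canonical ordering of candidate predecessors (IDs, then the incoming path tag) and proving by induction on $d' - d$ that all sources agree on the selected predecessor at each level resolves this, and the remainder of the proof is routine.
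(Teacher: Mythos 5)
Your construction --- run $2h$-hop Bellman--Ford from each source in sequence with a source-independent tie-breaking rule, truncate each tree to its first $h$ levels, and charge $O(h)$ rounds per source for $O(|S|\cdot h)$ total --- is exactly the construction the paper itself sketches in Section~\ref{sec:csssp}, with the consistency proof deferred to~\cite{AR19}. One caution on the step you rightly flag as delicate: ``optimal substructure'' for hop-bounded shortest paths is weaker than you state (a subpath of a $2h$-hop shortest $x$-to-$v$ path need not be a $2h$-hop shortest path between its own endpoints, since splicing in a cheaper $2h$-hop alternative can blow the hop budget), and handling this is precisely why the cited construction computes the $2h$-hop trees and then truncates at $h$; apart from that gap, which the paper also does not fill, your proposal matches the paper's approach.
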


The CSSSP collection have the following two important properties which we use throughout in this paper.
We call a tree $T$ rooted at a vertex $c$ an out-tree if all the edges incident to $c$ are outgoing edges from $c$ and
 we call $T$ an in-tree if all the edges incident to $c$ are incoming edges.

\begin{lemma}[\cite{AR19}]	\label{lemma:cOutTree}
Let $\mathcal{C}$ be an $h$-CSSSP collection. Let $c$ be a node in $G$ and
let $T$ be the union of the edges in the collection of subtrees rooted at $c$ in the trees in $\mathcal{C}$.
Then $T$ forms an out-tree rooted at $c$.
\end{lemma}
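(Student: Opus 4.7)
The plan is to use the two defining properties of an $h$-CSSSP collection: (i) each $T_x\in\mathcal{C}$ is a rooted tree, so within each such tree every node has a unique parent and the edges of the subtree rooted at $c$ point away from $c$; and (ii) the consistency property, which says that for every $u,v\in V$ the path from $u$ to $v$ (whenever it exists) is identical across all trees in $\mathcal{C}$. Putting these together will give that $T$ is a well-defined out-tree rooted at $c$.

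First I would fix a tree $T_x\in\mathcal{C}$ that contains $c$, and note that the subtree $T_x^c$ rooted at $c$ is, by construction, an out-tree rooted at $c$: every edge in it is directed from a node to one of its children in $T_x$, and $c$ is the root. Consequently, every edge contributed to $T$ by any such $T_x^c$ is an outgoing edge along the direction $c \to \text{leaf}$, so all edges of $T$ are oriented away from $c$ on any directed path we trace starting at $c$.

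Next I would verify uniqueness of parents. Fix any node $v\neq c$ that lies in $T$. By definition $v$ lies in the subtree of $c$ in at least one tree $T_x\in\mathcal{C}$, so the $T_x$-path from $c$ to $v$ exists. By the CSSSP consistency property applied to the pair $(c,v)$, this $c$-to-$v$ path is identical in every tree of $\mathcal{C}$ that contains such a path; in particular, the predecessor of $v$ on the $c$-to-$v$ path is the same vertex $p(v)$ across all those trees. Any tree $T_y\in\mathcal{C}$ whose subtree rooted at $c$ contains $v$ must realize exactly this $c$-to-$v$ path (because the subtree's unique $c$-to-$v$ path is the same as the $T_y$-path from $c$ to $v$), and therefore contributes the same edge $(p(v),v)$ to $T$. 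Hence $v$ has a single parent in $T$.

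Finally, I would combine these facts: every non-$c$ node in $T$ has a uniquely defined parent, and following parent pointers from any such node leads back to $c$ along a directed path (since the path from $c$ to $v$ has finite length and every intermediate node also has its unique parent on this same path). Thus $T$ is connected, acyclic, every edge points from parent to child, and $c$ has in-degree $0$ in $T$, which is exactly the definition of an out-tree rooted at $c$. The main obstacle, which the CSSSP definition neatly removes, is to argue that independent subtrees drawn from different trees $T_x$ cannot assign different parents to the same node $v$; this is precisely ruled out by the consistency of $c$-to-$v$ paths across all trees in $\mathcal{C}$.
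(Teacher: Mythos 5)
Your proof is correct. The paper itself gives no proof of this lemma---it is imported directly from~\cite{AR19}---but your argument is exactly the intended one: the CSSSP consistency property forces the $c$-to-$v$ path, and hence the predecessor of $v$, to agree across every tree whose subtree at $c$ contains $v$, so each non-root node of $T$ has a unique parent, and reachability of every such $v$ from $c$ along its (consistent) path then makes $T$ an out-tree rooted at $c$.
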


\begin{lemma}[\cite{AR19,ARKP18}]	\label{lemma:cInTree}
 Let $\mathcal{C}$ be an $h$-CSSSP collection. Let $c$ be a node in $G$ and
let $T$ be the union of the edges on the tree-path from the root of each tree in $\mathcal{C}$ to $c$ (for the trees that contain $c$).
Then $T$ forms an in-tree rooted at $c$.
\end{lemma}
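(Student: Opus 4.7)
The plan is to reduce the in-tree structure of $T$ to the CSSSP consistency property: for every pair $u,v \in V$, the $u$-to-$v$ path is identical in every tree of $\mathcal{C}$ in which such a path exists. Given that, it suffices to show that every vertex of $T$ other than $c$ has a unique outgoing edge in $T$ pointing toward $c$, and that iterating this edge from any vertex terminates at $c$.

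First I would fix an arbitrary vertex $v \neq c$ that appears on some edge of $T$. By the construction of $T$, there exists at least one tree $T_u \in \mathcal{C}$ (rooted at $u$ and containing $c$) such that $v$ lies on the root-to-$c$ path in $T_u$. The suffix of that path starting at $v$ is precisely the $v$-to-$c$ path realized inside $T_u$. Next I would invoke CSSSP consistency: this $v$-to-$c$ path must coincide with the $v$-to-$c$ path in every other tree of $\mathcal{C}$ that contains such a path. Consequently, the first edge $(v,w)$ out of $v$ on this path is independent of which root-to-$c$ path through $v$ we started with, so $v$ contributes exactly one outgoing edge to $T$.

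From here I would close out the argument by tracing: starting at any $v \neq c$ in $T$ and following the unique outgoing edge at each step traces out the $v$-to-$c$ path in some (hence every) tree containing $v$ and $c$, so the walk has finite length and ends at $c$. This simultaneously gives connectivity (every vertex of $T$ reaches $c$) and acyclicity (each non-$c$ vertex has out-degree exactly one in $T$, and the walk terminates at $c$), so $T$ is an in-tree rooted at $c$.

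I do not expect a substantive obstacle here; the entire content of the proof is an unfolding of the CSSSP consistency clause. The only subtlety worth flagging is that a single vertex $v$ may simultaneously sit on root-to-$c$ paths in many different trees $T_u, T_{u'}, \ldots$, and it is exactly the consistency property of $\mathcal{C}$ (rather than anything about being \emph{shortest} paths per se) that forces all those trees to agree on the edge leaving $v$ toward $c$.
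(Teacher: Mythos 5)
The paper does not actually prove this lemma itself; it is stated as imported from the cited prior work, so there is no in-paper argument to compare against. Your proof is correct and is the natural self-contained derivation from the consistency clause of Definition~\ref{def:CSSSP}: since the $v$-to-$c$ path is identical across all trees that contain it, every non-$c$ vertex of $T$ acquires exactly one outgoing edge, and following these unique edges reproduces that common $v$-to-$c$ path (of at most $h$ hops), which terminates at $c$ and rules out cycles. The one point worth stating explicitly if you write this up is the induction hidden in your tracing step: each successor $w$ on the path is itself an ancestor of $c$ in the same tree, so its unique out-edge in $T$ is the next edge of the same path, which is what guarantees the walk follows a single tree path rather than jumping between trees.
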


\subsection{ $O(n)$ Sample Space Construction}	\label{sec:sample-space}

In this section we describe Luby's approach~\cite{Luby93} to speed up exhaustive search for finding a good 
sample point (or good set $A$, Def.~\ref{def:good-set}) by replacing the sample space $\{0,1\}^n$ of size 
$2^n$ with another sample space of size $O(n)$.
Let $l$ be such that $2n < 2^l \leq 4n$ and let this new sample space be $\{0,1\}^l$.
For each $i \in \{1,n\}$, we consider $i$ as a binary string of length $l$, where the last bit $i_l$ is $1$.
We define $X_i (v) = \oplus_{i=1}^{l} (i_k\cdot z_k)$ for each $i \in \{1,n\}$ and $z \in \{0,1\}^l$, 
where $\oplus$ is addition modulo 2.
For a random string $w$ of length $l$, Luby~\cite{Luby93} showed that the variables $X_1 (w), \ldots, X_n (w)$
are pairwise independent and are identically distributed uniformly in $\{0,1\}$.
This claim allows us to find a good sample point (or good set) in this sample space instead of 
performing an exhaustive search.

\subsection{Correctness Proofs for Algorithm~\ref{algRandBlocker}}	\label{sec:correctness-rand-blocker}

In this Section we provide proofs for our randomized algorithm for computing blocker set for a given $h$-CSSSP collection 
$\mathcal{C}$.
Note that the proof of Lemmas~\ref{lemma:A-size}-\ref{lemma:Pij}, \ref{lemma:selection} and \ref{lemma:BlockerSize} is based on the 
analysis in~\cite{BRS94} and we adapt them here in our setting.
Table~\ref{table-rand-analysis} presents the notation we use in our analysis in this section.

 \begin{table*}
\centering
\caption{List of Notations Used in the Analysis of the Randomized Algorithm} \label{table-rand-analysis} 
\begin{tabular}{| c | l |}
\hline
$Q$ & blocker set (being constructed)\\
\hline
$\mathcal{C}$ & $h$-CSSSP collection	\\
\hline
$S$ & set of sources in $\mathcal{C}$\\
\hline
$h$ & number of hops in a path\\
\hline
$n$ & number of nodes \\
\hline
$V_i$ & set of nodes $v$ with $score(v) \geq (1+\epsilon)^{i-1}$ \\
\hline
$P_i$ & set of paths in $\mathcal{C}$ with at least one node in $V_i$ \\	
\hline
$P_{ij}$ & set of paths in $P_i$ with at least $(1+\epsilon)^{j-1}$ nodes in $V_i$ \\
\hline
$\epsilon, \delta$ & positive constants $\leq 1/12$	\\
\hline
$A$ & set constructed in Step 12 \\
\hline
$X_v$ & $1$ if $v$ is present in $A$, otherwise $0$ \\
\hline 
$Y_1$ & $\sum_{p \in P_i} \sum_{v \in V_i \cap p} X_v$ \\
\hline
$Y_2$ & $\sum_{p \in P_i} \sum_{v,v' \in V_i \cap p} X_{v}\cdot X_{v'}$ \\
\hline
$Y_3$ & $\sum_{p \in P_{ij}} \sum_{v \in V_i \cap p} X_v$ \\
\hline
$Y_4$ & $\sum_{p \in P_{ij}} \sum_{v,v' \in V_i \cap p} X_{v}\cdot X_{v'}$ \\
\hline
$n_{V_i,p}$ & number of nodes from $V_i$ in $p$	\\
\hline
$n_{v, P_{ij}}$ & number of paths in $P_{ij}$ that contain node $v$ \\
\hline
$score(v)$ &  number of root-to-leaf paths in $\mathcal{C}$ that contain $v$ (local var. at $v$)	\\
\hline
$score^{ij} (v)$ &  number of paths in $P_{ij}$ that contain $v$ (local var. at $v$)		\\
\hline
\end{tabular}
\end{table*}

\begin{lemma}	\label{correctness:blocker}
The set $Q$ constructed in Algorithm~\ref{algRandBlocker} is a blocker set for the  CSSSP collection $\mathcal{C}$.
\end{lemma}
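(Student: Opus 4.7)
The plan is to argue by contradiction. Suppose some root-to-leaf $h$-hop path $p_0$ in the original CSSSP collection $\mathcal{C}$ contains no node of the final blocker set $Q$ produced by Algorithm~\ref{algRandBlocker}. I will trace $p_0$ through the execution and derive a contradiction from the exit condition of the inner while loop at the very end.

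The key invariant I would establish is the following: at every point during the execution, a root-to-leaf $h$-hop path $p$ from the original collection is still present in the current (dynamically shrinking) $\mathcal{C}$ if and only if no node on $p$ has yet been added to $Q$. The only step that modifies $\mathcal{C}$ is Step~\ref{algRandBlocker:remove-Subtrees}, and Lemma~\ref{lemma:remove} guarantees that this step removes exactly the subtrees (and hence the root-to-leaf paths) passing through newly added blocker nodes; no other step alters $\mathcal{C}$. Hence any path that avoids $Q$ persists throughout the execution. Moreover, every node lying on such a persisting path has score at least $1$ after each recomputation in Step~\ref{reconstruct}, because the path itself always contributes $1$ to the score of each of its nodes.

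Applying this to $p_0$: throughout the execution, $p_0$ remains in $\mathcal{C}$ and every node $v \in p_0$ satisfies $score(v) \geq 1$ at every score recomputation. Now consider the terminal state of the nested loops: stage $i=1$, phase $j=1$. By Lemma~\ref{lemma:selection} the while loop in Steps~\ref{startWhile}-\ref{endWhile} terminates in finitely many iterations. Its negated guard on exit forces that no path in the freshly recomputed $P_1$ contains at least $(1+\epsilon)^{j-1} = 1$ node in $V_1$, i.e., $P_1 = \emptyset$. But $V_1$ is the set of nodes whose current score is at least $(1+\epsilon)^0 = 1$, so by the persistence invariant every node of $p_0$ lies in $V_1$, placing $p_0$ in $P_1$ by definition of $P_i$. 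This contradicts $P_1 = \emptyset$, so $Q$ must in fact cover every root-to-leaf $h$-hop path in $\mathcal{C}$.

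The main obstacle I anticipate is the bookkeeping around Step~\ref{reconstruct}, where $score$, $V_i$ and $P_i$ are all recomputed after every selection step. I must be careful to argue that the loop guard in Step~\ref{startWhile} is evaluated against the freshly recomputed $P_i$ both at entry and at exit, so that the terminal condition $P_1 = \emptyset$ really does hold at the moment the outermost for loop concludes. A secondary subtlety is confirming by inspection of Algorithm~\ref{algRandBlocker} that Step~\ref{algRandBlocker:remove-Subtrees} is the sole step that modifies $\mathcal{C}$ (so that the path-persistence invariant is preserved by every other step), and that scores are always taken with respect to the current $\mathcal{C}$. Termination of the outer for loops follows from their bounded index ranges, and termination of the inner while loop is already provided by Lemma~\ref{lemma:selection}, so no additional termination argument is needed.
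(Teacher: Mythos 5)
Your proof is correct and takes essentially the same route as the paper: the paper's own argument also rests on the observation that the while loop's exit condition at $i=1$, $j=1$ forces $P_1=\emptyset$, which is impossible if some uncovered path survives. You simply make explicit the persistence invariant and the fact that every node on a surviving path has score at least $1$ and hence lies in $V_1$, details the paper leaves implicit.
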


\begin{proof}
To show that $Q$ is a blocker set, we need to show that the computed blocker set $Q$ indeed covers all paths in the
 CSSSP collection $\mathcal{C}$.
The while loop in Steps~\ref{startWhile}-\ref{endWhile} runs as long as there is a path in $P_i$ with at least $(1+\epsilon)^{j-1}$ nodes in $V_i$ 
and since this loop terminated for $i=1$ and $j=1$, it implies that there is no path in $\mathcal{C}$ which is not covered by some node in $Q$.
\end{proof}

\begin{lemma}	\label{lemma:Vi}
If the check in Step~\ref{if} fails, then $|V_i| > \frac{(1+\epsilon)^j}{\delta^3}$.
\end{lemma}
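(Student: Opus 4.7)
The plan is a simple double-counting argument on the bipartite incidence between $V_i$ and $P_{ij}$. First I would express the total incidence count in two ways. Summing $score^{ij}(v)$ over $v\in V_i$ counts the pairs $(v,p)$ with $v\in V_i\cap p$ and $p\in P_{ij}$, which also equals $\sum_{p\in P_{ij}} |V_i\cap p|$. By the definition of $P_{ij}$ recalled in Table~\ref{table-blocker}, every path $p\in P_{ij}$ has at least $(1+\epsilon)^{j-1}$ nodes in $V_i$, so
\[
\sum_{v\in V_i} score^{ij}(v) \;=\; \sum_{p\in P_{ij}} |V_i\cap p| \;\geq\; |P_{ij}|\cdot (1+\epsilon)^{j-1}.
\]

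Next I would exploit the hypothesis that the check in Step~\ref{if} fails. That hypothesis is exactly the statement that every $c\in V_i$ satisfies $score^{ij}(c)\leq (\delta^3/(1+\epsilon))\cdot |P_{ij}|$, which gives the upper bound
\[
\sum_{v\in V_i} score^{ij}(v) \;\leq\; |V_i|\cdot \frac{\delta^3}{1+\epsilon}\cdot |P_{ij}|.
\]
Combining the two inequalities and dividing by $|P_{ij}|$ (which is positive because we are inside the while loop of Steps~\ref{startWhile}--\ref{endWhile}, whose guard guarantees the existence of at least one path in $P_{ij}$) yields $|V_i|\geq (1+\epsilon)^j/\delta^3$.

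The only subtle point is obtaining the strict inequality stated in the lemma. I would argue that $|V_i\cap p|$ is an integer, so for any $p\in P_{ij}$ it is in fact at least $\lceil (1+\epsilon)^{j-1}\rceil$, which is strictly greater than $(1+\epsilon)^{j-1}$ whenever the latter is not an integer, and equals $1+(1+\epsilon)^{j-1}$ at worst when $j=1$; either way the chain of inequalities above becomes strict, producing $|V_i|>(1+\epsilon)^j/\delta^3$. (Alternatively, one can absorb the strictness into the constants, since the only use of this lemma in Algorithm~\ref{algRandBlocker} is to supply the asymptotic lower bound $|V_i|=\Omega((1+\epsilon)^j/\delta^3)$.) I do not expect any real obstacle here; the entire proof is a one-line counting argument plus a careful reading of the failure condition of Step~\ref{if}.
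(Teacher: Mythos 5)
Your proof is correct and is essentially the same double-counting argument the paper gives: the failure of the check in Step~\ref{if} bounds $\sum_{v\in V_i} score^{ij}(v)$ above by $|V_i|\cdot\frac{\delta^3}{1+\epsilon}\cdot|P_{ij}|$, while the definition of $P_{ij}$ bounds it below by $|P_{ij}|\cdot(1+\epsilon)^{j-1}$, and comparing the two yields the claim. Your closing discussion of where the strictness comes from is in fact more careful than the paper's own proof, which simply writes the strict inequality without comment.
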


\begin{proof}
Since no node in $V_i$ covers a $\frac{\delta^3}{(1+\epsilon)}$ fraction of paths from $P_{ij}$, hence the total $score_{ij}$ values (defined in Step~\ref{computeScoreij})
 for all nodes in $V_i$ has value at most $|V_i|\cdot \frac{\delta^3}{(1+\epsilon)}\cdot |P_{ij}|$.
 And since every path in $P_{ij}$ has at least $(1+\epsilon)^{j-1}$ nodes in $V_i$, 
 $$ |V_i|\cdot \frac{\delta^3}{(1+\epsilon)}\cdot |P_{ij}| > |P_{ij}|\cdot (1+\epsilon)^{j-1}$$
This establishes that $|V_i| > \frac{(1+\epsilon)^j}{\delta^3}$.
\end{proof}

\begin{lemma}	\label{lemma:A-size}
The set $A$ constructed in Step~\ref{pick2} of Algorithm~\ref{algRandBlocker} has size at most 
$(\delta + 2\delta^2)\cdot \frac{|V_i|}{(1+\epsilon)^j}$ and at least $(\delta - 2\delta^2)\cdot \frac{|V_i|}{(1+\epsilon)^j}$  with probability at least $3/4$.
\end{lemma}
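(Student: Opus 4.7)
The plan is a straightforward Chebyshev argument that exploits pairwise independence of the indicator variables $X_v$ together with the lower bound on $|V_i|$ guaranteed when the test in Step~\ref{if} fails. Concretely, I would write $|A| = \sum_{v \in V_i} X_v$, where $X_v$ is the pairwise-independent Bernoulli variable with success probability $p = \delta/(1+\epsilon)^j$ defined in Step~\ref{pick2}. Set $\mu := \mathbb{E}[|A|] = p\,|V_i| = \delta \,|V_i|/(1+\epsilon)^j$. Our aim is to show $\Pr(\bigl||A|-\mu\bigr| \ge 2\delta\mu) \le 1/4$, which is exactly the target bound once one expands $\mu \pm 2\delta\mu = (\delta \pm 2\delta^2)\,|V_i|/(1+\epsilon)^j$.

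Next I would bound the variance. Since the $X_v$'s are pairwise independent, all cross terms vanish, so
\[
\mathrm{Var}(|A|) \;=\; \sum_{v \in V_i} \mathrm{Var}(X_v) \;=\; |V_i|\,p(1-p) \;\le\; |V_i|\,p \;=\; \mu.
\]
Applying Chebyshev's inequality with deviation $t = 2\delta\mu$ gives
\[
\Pr\bigl(\,\bigl||A|-\mu\bigr|\ge 2\delta\mu\,\bigr) \;\le\; \frac{\mathrm{Var}(|A|)}{(2\delta\mu)^2} \;\le\; \frac{\mu}{4\delta^2\mu^2} \;=\; \frac{1}{4\delta^2\mu} \;=\; \frac{(1+\epsilon)^j}{4\delta^3\,|V_i|}.
\]

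Finally I would invoke Lemma~\ref{lemma:Vi}: since we are in the branch where the check in Step~\ref{if} failed, we have $|V_i| > (1+\epsilon)^j/\delta^3$, i.e.\ $(1+\epsilon)^j/|V_i| < \delta^3$. Plugging this in yields $\Pr(\bigl||A|-\mu\bigr| \ge 2\delta\mu) < 1/4$, so $|A|$ lies in the stated range with probability more than $3/4$.

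I do not anticipate any real obstacle here; the only subtlety is to be explicit that pairwise independence (all that Step~\ref{pick2} supplies via the 2-independent hash family) is enough for Chebyshev, and that Lemma~\ref{lemma:Vi} is applicable precisely in the branch of Algorithm~\ref{algRandBlocker} where the random selection step is invoked, so the hypothesis $|V_i| > (1+\epsilon)^j/\delta^3$ is available for free.
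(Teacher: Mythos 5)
Your proposal is correct and follows essentially the same route as the paper: write $|A|=\sum_{v\in V_i}X_v$, bound the variance by the mean using pairwise independence, apply Chebyshev with deviation $2\delta^2|V_i|/(1+\epsilon)^j$, and invoke Lemma~\ref{lemma:Vi} (valid in the else-branch) to make the failure probability at most $1/4$. The only cosmetic difference is that the paper fixes the failure probability at $1/4$ and shows the resulting $2\sigma$ deviation is at most the target, whereas you fix the target deviation and bound the failure probability; these are equivalent.
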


\begin{proof}
Consider random variable $X_v$ where $X_v = 1$ if $v$ is present in $A$, otherwise $X_v = 0$.
Thus $\sum_{v \in V_i} X_v$ denotes the size of $A$.
We now calculate its expectation and variance.

\begin{equation}	\label{eq-E[A]}
E[\sum_{v \in V_i} X_v] = |V_i|\cdot \frac{\delta}{(1+\epsilon)^j}
\end{equation}

\begin{equation}	\label{eq-Var(A)}
Var[\sum_{v \in V_i} X_v] = |V_i|\cdot Var[X_v] \leq |V_i|\cdot E[X_v^2] = |V_i|\cdot \frac{\delta}{(1+\epsilon)^j}
\end{equation}

We now use Chebyshev's inequality to get an upper bound on the size of $A$.
Using Chebyshev's inequality the following holds with probability at least $3/4$:

\begin{align*}
||A| -E[|A|]| & \leq 2\sqrt{Var[|A|]}	\\
& \leq 2\sqrt{ |V_i|\cdot \frac{\delta}{(1+\epsilon)^j}} \\
& \leq 2\cdot |V_i| \cdot \frac{\delta^2}{(1+\epsilon)^j} \text{\hspace{.1in} (by Lemma~\ref{lemma:Vi} $\frac{1}{|V_i|} < \frac{\delta^3}{(1+\epsilon)^j}$)} \\
|A| & \leq |V_i|\cdot \frac{\delta}{(1+\epsilon)^j} +  2\cdot |V_i| \cdot \frac{\delta^2}{(1+\epsilon)^j}
\end{align*}

Using the above analysis we can also show that $|A| \geq (\delta - 2\delta^2)\cdot \frac{|V_i|}{(1+\epsilon)^j}$ with probability at least $3/4$.
\end{proof}

\begin{lemma}	\label{lemma:Pi}
The set $A$ constructed in Step~\ref{pick2} of Algorithm~\ref{algRandBlocker} covers at least 
$|A|\cdot (1+\epsilon)^i \cdot (1-3\delta-\epsilon)$ paths in $P_i$ with probability at least $1/2$.
\end{lemma}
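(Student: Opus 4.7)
The plan is to lower-bound the number $C$ of paths in $P_i$ covered by $A$ via a pointwise Bonferroni inequality, and then convert the resulting bound (in terms of $Y_1$) into the target bound (in terms of $|A|(1+\epsilon)^i$) using a purely deterministic inequality that exploits $score(v) \ge (1+\epsilon)^{i-1}$ for every $v \in V_i$. First I would verify that for each $p \in P_i$, a case analysis on the count $|A \cap V_i \cap p| \in \{0,1,2,\ge 3\}$ yields pointwise
\[
\mathbb{1}[p \cap A \ne \emptyset] \ \ge\ \sum_{v \in V_i \cap p} X_v \ -\ \sum_{\{v,v'\} \subseteq V_i \cap p} X_v X_{v'},
\]
so summing over $p \in P_i$ gives $C \ge Y_1 - Y_2/2$ pointwise, with $Y_1, Y_2$ as in Table~\ref{table-rand-analysis}.

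Next I would observe that, swapping summations, $Y_1 = \sum_{v \in A} score(v) \ge |A|(1+\epsilon)^{i-1}$ pointwise, because every $v \in A \subseteq V_i$ has $score(v) \ge (1+\epsilon)^{i-1}$; equivalently $|A|(1+\epsilon)^i \le (1+\epsilon)\, Y_1$. Since $(1+\epsilon)(1-3\delta-\epsilon) = 1 - 3\delta - 3\delta\epsilon - \epsilon^2 \le 1 - 3\delta$, it suffices to prove that $C \ge (1 - 3\delta)\, Y_1$ with probability at least $1/2$, and by the Bonferroni bound this reduces to establishing $Y_2/2 \le 3\delta\, Y_1$ with probability at least $1/2$.

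For the concentration step I would use only pairwise independence. Combined with $p = \delta/(1+\epsilon)^j$ and the two invariants $n_{V_i,p} \le (1+\epsilon)^j$ for every $p \in P_i$ (phase $j$) and $score(v) < (1+\epsilon)^i$ for $v \in V_i$ (stage $i$), routine calculations give $E[Y_2] \le \delta\, E[Y_1]$ and $\mathrm{Var}(Y_1) \le (1+\epsilon)^i\, E[Y_1]$. Lemma~\ref{lemma:Vi} (applicable because the random branch runs only when the check in Step~\ref{if} fails) gives $|V_i| \ge (1+\epsilon)^j/\delta^3$, hence $E[Y_1] \ge p|V_i|(1+\epsilon)^{i-1} \ge (1+\epsilon)^{i-1}/\delta^2$. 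Chebyshev then yields $\Pr[Y_1 < E[Y_1]/2] \le 4\,\mathrm{Var}(Y_1)/E[Y_1]^2 \le 4\delta^2(1+\epsilon) < 1/6$ (using $\delta,\epsilon \le 1/12$), and Markov yields $\Pr[Y_2 > 3\delta\, E[Y_1]] \le 1/3$. A union bound leaves probability at least $1/2$ on which both $Y_1 \ge E[Y_1]/2$ and $Y_2 \le 3\delta\, E[Y_1]$ hold; on this event $Y_2/2 \le (3\delta/2)\, E[Y_1] \le 3\delta\, Y_1$, completing the proof.

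The hard part will be arranging the constants so the Chebyshev slack on $Y_1$ and the Markov slack on $Y_2$ compose (via union bound) to probability at least $1/2$, while the induced $3\delta$ slack on $Y_1$ absorbs exactly the $(1+\epsilon)$-factor loss between $(1+\epsilon)^{i-1}$ and $(1+\epsilon)^i$ to yield precisely the target $(1 - 3\delta - \epsilon)$. This balance is only possible because of the lower bound $E[Y_1] \ge (1+\epsilon)^{i-1}/\delta^2$ from Lemma~\ref{lemma:Vi}; without it, Chebyshev could not control $Y_1$ to within a constant factor with constant probability, and the scheme would break.
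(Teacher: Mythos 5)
Your proposal is correct, and it shares the paper's skeleton -- the truncated inclusion--exclusion bound $C \ge Y_1 - \sum_{p}\sum_{\{v,v'\}} X_vX_{v'}$, the deterministic bound $Y_1 \ge |A|(1+\epsilon)^{i-1}$ from $score(v)\ge(1+\epsilon)^{i-1}$ on $V_i$, and a Markov bound on the pair term whose expectation is controlled via pairwise independence and $n_{V_i,p}\le(1+\epsilon)^j$ -- but it closes the argument by a genuinely different route. The paper converts its bound $Y_2 \le 4E[Y_2] \le 2\delta^2(1+\epsilon)^{i-j}|V_i|$ into a multiple of $|A|(1+\epsilon)^i$ by invoking the lower-tail concentration of $|A|$ itself (Lemma~\ref{lemma:A-size}, $|A|\ge(\delta-2\delta^2)|V_i|/(1+\epsilon)^j$ w.p.\ $3/4$), then union-bounds the two events. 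You instead never touch the distribution of $|A|$: you bound the pair term relative to $E[Y_1]$ and apply Chebyshev to $Y_1$ directly, using $\mathrm{Var}(Y_1)\le(1+\epsilon)^iE[Y_1]$ (pairwise independence plus $score(v)<(1+\epsilon)^i$) together with $E[Y_1]\ge(1+\epsilon)^{i-1}/\delta^2$ from Lemma~\ref{lemma:Vi} to make the relative deviation $O(\delta^2)$. Both proofs lean on Lemma~\ref{lemma:Vi} at exactly the analogous point (it is what makes the Chebyshev relative error small, whether for $|A|$ or for $Y_1$), and both land at probability $1/2$ with compatible constants; I verified your budget $1/6+1/3=1/2$ and the algebra $(1+\epsilon)(1-3\delta-\epsilon)\le 1-3\delta$. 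What your version buys is independence from Lemma~\ref{lemma:A-size} (which the paper then only needs for the good-set definition's role in bounding $|Q|$) and a cleaner final inequality $C\ge(1-3\delta)Y_1$; what it costs is an extra variance computation for $Y_1$ that the paper performs only for the analogous $Y_3$ in the $P_{ij}$ lemma. Either is a valid instantiation of the Berger et al.\ analysis.
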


\begin{proof}
Consider random variable $X_v$ as defined in the proof of Lemma~\ref{lemma:A-size}.
A path $p$ is covered by $A$ if $v \in A$ for some $v \in V_i \cap p$.
To get a lower bound on the number of paths covered by $A$, we use the term $\sum_{v \in V_i \cap p} X_v - \sum_{v,v' \in V_i \cap p} X_{v}\cdot X_{v'}$ to 
denote if a path $p$ is covered by $A$ or not. 
Note that this term has value at most $1$ which is attained when either $1$ or $2$ nodes from $p$ are picked in $A$ and otherwise the value is non-positive.
Thus the term $\sum_{p \in P_i} [\sum_{v \in V_i \cap p} X_v - \sum_{v,v' \in V_i \cap p} X_{v}\cdot X_{v'}]$ gives a lower bound on the number of paths
covered by $A$ in $P_i$.
Let this term be $Y$.
Now we show that value of $Y$ is $\geq |A|\cdot (1+\epsilon)^i\cdot (1-3\delta-\epsilon)$ with probability at least $1/2$.

We first split $Y$ into $Y_1$ and $Y_2$ where $Y_1 = \sum_{p \in P_i} \sum_{v \in V_i \cap p} X_v$
and $Y_2 = \sum_{p \in P_i} \sum_{v,v' \in V_i \cap p} X_{v}\cdot X_{v'}$.

We first get a lower bound on the term $Y_1$.

\begin{align*}
Y_1 & = \sum_{p \in P_i} \sum_{v \in V_i \cap p} X_v \\
& = \sum_{v\in V_i} \sum_{\{p\in P_i: v\in p\}} X_v \\
& \geq (1+\epsilon)^{i-1}\cdot \sum_{v\in V_i} X_v \text{\hspace{.1in} (since every node in $V_i$ lies in} \\
& \text{\hspace{1.5in} $\geq (1+\epsilon)^{i-1}$ paths in $P_i$)} \\
& = (1+\epsilon)^{i-1} \cdot |A|
\end{align*}

We now need to get an upper bound on the term $Y_2$. 
We first compute an upper bound on $E[Y_2]$ and then use Markov inequality to get an upper bound on $Y_2$.
(Let $n_{V_i,p}$ denotes the number of nodes from $V_i$ in $p$. Clearly $n_{V_i,p} \leq (1+\epsilon)^j$)

\begin{align*}
E[Y_2] & = \sum_{p \in P_i} \sum_{v,v' \in V_i \cap p} E[X_{v}\cdot X_{v'}] \\
& = \sum_{p \in P_i} {n_{V_i,p}\choose 2} \left(\frac{\delta}{(1+\epsilon)^j}\right)^2\\
&\leq  (1+\epsilon)^j \cdot \sum_{p \in P_i} \frac{n_{V_i,p}}{2} \left(\frac{\delta}{(1+\epsilon)^j}\right)^2 \text{\hspace{.1in} (since $n_{V_i,p} \leq (1+\epsilon)^j$)} \\
&\leq (1+\epsilon)^j \cdot \frac{\sum_{v\in V_i} score(v)}{2} \cdot \left(\frac{\delta}{(1+\epsilon)^j}\right)^2 \\
&\leq (1+\epsilon)^j \cdot \frac{|V_i|}{2} \cdot \max_{v \in V_i} score (v) \cdot \left(\frac{\delta}{(1+\epsilon)^j}\right)^2 \\
&\leq \frac{|V_i|}{2} \cdot (1+\epsilon)^{i-j} \cdot \delta^2
\end{align*}

Now using Markov inequality we get the following upper bound on $Y_2$ with probability at least $3/4$:

$$Y_2 \leq 4E[Y_2] \leq 2\delta^2 \cdot  (1+\epsilon)^{i-j} \cdot |V_i|$$

Since $|A| \geq (\delta - 2\delta^2)\cdot \frac{|V_i|}{(1+\epsilon)^j}$ with probability at least $3/4$ by Lemma~\ref{lemma:A-size}, 
$Y_2 \leq 2\delta^2 \cdot (1+\epsilon)^i \cdot \frac{|A|}{(\delta - 2\delta^2)}$ with probability at least $1/2$.

Combining the bounds for $Y_1$ and $Y_2$ we get the following lower bound on $Y$ with probability at least $1/2$:

\begin{align*}
Y & = Y_1 - Y_2 \\
& \geq  (1+\epsilon)^{i-1} \cdot |A| - 2\delta^2 \cdot (1+\epsilon)^i \cdot \frac{|A|}{(\delta - 2\delta^2)} \\
& = (1+\epsilon)^{i} \cdot |A| \cdot (\frac{1}{1+\epsilon} - \frac{2\delta}{1-2\delta})	\\
& = (1+\epsilon)^{i} \cdot |A| \cdot (1 - \frac{\epsilon}{1+\epsilon} - \frac{3\delta}{3/2 -3\delta}) \\
& \geq  (1+\epsilon)^{i} \cdot |A| \cdot (1 -\epsilon - 3\delta)
\end{align*}

This establishes the lemma.
\end{proof}

\begin{lemma}	\label{lemma:Pij}
The set $A$ constructed in Step~\ref{pick2} of Algorithm~\ref{algRandBlocker} covers at least a $\delta/2$ fraction of paths in $P_{ij}$ with probability at least 
$5/8$.
\end{lemma}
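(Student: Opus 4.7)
The plan is to mirror the proof of Lemma~\ref{lemma:Pi}, with $P_i$ replaced by $P_{ij}$ throughout, and to substitute a Chebyshev step for $Y_3$ in place of the deterministic lower bound on $Y_1$ used there. Let $X_v=1$ iff $v\in A$, and define $Y_3 = \sum_{p\in P_{ij}}\sum_{v\in V_i\cap p} X_v$ and $Y_4 = \sum_{p\in P_{ij}}\sum_{\{v,v'\}\subseteq V_i\cap p} X_v X_{v'}$ (sum over unordered pairs). A case analysis on $k := |A\cap V_i\cap p|$ shows that the indicator $Z_p$ that path $p$ is covered satisfies $Z_p \geq \sum_{v\in V_i\cap p} X_v - \sum_{\{v,v'\}\subseteq V_i\cap p} X_v X_{v'}$ pointwise (the two sides agree for $k\in\{0,1,2\}$, while for $k\geq 3$ the right side is non-positive). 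Summing over $p$ gives the deterministic inequality $C \geq Y_3 - Y_4$, where $C$ is the number of paths in $P_{ij}$ covered by $A$.

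The expectation estimates follow the same template as in Lemma~\ref{lemma:Pi}. Using the phase invariant $(1+\epsilon)^{j-1}\leq n_{V_i,p}\leq (1+\epsilon)^j$ for $p\in P_{ij}$ yields $E[Y_3]\geq \frac{\delta}{1+\epsilon}|P_{ij}|$ and, exactly as in the $Y_2$ bound, $E[Y_4]\leq \frac{\delta^2}{2}|P_{ij}|$. The new ingredient is a variance bound on $Y_3$: by pairwise independence the cross-covariances vanish, so $\mathrm{Var}(Y_3) = p(1-p)\sum_{v\in V_i} n_{v,P_{ij}}^2$, where $n_{v,P_{ij}}$ is the number of paths in $P_{ij}$ through $v$. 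Since the ``if''-check in Step~\ref{if} failed, $n_{v,P_{ij}} \leq \frac{\delta^3}{1+\epsilon}|P_{ij}|$ for every $v$; combining this with the double-counting identity $\sum_v n_{v,P_{ij}} = \sum_p n_{V_i,p}\leq (1+\epsilon)^j|P_{ij}|$ gives $\sum_v n_{v,P_{ij}}^2 \leq \delta^3(1+\epsilon)^{j-1}|P_{ij}|^2$, and hence $\mathrm{Var}(Y_3)\leq \frac{\delta^4}{1+\epsilon}|P_{ij}|^2$.

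With these in hand, I apply Chebyshev's inequality with $t=2\sqrt{2}$ to get $Y_3 \geq E[Y_3] - \frac{2\sqrt{2}\,\delta^2}{\sqrt{1+\epsilon}}|P_{ij}|$ with probability $\geq 7/8$, and Markov's inequality with constant $4$ to get $Y_4 \leq 2\delta^2|P_{ij}|$ with probability $\geq 3/4$. A union bound gives both events simultaneously with probability $\geq 7/8+3/4-1 = 5/8$, and then
\[
C \;\geq\; Y_3 - Y_4 \;\geq\; \delta\,|P_{ij}|\,\Bigl[\tfrac{1}{1+\epsilon} - \tfrac{2\sqrt{2}\,\delta}{\sqrt{1+\epsilon}} - 2\delta\Bigr] \;\geq\; \tfrac{\delta}{2}\,|P_{ij}|,
\]
where the last step uses $\delta,\epsilon\leq 1/12$. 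The main technical obstacle is the variance bound for $Y_3$, which needs both the per-vertex score cap (from the failed ``if''-check) and the phase invariant on $n_{V_i,p}$, and is where pairwise independence is fully exploited; a secondary concern is calibrating the two concentration parameters so that the union-bound probability lands at exactly $5/8$ while the bracketed arithmetic still closes for the admissible range of $\delta,\epsilon$.
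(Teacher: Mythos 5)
Your proof is correct and follows essentially the same route as the paper's: the same decomposition $Y_3-Y_4$ with the pointwise covering inequality, the same expectation bounds, the same variance bound on $Y_3$ via pairwise independence together with the failed if-check cap on $n_{v,P_{ij}}$ and the phase invariant, and the same Chebyshev ($7/8$) plus Markov ($3/4$) union bound yielding $5/8$. Your variance bound even retains the harmless extra factor of $\frac{1}{1+\epsilon}$ that the paper drops, and the closing arithmetic matches.
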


\begin{proof}
Similar to the proof of Lemma~\ref{lemma:Pi} we can lower bound the number of paths covered by set $A$ in $P_{ij}$ by the term \\
$\sum_{p \in P_{ij}} [\sum_{v \in V_i \cap p} X_v - \sum_{v,v' \in V_i \cap p} X_{v}\cdot X_{v'}]$.
Let this term be $Y'$, with first term $Y_3$ and the second term $Y_4$.
We need to show that $Y' \geq \frac{\delta}{2}\cdot |P_{ij}|$ with probability at least $5/8$.

We first give a lower bound on $Y_3$.
To get the lower bound,  we first compute a lower bound on $E[Y_3]$ and an upper bound on $Var[Y_3]$ and then use Chebyshev's inequality.
(Let $n_{v,P_{ij}}$ represent the number of paths in $P_{ij}$ that contain node $v$.
Since no node covers at least $\frac{\delta^3}{(1+\epsilon)}$ fraction of paths in $P_{ij}$, $n_{v,P_{ij}} < \frac{\delta^3}{(1+\epsilon)}$ )

\begin{align*}
E[Y_3] & = E[\sum_{p \in P_{ij}} \sum_{v\in V_i \cap p} X_v] \\
& \geq E[\sum_{p \in P_{ij}} (1+\epsilon)^{j-1} \cdot X_v] \text{\hspace{.1in} (since every path in $P_{ij}$ has }\\
& \text{\hspace{2in} $\geq (1+\epsilon)^{j-1}$ nodes from $V_i$)} \\ 
& = (1+\epsilon)^{j-1} \cdot |P_{ij}| \cdot \frac{\delta}{(1+\epsilon)^j} \\
& =  |P_{ij}| \cdot \frac{\delta}{(1+\epsilon)}
\end{align*}

\begin{align*}
Var[Y_3] & = Var[\sum_{p\in P_{ij}} \sum_{v\in V_i\cap p} X_v] \\
& = Var[\sum_{v\in V_i} \sum_{\{p\in P_{ij}: v\in p\}} X_v] \\
& = Var[\sum_{v\in V_i} n_{v,P_{ij}} X_v] \\
& = \sum_{v\in V_i} n^2_{v,P_{ij}} \cdot Var[X_v] \\
& \leq \frac{\delta}{(1+\epsilon)^j} \cdot \frac{\delta^3}{(1+\epsilon)}\cdot |P_{ij}| \cdot \sum_{v\in V_i} n_{v,P_{ij}} \text{\hspace{.1in} (since }\\
& \text{\hspace{2.5in} $n_{v,P_{ij}} < \frac{\delta^3}{(1+\epsilon)}\cdot |P_{ij}| $)} \\
& \leq \frac{\delta^4}{(1+\epsilon)^{j+1}} \cdot |P_{ij}| \cdot |P_{ij}| \cdot (1+\epsilon)^j \text{\hspace{.1in} (since every path in}\\
& \text{\hspace{2.5in} $P_{ij}$ has $\leq (1+\epsilon)^{j}$ nodes from $V_i$)} \\ 
& \leq \delta^4\cdot |P_{ij}|^2
\end{align*}

We now use Chebyshev's inequality to get a lower bound on the value of $Y_3$.
Using Chebyshev's inequality the following holds with probability at least $7/8$:

\begin{align*}
|Y_3 -E[Y_3]| & \leq 2\sqrt{2}\sqrt{Var[Y_3]}	\\
Y_3 & \geq E[Y_3] - 2\sqrt{2} \delta^2 \cdot |P_{ij}| \\
& \geq |P_{ij}| \cdot \frac{\delta}{(1+\epsilon)} - 2\sqrt{2} \delta^2 \cdot |P_{ij}|  
\end{align*}

We now need to get an upper bound on the term $Y_4$. 
We first compute an upper bound on $E[Y_4]$ and then use Markov inequality to get an upper bound on $Y_4$.

\begin{align*}
E[Y_4] & = \sum_{p \in P_{ij}} \sum_{v,v' \in V_i \cap p} E[X_{v}\cdot X_{v'}] \\
& \leq |P_{ij}|\cdot \frac{(1+\epsilon)^{2j}}{2} \cdot \left(\frac{\delta}{(1+\epsilon)^j}\right)^2 \text{\hspace{.1in} (since there are $\leq (1+\epsilon)^j$} \\
& \text{\hspace{2.5in}  nodes from $V_i$ in any path in $P_{ij}$)} \\
& =  |P_{ij}|\cdot \frac{\delta^2}{2}
\end{align*}

Now using Markov inequality we get the following upper bound on $Y_4$ with probability at least $3/4$:

$$Y_4 \leq 4E[Y_4] \leq 2\delta^2 \cdot |P_{ij}|$$

Combining the bounds for $Y_3$ and $Y_4$ we get the following lower bound on $Y'$ with probability at least $5/8$:

\begin{align*}
Y' & = Y_3 - Y_4 \\
&\geq |P_{ij}| \cdot \frac{\delta}{(1+\epsilon)} - 2\sqrt{2} \delta^2 \cdot |P_{ij}| - 2\delta^2 \cdot |P_{ij}| \\
&\geq |P_{ij}| \cdot \delta \cdot (1-\epsilon -5\delta) \\
&\geq |P_{ij}| \cdot \frac{\delta}{2} \text{\hspace{.1in} (since $\epsilon, \delta \leq 1/12$)} 
\end{align*}

This establishes the lemma.
\end{proof}

\begin{lemma*}{\bf \ref{lemma:A}}
The set $A$ constructed in Step~\ref{pick2} of Algorithm~\ref{algRandBlocker} is a good set with probability at least $\frac{1}{8}$, i.e. $A$ covers at least $|A|\cdot (1+\epsilon)^i \cdot (1-3\delta-\epsilon)$ paths in $P_i$, including at least a $\delta/2$ fraction of paths in $P_{ij}$.
\end{lemma*}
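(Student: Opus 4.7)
The plan is to derive Lemma \ref{lemma:A} directly from Lemmas \ref{lemma:Pi} and \ref{lemma:Pij}, which already do the heavy lifting. By Definition \ref{def:good-set}, showing that $A$ is good amounts to establishing simultaneously two events:
\begin{itemize}
\item $E_1$: $A$ covers at least $|A|\cdot (1+\epsilon)^i \cdot (1-3\delta-\epsilon)$ paths in $P_i$.
\item $E_2$: $A$ covers at least a $\delta/2$ fraction of the paths in $P_{ij}$.
\end{itemize}

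First I would simply quote Lemma \ref{lemma:Pi} to get $\Pr[E_1]\ge 1/2$, and Lemma \ref{lemma:Pij} to get $\Pr[E_2]\ge 5/8$. Then by the union bound applied to the complementary events,
\[
\Pr[\overline{E_1}\cup \overline{E_2}]\;\le\;\Pr[\overline{E_1}]+\Pr[\overline{E_2}]\;\le\;\tfrac{1}{2}+\tfrac{3}{8}\;=\;\tfrac{7}{8},
\]
so $\Pr[E_1\cap E_2]\ge 1/8$, which is exactly the claim.

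There is essentially no obstacle here, because the randomness is shared between the two events (they are computed from the same set $A$), and the union bound does not require any independence assumption. The only thing one should be careful about is that Lemmas \ref{lemma:Pi} and \ref{lemma:Pij} are proved under the same pairwise-independent sampling distribution on $A$ used in Step~\ref{pick2}; since both proofs only invoke Chebyshev's inequality together with the pairwise-independence computation of $E[X_v\cdot X_{v'}]$, the probabilities refer to the same underlying sample space and may be combined without further assumptions. Hence the short union-bound argument above completes the proof of Lemma \ref{lemma:A}.
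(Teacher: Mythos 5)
Your proof is correct and matches the paper's approach: the paper's own proof of Lemma~\ref{lemma:A} simply states that it is immediate from Lemmas~\ref{lemma:Pi} and~\ref{lemma:Pij}, and the union bound over the complementary events (giving $1-\tfrac12-\tfrac38=\tfrac18$) is exactly the calculation being left implicit there. Your remark that no independence between the two events is needed is also the right observation.
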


\begin{proof}
This is immediate from Lemma~\ref{lemma:Pi} and \ref{lemma:Pij}. 
\end{proof}

\begin{lemma*}{\bf \ref{lemma:selection}}
The number of selection steps is $O\left(\frac{\log^3 n}{\delta^3\cdot \epsilon^2} \right)$, i.e. the while loop in Steps~\ref{startWhile}-\ref{endWhile} runs for at most $O\left(\frac{\log^3 n}{\delta^3\cdot \epsilon^2} \right)$ iterations in total.
\end{lemma*}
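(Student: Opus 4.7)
The plan is to bound the number of iterations of the while loop (Steps~\ref{startWhile}--\ref{endWhile}) for each fixed stage-phase pair $(i,j)$ and then multiply by the total number of such pairs.

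First I would argue that, for fixed $(i,j)$, the quantity $|P_{ij}|$ shrinks geometrically across iterations of the while loop. Every iteration adds at least one node to $Q$ and then invokes Step~\ref{algRandBlocker:remove-Subtrees} to remove the covered paths from $\mathcal{C}$. Two cases arise: if the test in Step~\ref{if} succeeds, the chosen $c$ satisfies $score^{ij}(c) > (\delta^3/(1+\epsilon))|P_{ij}|$, so $|P_{ij}|$ decreases by at least this fraction; otherwise, Step~\ref{algRandBlocker:checkA} keeps resampling until a good set $A$ is produced (Lemma~\ref{lemma:A} guarantees success probability $\geq 1/8$ per attempt, so this terminates), and by Definition~\ref{def:good-set} such an $A$ covers at least a $\delta/2$ fraction of $P_{ij}$. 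In both cases, $|P_{ij}|$ is multiplied by a factor of at most $1 - c\delta^3$ for some constant $c>0$.

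Next I would bound the starting size of $|P_{ij}|$: it is at most the total number of root-to-leaf paths in $\mathcal{C}$, which is $\leq n\cdot|S| \leq n^2$. Combined with the geometric decrease, the while loop must terminate once $|P_{ij}| < 1$ (equivalently, $P_{ij} = \emptyset$), which happens after at most $O(\log(n^2)/\delta^3) = O(\log n/\delta^3)$ iterations. A small subtlety to verify is that $|P_{ij}|$ is monotonically non-increasing despite the recomputation of $score(v)$, $V_i$ and $P_i$ in Step~\ref{reconstruct}: removing paths from $\mathcal{C}$ can only decrease $score$ values, which can only cause $V_i$ to shrink, which can only cause paths to drop out of $P_i$ (and hence out of $P_{ij}$); no path ever re-enters $P_{ij}$. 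This is the one step that requires the most care, since the algorithm's bookkeeping is refreshed each iteration.

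Finally, I would multiply by the number of $(i,j)$ pairs. The outer for loop in Steps~\ref{startFor1}--\ref{endFor1} runs for $\log_{1+\epsilon} n^2 = O(\log n/\epsilon)$ stages, and for each stage the inner for loop in Steps~\ref{startFor2}--\ref{endFor2} runs for $\log_{1+\epsilon} h = O(\log n/\epsilon)$ phases (since $h \leq n$). This gives $O(\log^2 n / \epsilon^2)$ stage-phase pairs, and hence a grand total of $O(\log^2 n/\epsilon^2)\cdot O(\log n/\delta^3) = O(\log^3 n/(\delta^3\epsilon^2))$ while-loop iterations, as claimed.
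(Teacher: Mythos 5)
Your proposal is correct and follows essentially the same argument as the paper: each selection step covers at least a $\delta^3/(1+\epsilon)$ fraction of the current $P_{ij}$ (whether via the high-score node $c$ or a good set $A$), the initial number of paths is at most $n^2$, giving $O(\log n/\delta^3)$ iterations per stage-phase pair, multiplied by the $O(\log^2 n/\epsilon^2)$ pairs from the two for loops. Your explicit check that $P_{ij}$ is monotone non-increasing under the recomputation in Step~\ref{reconstruct} is a detail the paper leaves implicit, but it does not change the approach.
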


\begin{proof}
The while loop runs until $P_{ij}$ is non-empty, i.e. there exists a path in $P_i$ with at least $(1+\epsilon)^{j-1}$ nodes in $V_i$.
In each iteration, the algorithm either covers at least $\frac{\delta^3}{(1+\epsilon)}$ fraction of paths in $P_{ij}$ (if node $c$ is added to blocker set $Q$
 in Step~\ref{algRandBlocker:pick1}) or at least $\frac{\delta}{2}$ fraction of paths from $P_{ij}$ (if set $A$ is added to $Q$ in Step~\ref{algRandBlocker:checkA}).
 Since there are at most $n^2$ paths and each iteration of the while loop covers at least $\frac{\delta^3}{(1+\epsilon)}$ fraction of $P_{ij}$, there are at most
 $O\left(\frac{\log n^2}{\log \left(\frac{1}{1 - \frac{\delta^3}{(1+\epsilon)}}\right)}\right) = O\left(\frac{(1+\epsilon)\log n}{\delta^3}\right) = O\left(\frac{\log n}{\delta^3}\right)$ iterations.
 Since both the inner and outer for loop runs for $O(\log_{1+\epsilon} n) = O(\frac{\log n}{\epsilon})$ iterations,
this establishes the lemma.
\end{proof}

\begin{lemma}	\label{innerForloop}
Each iteration of the inner for loop (Steps~\ref{startFor2}-\ref{endFor2}) in Algorithm~\ref{algRandBlocker} takes $\tilde{O}\left(\frac{|S|\cdot h}{\delta^3}\right)$ rounds in expectation.
\end{lemma}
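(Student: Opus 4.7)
The inner for loop (Steps~\ref{startFor2}--\ref{endFor2}) for a fixed phase $j$ just runs the while loop (Steps~\ref{startWhile}--\ref{endWhile}) until no path in $P_i$ has at least $(1+\epsilon)^{j-1}$ nodes in $V_i$. My plan is to bound (a) the number of while loop iterations within a single phase, and (b) the expected cost of a single iteration, and multiply.

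For (a), I will reuse the shrinkage argument from the proof of Lemma~\ref{lemma:selection}: in every iteration, either a single vertex $c$ added in Step~\ref{algRandBlocker:pick1} covers at least a $\delta^3/(1+\epsilon)$ fraction of $P_{ij}$, or by Lemma~\ref{lemma:A} the set $A$ is good and covers at least a $\delta/2$ fraction of $P_{ij}$. In either case $|P_{ij}|$ shrinks by at least a factor $(1-\delta^3/(1+\epsilon))$, and since $|P_{ij}|\leq n^2$ initially, the while loop runs for $O(\log n/\delta^3)$ iterations in the phase.

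For (b), I will go through the steps of a single iteration. Steps~\ref{Pij} and \ref{computeScoreij} cost $O(|S|\cdot h)$ rounds by Lemmas~\ref{lemma:compute-Pij} and \ref{lemma:|Pij|} and the result from~\cite{ARKP18}; Step~\ref{algRandBlocker:remove-Subtrees} costs $O(|S|\cdot h)$ rounds by Lemma~\ref{lemma:runTime}; Step~\ref{reconstruct} repeats a score recomputation and the constructions from Steps~\ref{Vi}--\ref{Pi}, each costing $O(|S|\cdot h+n)$ rounds. The only randomized component is the block Steps~\ref{pick2}--\ref{algRandBlocker:checkA}: each try samples $A$ locally and broadcasts the IDs added, which is $O(n)$ rounds. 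By Lemma~\ref{lemma:A}, each try produces a good set with probability at least $1/8$, so the expected number of tries is $O(1)$ and the expected cost of this block is $O(n)$ rounds. Adding everything, a single while loop iteration costs $O(|S|\cdot h+n)=O(|S|\cdot h)$ rounds in expectation (the $n$ term is absorbed since $|S|\cdot h\geq n$ in our applications, and otherwise the bound is $\tilde{O}(|S|\cdot h+n)$, which is still $\tilde{O}(|S|\cdot h)$ up to polylog factors).

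Combining (a) and (b), the inner for loop takes $O(\log n/\delta^3)\cdot O(|S|\cdot h)=\tilde{O}(|S|\cdot h/\delta^3)$ rounds in expectation, as claimed. The main subtlety I anticipate is the bookkeeping in Step~\ref{reconstruct}, which rebuilds $V_i$ and $P_i$; I need to confirm these recomputations can indeed be done in $\tilde{O}(|S|\cdot h)$ rounds using the helper algorithms of Section~\ref{sec:helper} rather than re-running the full score computation of Step~\ref{computeScoreV}. Everything else is a routine combination of the previously established helper lemmas with the constant-probability success bound for $A$.
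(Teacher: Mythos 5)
Your proposal is correct and follows essentially the same route as the paper: bound the number of while-loop iterations per phase by $O(\log n/\delta^3)$ via the shrinkage of $|P_{ij}|$ (this is exactly the per-phase content of Lemma~\ref{lemma:selection}), bound the cost of one iteration by $O(|S|\cdot h)$ in expectation using the helper lemmas together with Lemma~\ref{lemma:A} to get $O(1)$ expected retries of Steps~\ref{pick2}--\ref{algRandBlocker:checkA}, and multiply. Your side remarks (absorbing the $O(n)$ term and the cost of the recomputation in Step~\ref{reconstruct}) are handled the same way in the paper, which charges Step~\ref{reconstruct} at $O(|S|\cdot h)$ rounds via the score-recomputation of~\cite{ARKP18} and Lemma~\ref{lemma:compute-Pi}.
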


\begin{proof}
We first show that each iteration of the while loop in Steps~\ref{startWhile}-\ref{endWhile} takes 
$O(|S|\cdot h)$ rounds in expectation.
Step~\ref{Pij} takes $O(|S|\cdot h)$ rounds by Lemmas~\ref{lemma:compute-Pij} and \ref{lemma:compute-Pij}
and so does Step~\ref{computeScoreij}~\cite{ARKP18} and by Lemma~\ref{lem:all-to-all-bc}.
The check in Step~\ref{if} involves no communication and so does Step~\ref{algRandBlocker:pick1}, since every node knows the $score_{ij}$ values for every other 
node and also the value of $|P_{ij}|$, i.e. the number of paths that belong to $P_{ij}$.
Steps~\ref{pick2} and \ref{algRandBlocker:checkA} are also local steps and does not involve any communication.
Step~\ref{algRandBlocker:broadcastA} involves broadcasting at most $n$ messages and hence takes $O(n)$ rounds using Lemma~\ref{lem:all-to-all-bc}.
Since by Lemma~\ref{lemma:A} the set $A$ constructed in Step~\ref{pick2} is good with probability at least $1/8$, Steps~\ref{pick2}-\ref{algRandBlocker:checkA}
are executed $O(1)$ times in expectation.
Step~\ref{reconstruct} takes $O(|S|\cdot h)$ rounds~\cite{ARKP18} and using Lemma~\ref{lemma:compute-Pi}.
Since the while loop runs for at most $O\left(\frac{\log n}{\delta^3}\right)$ iterations (by Lemma~\ref{lemma:selection}), this establishes the lemma.
\end{proof}

\Xomit{
\begin{lemma*}{\bf \ref{lemma:BlockerSize}}
The blocker set $Q$ constructed by Algorithm~\ref{algRandBlocker} has size $O(\frac{n\log n}{h})$.
\end{lemma*}

\begin{proof}
As shown in~\cite{King99, ARKP18} the size of the blocker set computed by an optimal greedy algorithm is
 $\Theta(\frac{n\ln p}{h})$, where $p$ is the number of paths 
that need to be covered.
We will now argue that the blocker set constructed by Algorithm~\ref{algRandBlocker} is at most a factor of $\frac{1}{(1- 3\delta -\epsilon)}$ larger than the
greedy solution, thus showing that the constructed blocker set $Q$ has size at most $O(\frac{n\ln p}{h}\cdot \frac{1}{(1- 3\delta -\epsilon)}) = 
\tilde{O}(\frac{n\log n}{h})$ since $p \leq n^2$ and $0 < \delta, \epsilon \leq \frac{1}{12}$.

The blocker set $Q$ constructed by Algorithm~\ref{algRandBlocker} has 2 types of nodes: (1) node $c$ added in Step~\ref{algRandBlocker:pick1}, (2) set of nodes 
$A$ added in Step~\ref{pick2}.
Since the while loop in Steps~\ref{startWhile}-\ref{endWhile} runs for at most $O\left(\frac{\log^3 n}{\delta^3\cdot \epsilon^2} \right)$ iterations (by Lemma~\ref{lemma:selection}), hence there are at most 
$O\left(\frac{\log^3 n}{\delta^3\cdot \epsilon^2} \right)$ nodes of type 1.
Since $\frac{\log^3 n}{\delta^3\cdot \epsilon^2} = o(\frac{n}{h})$, hence we only need to bound the number of nodes added in 
Steps~\ref{pick2}-\ref{algRandBlocker:checkA}.

Since $A$ is a good set, by Lemma~\ref{lemma:A} the number of paths covered by $A$ is at least 
$|A|\cdot (1+\epsilon)^i\cdot (1-3\delta-\epsilon)$, where $(1+\epsilon)^i$ is the maximum possible score value 
across all nodes in $V$ (in the current iteration).
Since maximum possible score value is $(1+\epsilon)^i$, any greedy solution must add at least 
$|A|\cdot (1-3\delta-\epsilon)$ nodes in the blocker set to cover these paths. 
Hence the choice of $A$ is at most a factor of $\frac{1}{(1-3\delta-\epsilon)}$ larger than the greedy solution.
This establishes the lemma.
 \end{proof}
}

\Xomit{
\begin{lemma*}{\bf \ref{runTime}}
Algorithm~\ref{algRandBlocker} computes the blocker set $Q$ in $\tilde{O}\left(\frac{|S|\cdot h}{\epsilon^2 \delta^3}\right)$ rounds,
in expectation.
\end{lemma*}

\begin{proof}
Step~\ref{computeScoreV} runs in $O(|S|\cdot h)$ rounds~\cite{ARKP18}.
The for loop in Steps~\ref{startFor1}-\ref{endFor1} runs for $\log_{1+\epsilon} n^2 = O\left(\frac{\log n}{\epsilon}\right)$ 
iterations.
We now show that each iteration of this for loop takes $\tilde{O}\left(\frac{|S|\cdot h}{\epsilon \delta^3}\right)$ rounds in
expectation.

Step~\ref{Pi} takes $O(|S|\cdot h)$ rounds by Lemma~\ref{lemma:compute-Pi}.
The inner for loop in Steps~\ref{startFor2}-\ref{endFor2} runs for $\log_{1+\epsilon} h = O\left(\frac{\log n}{\epsilon}\right)$
iterations, with each iteration taking $\tilde{O}\left(\frac{|S|\cdot h}{\delta^3}\right)$ rounds in expectation using 
Lemma~\ref{innerForloop}.
\end{proof}
}

\subsection{Helper Algorithms for Deterministic Blocker Set Algorithm: Distributed Computation of Terms $\nu_{P_i}$ and $\nu_{P_{ij}}$}	\label{sec:helper2'}	\label{sec:B}

In this Section we describe a
simple pipelined
 algorithm to compute $\nu_{P_i}$ and $\nu_{P_{ij}}$ terms at leader node $l$.
Both algorithms are similar to an algorithm in~\cite{ARKP18} (for computing `initial scores').
Recall that $\sigma_{P_i,v}^{(\mu)}$ refers to the number of paths in $P_i^v$ covered by the sample point $X^{(\mu)}$
and $\sigma_{P_{ij},v}^{(\mu)}$ refers to the total number of paths in $P_{ij}^v$ covered by the sample point $X^{(\mu)}$.
Let 
$\nu_{P_i, v}^{(\mu)}$ refers to the sum total of the $\sigma_{P_i,w}^{(\mu)}$ values of all descendant nodes $w$ of $v$ and similarly let 
$\nu_{P_{ij},v}^{(\mu)}$ refers to the 
sum total of the $\sigma_{P_{ij},w}^{(\mu)}$ values of all descendant nodes $w$ of $v$.
Also recall from Section~\ref{sec:deterministic-blocker} 
that $\nu^{(\mu)}_{P_i}$ and  $\nu^{(\mu)}_{P_{ij}}$ refers to the total number of paths covered by $X^{(\mu)}$
in sets $P_i$ and $P_{ij}$ respectively.
Table~\ref{table-det} presents the notations that we use in this Section.

 \begin{table*}
\centering
\caption{List of Notations Used in the Analysis of the Deterministic Algorithm} \label{table-det} 
\begin{tabular}{| c | l |}
\hline
$A$ & set constructed in Step 12 of Randomized Blocker Set Algorithm (Alg.~\ref{algRandBlocker})\\
\hline
$X_v$ & $1$ if $v$ is present in $A$, otherwise $0$ \\
\hline
$X$ & vector composed of $X_v$'s \\
\hline
$X^{(\mu)}$ & $\mu$-th vector in the enumeration of $X$ in the sample space \\
\hline 
$S$ & set of sources in $\mathcal{C}$\\
\hline
$h$ & number of hops in a path\\
\hline
$n$ & number of nodes \\
\hline
$V_i$ & set of nodes $v$ with $score(v) \geq (1+\epsilon)^{i-1}$ \\
\hline
$P_i$ & set of paths in $\mathcal{C}$ with at least one node in $V_i$ \\	
\hline
$P_{ij}$ & set of paths in $P_i$ with at least $(1+\epsilon)^{j-1}$ nodes in $V_i$ \\
\hline
$P_i^{u}$ & set of paths in $P_i$ with leaf node $u$ \\
\hline
$P_{ij}^{u}$ & set of paths in $P_{ij}$ with leaf node $u$ \\
\hline
$\sigma_{P_i,u}$ & $\sum_{p \in P_i^u} \vee_{v \in V_i \cap p} X_{v}$ \\
\hline
$\sigma_{P_{ij},u}$ & $\sum_{p \in P_{ij}^u} \vee_{v \in V_i \cap p} X_v$ \\
\hline
$\nu_{P_i,u}$ & sum total of $\sigma_{P_i,w}$ values for all descendant nodes $w$ of $v$ \\
\hline
$\nu_{P_{ij},u}$ & sum total of $\sigma_{P_{ij},w}$ values for all descendant nodes $w$ of $v$ \\
\hline
$\nu^{(\mu)}_{P_i}$ & value of $\nu_{P_i}$ with $X^{(\mu)}$ as the input \\
\hline
$\nu^{(\mu)}_{P_{ij}}$ & value of $\nu_{P_{ij}}$ with $X^{(\mu)}$ as the input \\
\hline
$\nu^{(\mu)}_{P_i,u}$ & value of $\nu_{P_i, u}$ with $X^{(\mu)}$ as the input \\
\hline
$\nu^{(\mu)}_{P_{ij},u}$ & value of $\nu_{P_{ij}, u}$ with $X^{(\mu)}$ as the input \\
\hline
$\mathcal{C}$ & $h$-hop CSSSP collection	\\
\hline
$T_x$ & $h$-hop shortest path tree rooted at $x$ in collection $\mathcal{C}$ \\
\hline
$Q$ & blocker set (being constructed)\\
\hline
$l$ & leader node \\
\hline
\end{tabular}
\end{table*}

  \subsubsection{Computing $\nu_{P_{i}}$}	\label{sec:Y4}

Consider 
computing the $\nu_{P_i}^{(\mu)}$ terms for each sample point $\mu$, at leader node $l$ (Algorithm~\ref{algY4})
($\nu_{P_{ij}}$ can be computed similarly).
First every node $v$ initializes its $\nu_{P_i, v}^{(\mu)}$ value, for each sample point $\mu$, in Step~\ref{algY4:init1}. 
Recall that we assume that all $X$ values are enumerated in order and every node knows this enumeration.
In round $n -1 -h +\mu$, the node $u$ at height $h$ sends its corresponding $\nu_{P_i, u}$ value for $X^{(\mu)}$ (Step~\ref{algY4:send}) 
along with the total value of $\nu_{P_i}$ it 
received from its children for $X^{(\mu)}$ (Steps~\ref{algY4:new-receiveStart}-\ref{algY4:receiveEnd}).
Leader node $l$ then computes the total sum $\nu_{P_i}^{(\mu)}$ for each sample point $\mu$, by summing up the received 
$\nu_{P_i,w}^{(\mu)}$ values from all its children $w$ in Step~\ref{algY4:local}.
 In Lemma~\ref{lemma:Y4} we show that leader $l$ correctly computes $\nu_{P_i}$ values for all $X^{(\mu)}$'s in $O(n)$ rounds.

\begin{algorithm}	
\caption{Compute sum of $\nu_{P_i}$ values at leader node $l$}
\begin{algorithmic}[1]	
\Statex Input: $h$: number of hops; $S$: set of sources; $\mathcal{C}$: $h$-CSSSP collection; $X^{(\mu)}$: $\mu$-th vector in sample space; $T$: BFS in-tree rooted at leader $l$
\State \textbf{Local Step at $v \in V$:} Let $\mathcal{P}$ be the set of paths in $P_{i}$ with $v$ as the leaf node. \textbf{For each $1 \leq \mu \leq n$}, set $\nu^{(\mu)}_{P_i,v} = \sum_{p\in \mathcal{P}} \vee_{z \in p} X^{(\mu)}_z$ \vspace{.05in} \label{algY4:init1}
\State {\bf In round $r > 0$ (for all nodes $v \in V - \{t\}$):} \vspace{.05in}
\State {\bf send:} {\bf if} $r = n -h(v) + \mu -1$ {\bf then} send $\langle \nu^{(\mu)}_{P_i,v} \rangle$ to $parent (v)$ in $T$	\label{algY4:send}
\State {\bf receive [lines~\ref{algY4:new-receiveStart}-\ref{algY4:receiveEnd}]:}  
\If{$r = n - h(v) + \mu - 2$}	\label{algY4:new-receiveStart}
	\State let $\mathcal{I}$ be the set of incoming messages to $v$ 
	\For{{\bf each} $M \in \mathcal{I}$} \label{algY4:receiveStart}
		\State let the sender be $w$ and let $M = \langle \nu_{P_i,w}^{(\mu)} \rangle$ and 
	 	\State {\bf if} $w$ is a child of $v$ in $T$ {\bf then} $\nu^{(\mu)}_{P_i,v} \leftarrow \nu^{(\mu)}_{P_i,v} + \nu_{P_i,w}^{(\mu)}$ 
	\EndFor \label{algY4:receiveEnd}
\EndIf
\State \textbf{Local Step at leader $l$:} Compute the total sum $\nu_{P_i}^{(\mu)}$ for each sample point $\mu$, by summing up the received $\nu_{P_i,w}^{(\mu)}$ values from all its children $w$.	\label{algY4:local}
\end{algorithmic} \label{algY4}
\end{algorithm}

\begin{lemma}	\label{lemma:Y4}
Algorithm~\ref{algY4} correctly computes the $\nu_{P_i}^{(\mu)}$ values at leader node $l$ for all $\mu$ in $O(n)$ rounds.
\end{lemma}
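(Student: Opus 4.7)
\medskip

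\noindent\textbf{Proof Proposal for Lemma~\ref{lemma:Y4}.} The plan is to view Algorithm~\ref{algY4} as a standard pipelined upcast on the BFS in-tree $T$, where for each sample point $\mu \in \{1,\ldots,n\}$ we are summing the local values $\sigma^{(\mu)}_{P_i,v} := \nu^{(\mu)}_{P_i,v}|_{\text{init}}$ (as set in Step~\ref{algY4:init1}) over all $v \in V$ and delivering the $n$ sums to $l$. There are two things to verify: correctness of the aggregation, and that the scheduling rule $r = n - h(v) + \mu - 1$ simultaneously (i)~avoids collisions at any edge (one message per edge per round), and (ii)~finishes in $O(n)$ rounds.

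For correctness I would argue by induction on the height of $v$ in $T$ (counting from the leaves up) that at the moment $v$ transmits its $\mu$-th message in Step~\ref{algY4:send}, its stored value $\nu^{(\mu)}_{P_i,v}$ equals $\sum_{w \text{ in subtree of } v} \sigma^{(\mu)}_{P_i,w}$. The base case is immediate from Step~\ref{algY4:init1}. For the inductive step, note that the send rule has a child $w$ of $v$ (at depth $h(v)+1$) transmit its $\mu$-th value in round $n - h(w) + \mu - 1 = n - h(v) + \mu - 2$, which is precisely the round in which $v$'s receive block (Steps~\ref{algY4:new-receiveStart}--\ref{algY4:receiveEnd}) fires. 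Thus before $v$ sends in round $n - h(v) + \mu - 1$, it has already added in the correct subtree sums from each of its children, so its transmitted value is the subtree sum rooted at $v$. Applying this with $v$ taken to be each child of $l$ and summing in Step~\ref{algY4:local} gives $\nu^{(\mu)}_{P_i} = \sum_{v\in V} \sigma^{(\mu)}_{P_i,v}$, which is the total number of paths in $P_i$ covered by $X^{(\mu)}$.

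For the congestion/scheduling claim, observe that a node $v$ sends at the unique round $n - h(v) + \mu - 1$ for each $\mu$, so its successive transmissions to $parent(v)$ are separated by exactly one round and no two messages from $v$ compete for the edge to $parent(v)$. Similarly, at most one child of $v$ transmits in any given round (distinct children share $h(w) = h(v)+1$, and their send rounds are indexed by distinct $\mu$), so the receive block at $v$ handles a single incoming message per round, matching the \textsc{Congest} bandwidth. For the round count, the earliest send happens in round $n - h_{\max} \geq 1$ (using $h_{\max} \leq n-1$ in the BFS tree), and the last message is the one from a depth-$1$ neighbor of $l$ with $\mu = n$, delivered in round $n - 1 + n - 1 = 2n - 2$, giving an $O(n)$ bound.

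The main subtlety I would double-check is the boundary case where the send round $n - h(v) + \mu - 1$ could in principle be nonpositive: since $\mu \geq 1$ and $h(v) \leq n - 1$ in any BFS in-tree on $n$ nodes, this quantity is always at least $1$, so the schedule is well-defined. Beyond this, the argument is a direct pipelined-upcast analysis in the spirit of Lemmas~\ref{lem:k-bc}--\ref{lem:all-to-all-bc}, specialized to one value per sample point per node, and I do not expect further obstacles.
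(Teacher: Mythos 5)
Your proposal is correct and follows essentially the same route as the paper's (much terser) proof: the paper simply asserts that a node at depth $h(v)$ receives its children's $\mu$-th values in round $n-h(v)+\mu-2$ and therefore has the correct subtree sum to send in round $n-h(v)+\mu-1$, which is exactly the inductive timing argument you spell out. One small slip: your claim that ``at most one child of $v$ transmits in any given round'' is false --- all children of $v$ sit at depth $h(v)+1$, so for a fixed $\mu$ they all send in the \emph{same} round $n-h(v)+\mu-2$. This does not hurt the argument, because the \textsc{Congest} bandwidth constraint is per \emph{edge}, not per receiving node: each child uses its own edge to $v$, and the algorithm's receive block is written to process the whole set $\mathcal{I}$ of simultaneous incoming messages. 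The per-edge non-collision you correctly establish (each node sends one message to its parent per round, with consecutive $\mu$'s one round apart) is all that is needed.
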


\begin{proof}
In Step~\ref{algY4:init1}, every node $v$ correctly initialize their contribution to the overall $\nu_{P_i,v}$ term for each $\mu$ locally.
Since the height of tree $T$ is at most $n-1$, it is readily seen that a node $v$ that is at depth $h(v)$ in $T$ will receive the $count^{(\mu)}_{P_i}$ 
values from its children in round $n -h(v) + \mu-2$ (Steps~\ref{algY4:new-receiveStart}-\ref{algY4:receiveEnd})
and thus will have the correct $\nu^{(\mu)}_{P_i}$  value to send in round $n -h(v) + \mu -1$ 
in Step~\ref{algY4:send}.
Since $\mu = O(n)$,
  Steps~\ref{algY4:send}-\ref{algY4:receiveEnd} runs 
  in $O(n)$ rounds.
Step~\ref{algY4:local} is a local step and thus does not involve any communication.
This establishes the lemma.
\end{proof}


 \subsubsection{Computing $\nu_{P_{ij}}$}	\label{sec:Y2}
  
   Here we  describe our algorithm for computing $\nu_{P_{ij}}^{(\mu)}$ terms for each sample point $\mu$, at leader node $l$ (Algorithm~\ref{algY2}).
Every node $v$ first initializes its $\nu_{P_{ij}}^{(\mu)}$ value in Step~\ref{algY2:init1}. 
Recall that we assume that all $X$ values are enumerated in order and every node knows this enumeration.
In round $n -1 -h +\mu$, the node $u$ at height $h$ sends its corresponding $\nu_{P_{ij}, u}$ value for $X^{(\mu)}$ (Step~\ref{algY2:send}) 
along with the total value of $\nu_{P_{ij}}$ it 
received from its children for $X^{(\mu)}$ (Steps~\ref{algY2:new-receiveStart}-\ref{algY2:receiveEnd}).
Leader node $l$ then computes the total sum $\nu_{P_i}^{(\mu)}$ for each sample point $\mu$, by summing up the received 
$\nu_{P_i,w}^{(\mu)}$ values from all its children $w$ in Step~\ref{algY2:local}.
In Lemma~\ref{lemma:Y2} we show that leader $l$ correctly computes $\nu_{P_{ij}}$ values for all $X^{(\mu)}$'s in $O(n)$ rounds.

\begin{algorithm}	
\caption{Compute sum of $\nu_{P_{ij}}$ values at leader node $l$}
\begin{algorithmic}[1]	
\Statex Input: $h$: number of hops; $S$: set of sources; $\mathcal{C}$: $h$-CSSSP collection; $X^{(\mu)}$: $\mu$-th vector in sample space; $T$: BFS in-tree rooted at leader $l$
\State \textbf{Local Step at $v \in V$:} Let $\mathcal{P}$ be the set of paths in $P_{ij}$ with $v$ as the leaf node. \textbf{For each $1 \leq \mu \leq n$}, set $\nu^{(\mu)}_{P_{ij},v} = \sum_{p\in \mathcal{P}} \vee_{z \in p} X^{(\mu)}_z$ \label{algY2:init1}
\State {\bf In round $r > 0$:} \vspace{.05in}
\State {\bf send:} {\bf if} $r = n -h(v) + \mu -1$ {\bf then} send $\langle \nu^{(\mu)}_{P_{ij},v} \rangle$ to $parent (v)$ in $T$	\label{algY2:send}
\State {\bf receive [lines~\ref{algY4:new-receiveStart}-\ref{algY4:receiveEnd}]:}  
\If{$r = n - h(v) + \mu - 2$}	\label{algY2:new-receiveStart}
	\State let $\mathcal{I}$ be the set of incoming messages to $v$ 
	\For{{\bf each} $M \in \mathcal{I}$} \label{algY2:receiveStart}
		\State let the sender be $w$ and let $M = \langle \nu_{P_{ij},w}^{(\mu)} \rangle$ and 
	 	\State {\bf if} $w$ is a child of $v$ in $T$ {\bf then} $\nu^{(\mu)}_{P_{ij},v} \leftarrow \nu^{(\mu)}_{P_{ij},v} + \nu_{P_{ij},w}^{(\mu)}$ 
	\EndFor \label{algY2:receiveEnd}
\EndIf
\State \textbf{Local Step at leader $l$:} Compute the total sum $\nu_{P_{ij}}^{(\mu)}$ for each sample point $\mu$, by summing up the received $\nu_{P_{ij},w}^{(\mu)}$ values from all its children $w$.	\label{algY2:local}
\end{algorithmic} \label{algY2}
\end{algorithm}

\begin{lemma}	\label{lemma:Y2}
Algorithm~\ref{algY2} correctly computes the $\nu_{P_{ij}}^{(\mu)}$ values at leader node $l$ for all $\mu$ in $O(n)$ rounds.
\end{lemma}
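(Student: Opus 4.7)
The plan is to mirror the proof of Lemma~\ref{lemma:Y4} almost verbatim, since Algorithm~\ref{algY2} has the exact same pipelined structure as Algorithm~\ref{algY4}, merely replacing the per-node summary $\nu^{(\mu)}_{P_i,v}$ with $\nu^{(\mu)}_{P_{ij},v}$. The only conceptual change is what each leaf value represents: at node $v$, the initialization in Step~\ref{algY2:init1} computes $\sigma^{(\mu)}_{P_{ij},v}$ as the number of paths in $P^v_{ij}$ (paths in $P_{ij}$ with leaf $v$) that contain at least one node $z$ with $X^{(\mu)}_z = 1$. This is a purely local computation since each $v$ already knows $P^v_{ij}$ from the preceding Step~\ref{algDet:Anc} of Algorithm~\ref{algDet}, and every node knows the enumeration of sample points.

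For correctness, I would argue by induction on the height $h(v)$ of $v$ in the BFS in-tree $T$ that when $v$ executes the send step in round $r = n - h(v) + \mu - 1$, the quantity $\nu^{(\mu)}_{P_{ij},v}$ it transmits equals $\sum_{w \in D(v)} \sigma^{(\mu)}_{P_{ij},w}$, where $D(v)$ is the set of descendants of $v$ in $T$ including $v$ itself. The base case is a leaf ($D(v)=\{v\}$), which holds directly from Step~\ref{algY2:init1}. For the inductive step, a child $w$ of $v$ in $T$ satisfies $h(w)=h(v)+1$, so $w$ sends in round $n - h(w) + \mu - 1 = n - h(v) + \mu - 2$, which is exactly the round in which $v$ receives in Steps~\ref{algY2:new-receiveStart}--\ref{algY2:receiveEnd}; hence $v$ adds the inductively-correct sum from each child to its own $\sigma^{(\mu)}_{P_{ij},v}$ before forwarding. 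The leader $l$ then sums across its children for each $\mu$ in Step~\ref{algY2:local}, obtaining $\nu^{(\mu)}_{P_{ij}} = \sum_{u \in V} \sigma^{(\mu)}_{P_{ij},u} = \sum_{p \in P_{ij}} \mathbf{1}[\exists z \in p : X^{(\mu)}_z = 1]$, i.e., the number of paths in $P_{ij}$ covered by $X^{(\mu)}$.

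For the round bound, each node $v$ sends at most one message per round (to its unique parent in $T$) along a single edge, so there is no congestion to resolve; the schedule is feasible. The first send occurs in round $\mu = 1$ at the deepest leaf (with $h(v) \le n-1$), and the last receive at $l$ (with $h(l) = 0$) happens in round $n + \mu - 2 \le 2n - 2$ for $\mu \le n$. Thus all $n$ sample-point aggregates reach $l$ within $O(n)$ rounds, establishing the lemma.

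There is no real obstacle here: the algorithm is a standard pipelined convergecast along a tree of depth at most $n-1$, with the $\mu$-th sample point shifted by $\mu - 1$ rounds to avoid contention, and both the correctness and timing arguments transfer directly from Lemma~\ref{lemma:Y4}. The only thing to double-check is that $v$ indeed knows $P^v_{ij}$ at the start of Algorithm~\ref{algY2} --- this is guaranteed by Step~\ref{algDet:Anc} of Algorithm~\ref{algDet} (using the {\sc Ancestors} procedure of~\cite{ARKP18}), after which each leaf can test the $(1+\epsilon)^{j-1}$ threshold locally to restrict attention from $P^v_i$ to $P^v_{ij}$.
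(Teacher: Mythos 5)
Your proof is correct and follows essentially the same pipelined-convergecast argument as the paper's own proof of this lemma (which in turn mirrors the proof for $\nu_{P_i}$); you simply make explicit, via induction on the depth $h(v)$, the step the paper dismisses with ``it is readily seen.'' The timing calculation (child at depth $h(v)+1$ sends in round $n-h(v)+\mu-2$, exactly when $v$ receives, with the last value reaching $l$ by round $2n-2$) matches the paper's.
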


\begin{proof}
In Step~\ref{algY2:init1}, every node $v$ correctly initialize their contribution to the overall $\nu_{P_{ij}}$ term for each $\mu$ locally.
Since the height of tree $T$ is at most $n-1$, it is readily seen that a node $v$ that is at depth $h(v)$ in $T$ will receive the $\nu_{P_{ij}}^{(\mu)}$ 
values from its children in round $n -h(v) +\mu -2$ (Steps~\ref{algY2:new-receiveStart}-\ref{algY2:receiveEnd})
and thus will have the correct $\nu_{P_{ij}}^{(\mu)}$  value to send in round $n -h(v) + \mu -1$ 
in Step~\ref{algY2:send}.
Since $\mu = O(n)$, Steps~\ref{algY2:send}-\ref{algY2:receiveEnd} runs for at most $2n$ rounds.
Step~\ref{algY2:local} is a local step and thus does not involve any communication.
This establishes the lemma.
\end{proof}

\begin{proof}[Proof of Lemma~\ref{lemma:deterministic}]
Every node $v$ correctly computes all the ancestor nodes in each tree $T_x$ in Step~\ref{algDet:Anc} using Algorithm 4 in~\cite{ARKP18} that takes $O(|S|\cdot h)$ rounds~\cite{ARKP18}.
Step~\ref{algDet:init} computes the incoming BFS tree rooted at leader node $l$ in $O(n)$ rounds.
Step~\ref{algDet:compute} takes $O(n)$ rounds by Lemmas~\ref{lemma:Y4} and \ref{lemma:Y2} .
Step~\ref{algDet:B} is a local step and involves no communication.
Step~\ref{algDet:broadcast} involves an all-to-all broadcast of at most $n$ messages and thus takes $O(n)$ rounds using Lemma~\ref{lem:all-to-all-bc}.
\end{proof}

\subsection{Helper Algorithms for Algorithm~\ref{algCase2}}

\subsubsection{Computing Bottleneck Nodes}	\label{sec:bottleneck}

Here we describe our deterministic algorithm for 
computing Step 1 of Algorithm~\ref{algCase2}, which
 identifies a set $B$ of bottleneck nodes such that removing this set of nodes reduces
the congestion in the network from $O(n\cdot |Q|)$ to $O(n\cdot \sqrt{|Q|})$.
However when randomization is allowed, there is a $O(n\cdot \sqrt{|Q|})$ randomized algorithm of Huang et al.~\cite{HNS17} that computes
this set w.h.p. in $n$.
Our deterministic algorithm is however very different from the randomized algorithm given in~\cite{HNS17} and it uses ideas from our
blocker set algorithm in~\cite{ARKP18}.

We now give an overview of the randomized algorithm of~\cite{HNS17} that computes this set of bottleneck nodes.
For a source $x$ and its incoming shortest path tree $T_x$, every node in $T_x$ calculates the number of outgoing messages for source 
$x$.
This is done by waiting for messages from all children nodes, followed by sending a message to its parent in $T_x$.
This takes $O(n)$ rounds and can be run across multiple  nodes in $Q$ as congestion is at most $O(|Q|)$.
Thus using the randomized algorithm of Ghaffari~\cite{Ghaffari15}, this algorithm can be run across all nodes in $Q$ concurrently in 
$\tilde{O}(n + |Q|) = \tilde{O}(n)$ rounds.
After computing these values, a node $b$ with maximum count is selected to the set $B$ and is then removed from the network.
The algorithm repeats this for $O(\sqrt{|Q|})$ times, thus eliminating all nodes that needed to send at least $n\sqrt{|Q|}$ 
messages (since removal of every such node eliminates $O(n\sqrt{|Q|})$ nodes across all trees and there are at most $n\cdot |Q|$
nodes).

Our deterministic algorithm for computing bottleneck nodes (Algorithm~\ref{algBottleneck}) works as follows:
In Step~\ref{algBottleneck:Step1}, the algorithm computes the $count_{v,c}$ values (number of messages $v$ needs to send to its parent
in $c$'s tree) using Algorithm~\ref{algCount} described in Section~\ref{sec:count}.
Every node $v$ calculates the total number of messages it needs to send by summing up the values computed in Step~\ref{algBottleneck:Step1} (Step~\ref{algBottleneck:local-compute}) and then broadcast this value in Step~\ref{algBottleneck:broadcast}.
The node with maximum value is added to the bottleneck node set $B$ (Step~\ref{algBottleneck:add-b}) and the values of its ancestors and descendants are updated 
using the algorithms in~\cite{AR19}.
In Lemma~\ref{lemma:algBottleneck} we establish that the whole algorithm runs in $O(n\sqrt{|Q|} + h\cdot |Q|)$ rounds deterministically.

\begin{algorithm}
\caption{{\sc Compute-Bottleneck:} Compute Bottleneck Nodes Set $B$}
\begin{algorithmic}[1]
\Statex Input: $Q$: blocker set; $\mathcal{C}^Q$: CSSSP collection for blocker set $Q$
\Statex Output: $B$: set of bottleneck nodes
\State \textbf{For each $c \in Q$ in sequence:} Compute $count_{v,c}$ values at every node $v \in V$ using Algorithm~\ref{algCount} (Section~\ref{sec:count}). 	\label{algBottleneck:Step1}
\State \textbf{Local Step at $v \in V$:} Compute $total\_count_v \leftarrow \sum_{c\in Q} count_{v,c}$	\label{algBottleneck:local-compute}
\While{{\bf there is a node $v$ with $\bm{total\_count_v > n\sqrt{|Q|}}$}}	\label{algBottleneck:start-While}
	\State \textbf{For each $v \in V$:} Broadcast $ID(v)$ and $total\_count_v$ value	.	\label{algBottleneck:broadcast}
	\State Add node $b$ to $B$ such that $b$ has maximum $total\_count_v$ value (break ties using IDs).	\label{algBottleneck:add-b}
	\State Update $total\_count_v$ values for the descendants and ancestors of $b$ across all trees in the collection $\mathcal{C}^Q$ using results in~\cite{ARKP18,AR19}.	\label{algBottleneck:update}
\EndWhile		\label{algBottleneck:end-while}
\end{algorithmic}  \label{algBottleneck}
\end{algorithm}

\begin{lemma}	\label{lemma:termination}
After {\sc Compute-Bottleneck} (Algorithm~\ref{algBottleneck}) terminates, $total\_count_v \leq n\sqrt{|Q|}$ for all nodes $v$.
\end{lemma}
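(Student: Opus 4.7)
The plan is to observe that Lemma~\ref{lemma:termination} follows essentially immediately from the structure of the while loop in Steps~\ref{algBottleneck:start-While}--\ref{algBottleneck:end-while} of Algorithm~\ref{algBottleneck}, modulo showing that the algorithm in fact terminates. The loop guard is exactly the negation of the claim: the loop continues as long as there exists some $v$ with $total\_count_v > n\sqrt{|Q|}$. Hence once the algorithm exits the loop, every node must satisfy $total\_count_v \leq n\sqrt{|Q|}$, which is what we want.

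The substantive content of the lemma is therefore really a termination argument, and my plan is to use a potential function. Let $\Phi := \sum_{v \in V} total\_count_v = \sum_{c \in Q} \sum_{v \in V} count_{v,c}$. Since each $count_{v,c}$ is the number of messages that $v$ must forward to its parent in $T_c$, it is bounded by the number of descendants of $v$ in $T_c$, and the inner sum $\sum_v count_{v,c}$ equals the total edge-traffic in $T_c$, which is at most $n\cdot h = n \cdot n^{2/3}$. Hence $\Phi \leq n \cdot h \cdot |Q|$ initially. Each iteration of the while loop picks a node $b$ with $total\_count_b > n\sqrt{|Q|}$ and in Step~\ref{algBottleneck:update} updates the counts so that $b$'s contribution (and the contribution of $b$'s subtree traffic at its ancestors) is removed from the corresponding trees in $\mathcal{C}^Q$; in particular $total\_count_b$ drops to $0$. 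Thus each iteration decreases $\Phi$ by strictly more than $n\sqrt{|Q|}$, and since $\Phi$ is a nonnegative integer, the loop must halt in finitely many iterations. At that point the loop guard is false, establishing the lemma.

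The main obstacle in writing this out carefully is just pinning down the precise semantics of the ``Update'' step in Step~\ref{algBottleneck:update}: I need to make sure that whatever the update rule is, it drops $total\_count_b$ by at least the threshold $n\sqrt{|Q|}$ used in the guard. Under the natural reading, where removing $b$ from the trees eliminates the messages $b$ was forwarding (consistent with how $B$ is used later in Algorithm~\ref{algCase2}, where Step~\ref{algCase2:remove-Subtrees} literally removes the subtrees rooted at bottleneck nodes from $\mathcal{C}^Q$), this drop is automatic. Beyond that, the proof is a one-line appeal to the loop exit condition, so the writeup should be very short.
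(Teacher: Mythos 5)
Your proposal is correct and its core is exactly the paper's argument: the lemma is immediate because the loop guard in Steps 3--7 of Algorithm 13 is precisely the negation of the claimed property. The termination argument you add is sound but is handled in the paper by a separate lemma (the bound $|B|\leq\sqrt{|Q|}$, proved by the same counting idea of removing at least $n\sqrt{|Q|}$ of the at most $n\cdot|Q|$ tree nodes per iteration), so nothing further is needed here.
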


\begin{proof}
This is immediate since the while loop in Steps~\ref{algBottleneck:start-While}-\ref{algBottleneck:end-while} terminates only when there is
no node $v$ with $total\_count_v > n\sqrt{|Q|}$.
\end{proof}

\begin{lemma}	\label{lemma:Bottleneck}
The set of bottleneck nodes, $B$, constructed by {\sc Compute-Bottleneck} (Algorithm~\ref{algBottleneck}) has size at most $\sqrt{|Q|}$.
\end{lemma}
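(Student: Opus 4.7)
The plan is to give a potential argument based on the total number of (node, tree) pairs across all the CSSSP in-trees in the collection $\mathcal{C}^Q$. Let $\Phi = \sum_{c \in Q} |T_c|$ denote this total, where $|T_c|$ is the number of nodes currently in $T_c$. Initially every tree has at most $n$ nodes, so $\Phi \le n \cdot |Q|$ at the start of {\sc Compute-Bottleneck}.

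The key observation I would use is the interpretation of $count_{v,c}$: since each node $x$ in the subtree rooted at $v$ in $T_c$ needs to route its message $\delta(x,c)$ up through $v$ to reach $c$, the quantity $count_{v,c}$ equals the size of the subtree rooted at $v$ in $T_c$. Consequently, $total\_count_v = \sum_{c \in Q} count_{v,c}$ counts exactly the number of pairs $(x,c)$ with $c \in Q$ and $x$ in the subtree of $v$ in $T_c$. In particular, $total\_count_b$ is precisely the number of node-tree pairs that disappear when, in Step~\ref{algBottleneck:update}, the subtree rooted at $b$ is removed from $T_c$ for every $c \in Q$ (using the update procedure from \cite{ARKP18,AR19}).

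Each iteration of the while loop (Steps~\ref{algBottleneck:start-While}-\ref{algBottleneck:end-while}) selects $b$ with $total\_count_b > n\sqrt{|Q|}$ (otherwise the loop terminates, by Lemma~\ref{lemma:termination}), adds $b$ to $B$, and removes $b$'s subtree from every $T_c$. By the observation above, $\Phi$ strictly decreases by more than $n\sqrt{|Q|}$ in each such iteration. Since $\Phi$ is always nonnegative and starts at most $n \cdot |Q|$, the while loop executes at most
\[
\frac{n \cdot |Q|}{n\sqrt{|Q|}} = \sqrt{|Q|}
\]
times, so $|B| \le \sqrt{|Q|}$, as claimed.

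I do not expect any real obstacle here; the only point worth checking carefully is the precise semantics of $count_{v,c}$ (that it counts the subtree, not just $v$'s own contribution), which follows from the way Algorithm~\ref{algCount} is used in Step~\ref{algBottleneck:Step1} and how messages are routed toward $c$ in $T_c$. Given that interpretation, the potential argument is essentially the same amortized counting used for the randomized version described in the paragraph preceding the algorithm, with the correctness of the update step in Step~\ref{algBottleneck:update} ensuring that after the removal the $total\_count_v$ values accurately reflect the reduced trees, so subsequent iterations continue to remove disjoint sets of node-tree pairs.
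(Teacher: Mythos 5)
Your argument is correct and is essentially the paper's own proof: each selected $b$ has $total\_count_b > n\sqrt{|Q|}$, so its removal deletes more than $n\sqrt{|Q|}$ node--tree pairs out of at most $n\cdot|Q|$ total, bounding the number of iterations by $\sqrt{|Q|}$. Your version merely makes the potential function $\Phi$ and the subtree-size semantics of $count_{v,c}$ explicit, which the paper leaves implicit.
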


\begin{proof}
Since every node $b$ added to set $B$ has $total\_count_b > n\sqrt{|Q|}$, removing such $b$ is going to remove at least $n\sqrt{|Q|}$
nodes across all trees in $Q$ in Step~\ref{algBottleneck:update}.
And since there are at most $n\cdot |Q|$ nodes across all trees, set $B$ has size at most $\sqrt{|Q|}$.
\end{proof}

\begin{lemma}	\label{lemma:algBottleneck}
{\sc Compute-Bottleneck} (Algorithm~\ref{algBottleneck}) runs for $O(n\sqrt{|Q|} + h\cdot |Q|)$ rounds.
\end{lemma}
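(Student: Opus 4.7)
My plan is to bound the three main phases of Algorithm~\ref{algBottleneck} separately and then add them. Step~\ref{algBottleneck:Step1} runs \textsc{Compute-Count} (Algorithm~\ref{algCount}) once for each $c\in Q$ in sequence, and each invocation is a convergecast on the height-$h$ tree $T_c$ that costs $O(h)$ rounds. Hence Step~\ref{algBottleneck:Step1} contributes $O(h\cdot |Q|)$ rounds. Step~\ref{algBottleneck:local-compute} is purely local and costs nothing.

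Next I would analyze a single iteration of the while loop (Steps~\ref{algBottleneck:broadcast}--\ref{algBottleneck:update}). The all-to-all broadcast of $ID(v)$ and $total\_count_v$ values in Step~\ref{algBottleneck:broadcast} costs $O(n)$ rounds by Lemma~\ref{lem:all-to-all-bc}, and selecting the maximizer $b$ in Step~\ref{algBottleneck:add-b} is local. For the update in Step~\ref{algBottleneck:update} I would have $b$ disseminate its vector $(count_{b,c})_{c\in Q}$ using the single-source $k$-broadcast primitive of Lemma~\ref{lem:k-bc} with $k=|Q|=O(n)$, taking $O(n+|Q|)=O(n)$ rounds. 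Since every node already knows its position (ancestor, descendant, or unrelated) relative to $b$ in each tree of $\mathcal{C}^Q$ by invoking the ancestor machinery of~\cite{ARKP18,AR19} (which was already used to compute $count_{v,c}$), every node can then locally subtract the appropriate amount from each $count_{v,c}$ and update $total\_count_v$ accordingly. Thus one iteration of the while loop runs in $O(n)$ rounds.

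Finally, by Lemma~\ref{lemma:Bottleneck} we have $|B|\leq \sqrt{|Q|}$, and the while loop adds exactly one node to $B$ per iteration, so it executes at most $\sqrt{|Q|}$ times. Combining the per-iteration bound $O(n)$ with this iteration count gives $O(n\sqrt{|Q|})$ for the while loop. Adding the $O(h\cdot |Q|)$ cost of Step~\ref{algBottleneck:Step1} yields the claimed overall bound of $O(n\sqrt{|Q|}+h\cdot |Q|)$.

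The main obstacle I expect is justifying that Step~\ref{algBottleneck:update} can be executed in $O(n)$ rounds per iteration rather than the $\Theta(h\cdot |Q|)$ that a naive per-tree ancestor/descendant update would incur; the key insight I would emphasize is that once $b$ broadcasts its entire $count$-vector globally (a single $k$-broadcast of $|Q|=O(n)$ values), no further communication is needed, because each node already holds the ancestor/descendant data for all trees in $\mathcal{C}^Q$ from the precomputation and can finish the update locally.
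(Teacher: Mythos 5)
Your proof is correct and follows essentially the same decomposition as the paper's: $O(h\cdot|Q|)$ for the sequential \textsc{Compute-Count} invocations, $O(n)$ per while-loop iteration (broadcast plus update), and at most $\sqrt{|Q|}$ iterations via Lemma~\ref{lemma:Bottleneck}. The only difference is that you spell out a concrete $O(n)$-round implementation of Step~\ref{algBottleneck:update} (a $k$-broadcast of $b$'s count vector followed by local updates), where the paper simply cites \cite{ARKP18,AR19} for that bound.
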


\begin{proof}
Step~\ref{algBottleneck:Step1} takes $O(h\cdot |Q|)$ rounds using Lemma~\ref{lemma:algCount}.
Step~\ref{algBottleneck:local-compute} is a local computation step and involves no communication.
Step~\ref{algBottleneck:broadcast} involves a broadcast of at most $n$ messages and hence takes $O(n)$ rounds using Lemma~\ref{lem:all-to-all-bc}.
Step~\ref{algBottleneck:add-b} again do not involve any communication.
Step~\ref{algBottleneck:update} takes $O(n)$ rounds~\cite{ARKP18,AR19}.
Since $B$ has size at most $\sqrt{|Q|}$ (by Lemma~\ref{lemma:Bottleneck}), the while loop runs for at most $\sqrt{|Q|}$ iterations, thus 
establishing the lemma.
\end{proof}

\subsubsection{Computing $count_{v,c}$ Values}	\label{sec:count}

\begin{algorithm}
\caption{{\sc Compute-Count:} Algorithm for computing $count_{v,c}$ values for source $c$ at node $v$}
\begin{algorithmic}[1]
\Statex Input: $h$: number of hops, $T_c$: tree for source $c$
\State \textbf{(Round $0$):} \textbf{if $v \in T_c$ then} set $count_{v,c} \leftarrow 1$ \textbf{else} $count_{v,c} \leftarrow 0$		\label{algCount:Step1}
\State \textbf{Round $h+1\geq r > 0$:}
\State \textbf{Send:} \textbf{if} $r = h -h_c (v) + 1$ \textbf{then} send $\langle count_{v,c} \rangle$ to $v$'s parent \vspace{.02in}	\label{algCount:send}
\State {\bf receive [lines~\ref{algCount:receive-Start}-\ref{algCount:receive-End}]:}  
\If{$r = h -h_c (v)$}	\label{algCount:receive-Start}
	\State let $\mathcal{I}$ be the set of  incoming message to $v$ 
	\For{$M \in \mathcal{I}$}
		\State let the sender be $w$ and let $M = \langle count_{w,c} \rangle$ and 
		\State {\bf if} $w$ is a child of $v$ in $T_c$ {\bf then} $count_{v,c} \leftarrow count_{v,c} + count_{w,c}$ 	\label{algCount:update}
	\EndFor
\EndIf	\label{algCount:receive-End}
\end{algorithmic}  \label{algCount}
\end{algorithm} 

Here we describe our algorithm for computing Step 1 of Algorithm~\ref{algBottleneck}, 
which computes $count_{v,c}$ values in a given $h$-CSSSP collection 
$\mathcal{C}$ for source set $S$.
Our algorithm (Algorithm~\ref{algCount}) is quite simple and works as follows:
Fix a source $c \in S$ and let $T_c$ be the tree corresponding to source $c$ in $\mathcal{C}$.
The goal is to compute the number of messages each node $v \in T_c$ needs to send to its parent.
In Step~\ref{algCount:Step1} every node $v \in T_c$ initializes its $count_{v,c}$ value to $1$.
Every node $v$ that is $h_c (v)$ hops away from $c$ receives the $count$ values from all its children by round 
$h - h_c (v)$ (Steps~\ref{algCount:receive-Start}-\ref{algCount:receive-End}) and it then send it to its parent in 
round $h - h_c (v) + 1$ (Step~\ref{algCount:send}) after updating it (Step~\ref{algCount:update}).

\begin{lemma}	\label{lemma:algCount}
{\sc Compute-Count} (Algorithm~\ref{algCount}) correctly computes $count_{v,c}$ for every $v \in T_c$ in $h+1$ rounds per source node $c$.
 \end{lemma}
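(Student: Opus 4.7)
The plan is to prove both correctness and the runtime bound by downward induction on $h_c(v)$ (equivalently, induction on the height of $v$ in $T_c$). I will establish the invariant that for every $v \in T_c$, at the end of round $h - h_c(v) + 1$ the variable $count_{v,c}$ equals the size of the subtree of $T_c$ rooted at $v$ --- which is precisely the number of distance-value messages that $v$ must forward to its parent for source $c$.

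For the base case I handle leaves of $T_c$, regardless of whether $h_c(v) = h$ or $h_c(v) < h$. Step~\ref{algCount:Step1} sets $count_{v,c} \leftarrow 1$ in Round $0$, which is the correct subtree size; since $v$ has no children in $T_c$, the receive condition triggers no updates, and $v$ simply forwards this value at its scheduled send round. For the inductive step, let $v$ be internal with children $w_1, \ldots, w_k$, each at hop distance $h_c(v)+1$. By the inductive hypothesis, each $w_i$ has the correct count by round $h - h_c(w_i) + 1 = h - h_c(v)$ and transmits it to $v$ in that round via Step~\ref{algCount:send}. Node $v$'s receive condition fires precisely in round $h - h_c(v)$ (Steps~\ref{algCount:receive-Start}-\ref{algCount:receive-End}), so $v$ aggregates all incoming child counts into its running total (initialized to $1$ in Step~\ref{algCount:Step1}), yielding the correct subtree size, and forwards it at round $h - h_c(v)+1$.

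The delicate point --- and the main obstacle to verify --- is the send/receive synchronization between each child and its parent. This follows by direct substitution: a child at hop distance $h_c(v)+1$ has send round $h - (h_c(v)+1) + 1 = h - h_c(v)$, which matches $v$'s receive round $h - h_c(v)$, so the update in Step~\ref{algCount:update} uses counts that are themselves already correct by the inductive hypothesis. For the runtime bound, the root $c$ satisfies $h_c(c) = 0$ and thus completes its receive step in round $h$; since $c$ has no parent, no further sends are required, and all activity fits within the stated $r \leq h+1$ loop bound of Algorithm~\ref{algCount}.
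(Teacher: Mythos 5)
Your proof is correct and follows essentially the same approach as the paper's: the paper also argues that leaves initialize correctly, that each internal node has received all children's counts by round $h - h_c(v)$ and sends its correct aggregate in round $h-h_c(v)+1$, and that $h_c(v)\ge 0$ bounds the total at $h+1$ rounds. You merely make the paper's implicit downward induction on $h_c(v)$ explicit and verify the send/receive round alignment by substitution, which is a presentational refinement rather than a different argument.
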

 
 \begin{proof}
 Every leaf node $v$ can initialize their $count_{v,c}$ values to $1$ in Step~\ref{algCount:Step1}.
 For every other internal node $v$, $v$ correctly computes $count_{v,c}$ value after receiving the $count$ values from all its children 
 by round $h -h_c(v)$ (Steps~\ref{algCount:receive-Start}-\ref{algCount:receive-End})
 and then send the correct $count_{v,c}$ value to its parent in round $h - h_c(v) +1$ in Step~\ref{algCount:send}.
 
 Since $h_c (v) \geq 0$, this algorithm requires at most $h+1$ rounds. 
  \end{proof}

\end{document}